\DeclareMathAlphabet{\mathbbold}{U}{bbold}{m}{n}
\newtheorem{theorem}{Theorem}[section]
\newtheorem{lemma}[theorem]{Lemma}
\newtheorem{corollary}[theorem]{Corollary}
\newtheorem{definition}{Definition}[section]
\newcommand{\val}{\operatorname{\mathit{val}}}
\newcommand{\F}{\mathbb{F}}
\newcommand{\N}{\mathbb{N}}
\newcommand{\Z}{\mathbb{Z}}
\newcommand{\eps}{\varepsilon}
\newcommand{\OPT}{\textup{\textsf{OPT}}}
\newcommand{\LB}{\gamma}
\newcommand{\OUT}{\mathsf{OUT}}
\newcommand{\RANK}{\mathsf{R}}
\newcommand{\Left}{\mathsf{left}}
\newcommand{\Right}{\mathsf{right}}
\newcommand{\COST}{\mathsf{COST}}
\newcommand{\END}{\mathsf{END}}
\newcommand{\Root}{\text{\textit{root}\/}}  %
\newcommand\E[1]{\mathbb{E}\left[\,#1\,\right]}
\renewcommand{\Pr}[1]{\mathbb{P}\left[\,#1\,\right]}
\newcommand{\setO}[1]{[#1]}
\renewcommand{\P}{\mathbb{P}}
\renewcommand{\L}{\mathbb{L}}
\newif\ifdraft
\title{\textbf{Sorting in One and Two Rounds using $t$-Comparators}}
\author{Ran Gelles\thanks{Bar-Ilan University, Israel. \texttt{ran.gelles@biu.ac.il}} \and Zvi Lotker\thanks{Bar-Ilan University, Israel. \texttt{zvi.lotker@biu.ac.il}} \and Frederik Mallmann-Trenn\thanks{King's College London, United Kingdom. \texttt{frederik.mallmann-trenn@kcl.ac.ukl}}}  
\date{}
\begin{document}

\maketitle

\begin{abstract}
We examine sorting algorithms for $n$ elements whose basic operation is comparing $t$ elements simultaneously (a $t$-comparator). 
We focus on algorithms that use only a single round or two rounds---comparisons performed in the second round depend on the outcomes of the first round comparators. 

We design deterministic and randomized algorithms. 
In the deterministic case, we show an interesting relation to design theory (namely, to 2-Steiner systems), which yields a single-round optimal algorithm for $n=t^{2^k}$ with any $k\ge 1$ and a variety of possible values of~$t$. 
For some values of~$t$, however, no algorithm can reach the optimal (information-theoretic) bound on the number of comparators.  
For this case (and any other $n$ and $t$), we show an algorithm that uses at most three times as many comparators as the theoretical bound.

We also design a randomized Las-Vegas two-rounds sorting algorithm for any $n$ and~$t$. 
Our algorithm uses an asymptotically optimal number of $O(\max(\frac{n^{3/2}}{t^2},\frac{n}{t}))$ comparators, with high probability, i.e., with probability at least $1-1/n$. 
The analysis of this algorithm involves the gradual unveiling of randomness, using a novel technique which we coin the \emph{binary tree of deferred randomness}.
\end{abstract}

\section{Introduction}
\emph{Sorting} has been a fundamental task for computers (and earlier electronic devices) 
since the inception of computer history~\cite{knuthVol3-Sorting,CLRS}. Many sorting algorithms are \emph{comparison-based}, meaning that there exists some device that compares pairs of elements and decides which of them is the larger. By comparing multiple pairs, one can obtain a full order of all elements. It is well known that if pairs are being compared, $\Theta(n\log n)$ comparisons are needed in order to fully sort any possible set of $n$~elements.   Such sorting, however, assumes one can apply comparisons in an adaptive manner, i.e., one can determine which pairs of elements to compare next based on results of previous comparisons. It is not too difficult to see that without this adaptive selection of elements, $\Omega(n^2)$ comparisons are needed (see also Lemma~\ref{lem:t2allPairs} below).

In contrast to general-purpose CPUs, which allow fast comparison of two elements, specialized hardware that can be found in system-on-a-chip systems and GPUs, allows comparing larger sets of elements.
Motivated by the above, 
in this work we explore sorting algorithms that use $t$-comparators. These blocks allow $t$ elements to be compared simultaneously to determine their total order, rather than comparing them in pairs. 
Our initial focus is on deterministic, non-adaptive sorting algorithms where all comparisons are pre-determined and independent of prior outcomes. Additionally, we consider randomized algorithms with a limited degree of adaptiveness. In particular, we design sorting algorithms with two rounds,  where the second round can use the comparison outcomes from the first round. In both cases, our goal is to minimize the number of  $t$-comparators used.

To further motivate the case of sorting with $t$-comparators ($t$-sorting) in a single round, consider the following scenario, which is very common in the Computer Science community. A conference program committee (PC) is set to decide on the ranking of the $n$ submitted papers. Let's assume that there is an ``absolute truth'', namely, that there exists a total ordering of the papers, and that each PC member outputs the ``true'' ordering of any number of papers assigned to them.\footnote{We realize that, in real life, no such absolute truth exists, and that PC members are heavily biased, etc. These extensions make very interesting direction for followup questions. We briefly discuss future directions in Section~\ref{sec:conclusion}.}
To balance out the load, the papers are split  so that each PC member receives $t$~papers. Note that the same paper can be sent to multiple PC members. 
Each PC member, individually, returns to the chair the total order of the set of papers assigned to them. The chair collects all these outputs and composes a total ordering of the  $n$ papers, that is consistent with all the partial sets. Assume we wish the chair's output to be the ``true'' ordering of the papers, how many PC members are needed, as a function of $n$ and~$t$? Note that the chair assigns the papers once, without having any information about papers, that is, this is a non-adaptive $t$-sorting with a single round.

\subsection{Deterministic Sorting}
Consider deterministic $t$-sorting algorithms with a single round. Similar to the case of $t=2$, that requires comparing all $\binom{n}{2}$ possible pairs, it can easily be shown that for any $t$, at least $\LB_{n,t}=\binom{n}{2}/\binom{t}{2}$ 
many $t$-comparators are needed in order to fully sort $n$ elements. This stems from the fact that in order to learn the total ordering of $n$ elements, we need to learn the relative-order of all $\binom{n}{2}$ pairs, while each $t$-comparator gives us information about at most $\binom{t}{2}$ different pairs of elements (Corollary~\ref{cor:triv-lowerBound}). 

Our first question is whether this bound is achievable, that is, whether there exists a single round $t$-sorting algorithm that utilizes exactly $\LB_{n,t}$ comparators. 
We first show a way to perform $t$-sorting with at most $3 \LB_{n,t}$ comparators (Lemma~\ref{lem:loose-upperbound}). The idea is rather simple: we divide the elements into disjoint subsets, where each subset contains $t/2$ different elements. Then, we go through all possible pairs of subsets, and for each such pair we compare the $t$ elements of their union using a separate $t$-comparator. This guarantees that any two elements are compared by at least one comparator, so a total-ordering of the $n$ elements can be deduced from the results of the $\binom{ \lceil n/(t/2)\rceil}{2}<3\LB_{n,t}$ different comparisons.

Our main result is an algorithm with an optimal level of $\LB_{n,t}$ $t$-comparators for the case where $t$ is a power of a prime and $n=t^{2^k}$, for any positive integer $k\in\N$. Namely, 
\begin{restatable}[main, deterministic]{theorem}{RESTATABLEmainDeterministic}
\label{thm:main}
Let $t$ be a power of a prime and let $n=t^{2^k}$, $k\in\N$. Then, there exists a  deterministic single-round, $t$-sorting algorithm  that utilizes exactly $\binom{n}{2}/\binom{t}{2}$ comparators.
\end{restatable}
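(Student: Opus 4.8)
The plan is to realize the optimal bound by exhibiting a collection of $t$-subsets (``blocks'') of the $n$ elements such that every pair of elements lies in exactly one block; this is precisely a $2$-Steiner system $S(2,t,n)$, and the number of blocks of such a system is exactly $\binom{n}{2}/\binom{t}{2} = \LB_{n,t}$. Once such a design exists, the single-round algorithm simply assigns one $t$-comparator to each block; since every pair of elements is covered, the union of the outputs determines the relative order of every pair, hence the total order, and the comparator count matches the lower bound of Corollary~\ref{cor:triv-lowerBound}. So the entire theorem reduces to constructing an $S(2,t,n)$ for $t$ a prime power and $n = t^{2^k}$.

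For the construction I would proceed by induction on $k$. The base case $k=1$, i.e. $n=t^2$, is the classical affine plane of order $t$: take the point set $\F_t^2$, and let the blocks be the $t^2+t$ affine lines; every two points lie on a unique line, and $\binom{t^2}{2}/\binom t2 = t^2+t$, as required. This uses that $t$ is a prime power so that $\F_t$ is a field. For the inductive step, suppose we have an $S(2,t,m)$ on a point set of size $m$ with $m$ itself a prime power (note $m=t^{2^{k}}$ is a prime power whenever $t$ is); I want to produce an $S(2,t,m^2)$. The natural move is a product/composition construction: identify the $m^2$ points with $\F_m^2$ (legitimate since $m$ is a prime power), take the affine plane of order $m$ on this point set, and then refine: each affine line of $\F_m^2$ carries $m$ points, and on those $m$ points we install a copy of the $S(2,t,m)$ design we already have. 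A pair of points in the same affine line gets covered exactly once inside that line's sub-design; a pair of points in different affine lines lies on no common affine line — so this does not yet cover such pairs, and the refinement alone is insufficient.

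Hence the real work, and the step I expect to be the main obstacle, is covering the ``transversal'' pairs — pairs of points not collinear in the outer affine plane. The standard fix is to use a resolvable structure: the affine plane of order $m$ is resolvable, its $m^2+m$ lines partitioning into $m+1$ parallel classes each of $m$ lines. Pick one parallel class as the ``point classes''; then within each of the remaining $m$ parallel classes, and for each line $\ell$ in it, $\ell$ meets every point-class in exactly one point, so $\ell$ is a transversal of the point-classes. The idea is to treat each of those $m$ point-classes as a ``super-point'' and each transversal line as inducing a structure on $m$ super-points, then recursively blow up using an $S(2,t,m)$ on the super-points and carefully splice the small designs so that every real pair — whether inside a class or across classes — is covered exactly once. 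Making this splicing exact (each pair exactly once, block size exactly $t$ throughout, and the final count equal to $\binom{m^2}{2}/\binom t2$) is the delicate bookkeeping; I would verify it by a counting argument, checking that the number of blocks produced is forced to be $\LB_{m^2,t}$ once one knows every pair is covered at most once and the total pair-coverage is $\binom{m^2}{2}$. The prime-power hypothesis on $t$ is used only to get the base-case affine plane (and to keep intermediate orders prime powers); the doubling-exponent form $n=t^{2^k}$ is exactly what the square-then-recurse induction produces.
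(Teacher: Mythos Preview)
Your ``natural move'' is already the complete proof, and is exactly the paper's composition argument (Lemma~\ref{lem:composition}); the subsequent paragraph about ``transversal pairs'' and resolvability stems from a mistaken analysis of that move, not from any actual gap.

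Concretely: in the affine plane $S(2,m,m^2)$, \emph{every} pair of points lies on a unique line. There is no such thing as ``a pair of points in different affine lines [that] lies on no common affine line''---that would contradict the defining property of the plane. So when you refine each of the $m^2+m$ lines with a copy of $S(2,t,m)$, a given pair $\{x,y\}$ lies on exactly one line~$\ell$, and within $\ell$'s sub-design it lies in exactly one $t$-block; hence globally it is covered exactly once. The block count is
\[
(m^2+m)\cdot\frac{\binom{m}{2}}{\binom{t}{2}}
=\frac{m(m+1)\cdot m(m-1)}{t(t-1)}
=\frac{\binom{m^2}{2}}{\binom{t}{2}},
\]
as required. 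This is precisely the paper's recursion: sort $m^2$ elements optimally with $m$-comparators via the affine plane, then implement each $m$-comparator with the inductive $S(2,t,m)$. Your later machinery (picking a distinguished parallel class, treating lines as transversals, splicing) is unnecessary and, as written, does not describe a well-defined construction.
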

In order to obtain the above optimal sorting, we show a connection between sorting and combinatorial design theory. 
Consider the case where  $t$ is a prime power and $k=1$, that is, $n=t^2$, a setting that  attained a lot of interest in the past, especially by hardware-implementation oriented designs~\cite{TK77,SSS86,SS86}.
We essentially show that sorting with $\LB_{n,t}$ comparators is equivalent to an \emph{Affine Plane} of order~$t$. An affine plane (see e.g., \cite{Hughes_Piper_1985,Pascoe18})
is a design structure composed of elements (``points'') and subset of elements (``lines'') that guarantees the following properties: (P1) every two points belong to a \emph{unique} line, (P2) every line contains at least two points, and (P3) not all points are co-linear. Further, it satisfies the Euclidean Property  (A1): for every line~$L$ and  any point~$p$ outside~$L$, there exists  a unique line that contains~$p$ and is parallel to~$L$. 
It is known that all lines in an affine plane  contain exactly  the same number of points; call this number \emph{the order of the plane}. It is also known that an Affine plane of order $t$ contains $t^2+t$ lines.

If we think about points as the element we wish to sort and about lines as subsets of $t$ points which we compare via a single comparator, finding an affine plane of order $t$ provides the property that any two elements are being compared \emph{exactly} a single time, i.e., by a single comparator, leading to the optimal bound of~$\LB_{n,t}$.

An affine plane of order $t$ is easy to construct for any $t$ that is a power of a prime. Let $\F$ be a finite field with $t$ elements, and consider pairs of elements $(x,y)$, i.e., the plane $\F^2$. 
In this plane, any two points $(x_1,y_1)$ and $(x_2,y_2)$, define a unique line that passes through them, namely $y=\frac{y_1-y_2}{x_1-x_2}x+\frac{y_2x_1-y_1x_2}{x_1-x_2}$ if $x_1\ne x_2$ and the line $\{ (x_1,y) \mid y\in \F\}$, otherwise. 
It is easy to verify this structure satisfies all the properties of an affine plane (see \cite[Section~3.2]{Pascoe18}).

Affine planes are a special case of a more general combinatorial structures known as Steiner systems (Definition~\ref{def:steiner}). Indeed, if we change assumption (A1) so that there exists no parallel lines at all (also known as the Elliptic Property), but still require that any two points define a unique line, we would still get a sorting algorithm in which any two elements are being compared against each other exactly one. 
In this case, the resulting structure is again a special case of a Steiner system known as a \emph{Projective Plane}. 
Known constructions of projective planes imply that for any $t-1$ being a power of a prime, one can sort $t^2-t+1$ elements using exactly $t^2-t+1$ many $t$-comparators, where every pair of elements is being compared exactly once. 
These two constructions are summarized as  Theorem~\ref{thm:minimal-sorting-params}.

We lift the above result from optimally sorting $t^2$ elements to optimally sorting $t^{2^k}$, 
by developing 
a composition theorem (Lemma~\ref{lem:composition}) that recursively performs  sorting of $t^{2^k}$ elements by utilizing an optimal number of  $t^{2^{k-1}}$-comparators, for any~$k>1$.

\subsection{Randomized Sorting}
\label{sec:intro-rand}
Similar to the deterministic case, if one does not bound the number of adaptive rounds the algorithm is allowed to make, 
optimal sorting can be achieved. 
For instance, Beigel and Gill~\cite{beigel1990sorting}
showed a generalized $t$-\emph{quicksort} algorithm that sorts $n$ elements by utilizing at most $4\frac{n\log n}{t\log t}$ many $t$-comparators, which is optimal, maybe up to the constant (see Theorem~\ref{thm:OPT-rand}). However, this algorithm requires $O(\log_t n)$ adaptive rounds. Indeed, recall that quicksort works in rounds, where at each round the algorithm selects (one or more) pivot elements. These elements are used to ``bucket'' the rest of the elements into disjoint subsets, meaning that all  elements greater than one pivot and less than the next pivot belong to the same bucket. Then, each such bucket is recursively sorted by the same method.
Since each round depends on the pivots and buckets of the previous rounds, $O(\log_t n)$ recursive rounds are needed~\cite{beigel1990sorting}.

Our second question in this work is how to obtain optimal randomized $t$-sorting algorithms with restricted number of rounds.
Since we already analyzed the  case of a single round and reached optimal results, in the second part of this work we address the case of \emph{two} rounds. Our goal is to minimize the number of $t$-comparators used to sort $n$ elements in a Las-Vegas algorithm, where the output is correct with probability~1 but the number  \emph{of comparators} is a random variable that varies between different instances. 

Our main result for this part is as follows.
\begin{restatable}[main, randomized]{theorem}{RESTATABLEmainRandomized}
\label{thm:main-rand-informal}
Let $t<n$ be given. 
There exists a (Las-Vegas) randomized sorting algorithm for $n$ elements with two rounds, that utilizes $O\left(\max\left(\frac{n^{3/2}}{t^2},\frac{n}{t}\right)\right)$ many $t$-comparators, with probability at least $1-1/n$.
\end{restatable}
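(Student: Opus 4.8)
The plan is to use a two-round bucketing strategy in the spirit of quicksort, but flattened into exactly two rounds by choosing a single large batch of pivots in Round~1. Concretely, in Round~1 I would sample a random set $S$ of roughly $s := \sqrt{n}$ pivot elements (or a suitable constant multiple thereof), sort $S$ completely using the single-round algorithm from Lemma~\ref{lem:loose-upperbound} applied to the instance of size $|S|$; this costs $O(|S|^2/t^2) = O(n/t^2)$ comparators (and never fewer than the trivial $\Theta(|S|/t)$ needed just to touch all pivots, giving the $n/t$ term when $t$ is large). Sorting $S$ determines $|S|+1$ buckets $B_0,\dots,B_{|S|}$ delimited by consecutive pivots. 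Then in Round~2 I would, for each non-pivot element, it is determined which bucket it falls into, and each bucket $B_i$ is sorted from scratch using Lemma~\ref{lem:loose-upperbound} on an instance of size $|B_i|$, costing $O(|B_i|^2/t^2 + |B_i|/t)$ comparators. Since the buckets partition the $n-|S|$ remaining elements, the total Round~2 cost is $O\big(\sum_i |B_i|^2/t^2 + \sum_i |B_i|/t\big) = O\big(\sum_i |B_i|^2/t^2 + n/t\big)$, and concatenating the sorted buckets with $S$ yields the full sorted order, so correctness is immediate and the algorithm is Las~Vegas.

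The crux is the probabilistic bound on $\sum_i |B_i|^2$. With $|S| = \Theta(\sqrt n)$ uniformly random pivots, the expected bucket size is $\Theta(\sqrt n)$, and standard arguments (a random pivot set of size $\Theta(\sqrt{n}\log n)$, or a Chernoff/union-bound over the $n$ possible ``spans'' of $\log n$ consecutive ranks) give that with probability at least $1-1/n$ every bucket has size $O(\sqrt n \log n)$ — or, with a slightly larger sample, $O(\sqrt n)$. Then $\sum_i |B_i|^2 \le (\max_i |B_i|) \cdot \sum_i |B_i| = O(\sqrt n \cdot n) = O(n^{3/2})$ (up to a $\log$ factor absorbed by taking the pivot sample a $\log n$ factor larger and redoing the count), so Round~2 costs $O(n^{3/2}/t^2 + n/t)$, matching the target. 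One subtlety: determining which bucket each element lies in must itself be done with $t$-comparators within Round~2; this is a ``search'' of each element against the sorted pivot list, which can be folded into the same comparators that sort the bucket (each Round-2 comparator compares $t$ elements among one bucket-plus-pivots group), or handled by a separate $O(n/t \cdot \log_t |S|)$-comparator search — but since this is a non-adaptive second round, the assignment of elements to comparator groups in Round~2 can depend on Round~1 outcomes, so one must argue the search resolves in a single round. The cleanest route is: in Round~2, for each element $x$, include $x$ together with a batch of $t-1$ pivots chosen (based on Round~1) so that across $O(\log_t |S|)$ comparator-groups per element the bucket of $x$ is pinned down; but since the pivot order is already known, a single comparator group of $x$ with a geometric ``net'' of pivots suffices, or one simply over-provisions and sorts $x$ against all pivots in its vicinity. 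I would present the simplest variant that makes the round structure unambiguous.

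The main obstacle I anticipate is precisely the interplay between the two-round restriction and bucket identification: unlike classical randomized quicksort, here Round~2 is a single non-adaptive batch, so all Round-2 comparator groups — both those that locate elements among pivots and those that sort within buckets — must be specifiable from Round-1 data alone, and crucially the sizes $|B_i|$ are random, so the comparator budget is a random variable whose tail must be controlled. I would handle this by first conditioning on the high-probability event $\mathcal{E} = \{\max_i |B_i| \le c\sqrt n \log n\}$ (proved via a union bound over $n$ windows of consecutive ranks, using that a window of $w$ ranks containing no pivot has probability $(1-w/n)^{|S|} \approx e^{-w|S|/n}$), and on $\mathcal{E}$ deterministically bounding the comparator count by $O(n^{3/2}\log^2 n/t^2 + n/t)$; absorbing the polylog by inflating $|S|$ to $\Theta(\sqrt n \log n)$ and re-running the same estimate, or by the more careful second-moment computation $\mathbb{E}[\sum_i |B_i|^2] = O(n^{3/2})$ combined with a concentration argument, yields the clean $O(\max(n^{3/2}/t^2, n/t))$ bound with probability $\ge 1-1/n$. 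A matching lower bound (asymptotic optimality, as claimed in the abstract) would follow from an information-theoretic counting argument against two-round strategies, which I would state separately.
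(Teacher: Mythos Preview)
There is a genuine gap in the round structure. Your Round~1 only sorts the pivot set~$S$, so after Round~1 you know the order of the pivots but nothing about where the other $n-|S|$ elements fall. You then attempt, in Round~2, to both (a) locate each non-pivot among the pivots and (b) sort the resulting buckets. But (b) requires knowing bucket membership, which is the \emph{output} of~(a); since all Round-2 comparator groups must be specified simultaneously from Round-1 data alone, you cannot feed the answers of~(a) into the groups of~(b). Your suggested workarounds do not close this: a binary search of $x$ against the pivots needs $\Theta(\log_t|S|)$ adaptive steps; a single comparator containing $x$ and $t-1$ pivots only narrows $x$ to a window of roughly $|S|/t$ pivots, insufficient whenever $t<|S|=\sqrt n$; and ``folding'' identification into the bucket-sort comparators presupposes you already know which bucket to place $x$ in. The paper's algorithm pushes bucket identification into Round~1: it partitions $A\setminus P$ into groups $A_1,\dots,A_k$ of size $m=\sqrt n$ and sorts each $P\cup A_i$ via Lemma~\ref{lem:loose-upperbound}. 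This costs $O\big((n/m)\cdot m^2/t^2\big)=O(nm/t^2)=O(n^{3/2}/t^2)$ in Round~1 --- not the $O(n/t^2)$ you budgeted --- and after Round~1 every element's bucket is known, so Round~2 can legitimately sort each bucket.

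A second, subtler gap is the high-probability bound on $\sum_i|B_i|^2$. With $|S|=\Theta(\sqrt n)$ pivots your max-bucket argument gives $\max_i|B_i|=O(\sqrt n\log n)$ w.h.p., hence $\sum_i|B_i|^2\le(\max_i|B_i|)\cdot n=O(n^{3/2}\log n)$, a $\log n$ factor off. Inflating to $|S|=\Theta(\sqrt n\log n)$ pivots recovers $\sum_i|B_i|^2=O(n^{3/2})$, but then the Round-1 bucketing cost becomes $O(n\,|S|/t^2)=O(n^{3/2}\log n/t^2)$, so the logarithm reappears. Your final suggestion, ``$\mathbb{E}\big[\sum_i|B_i|^2\big]=O(n^{3/2})$ plus a concentration argument,'' is exactly the crux of the paper's analysis: the bucket sizes are dependent, and the bounded-difference constant of $\sum_i|B_i|^2$ under changing a single pivot can be as large as $\Theta(n^2)$ (one pivot move can merge two huge buckets), so McDiarmid-type bounds are vacuous. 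The paper devotes Section~\ref{sec:random:analysis} to a new technique --- the \emph{binary tree of deferred randomness}, which reveals the pivots' positions one bit at a time down a binary tree over the rank space and applies Hoeffding level by level --- to obtain $\sum_i|B_i|^2=O(n^{3/2})$ with probability at least $1-1/n$, with no logarithmic loss.
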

We note that for the case where $n=t^2$, our algorithm uses $O(t)$ comparators which is asymptotically optimal  since $\frac{n\log n}{t\log t}=\Theta(t)$. 
We further note that a result by Alon and Azar~\cite{AA87} implies that the expected number of comparators used in our algorithm, when $t\le\sqrt{n}$, is also tight.

The high-level idea of the two-round algorithm is to perform a single round of ``quicksort'' and then to optimally (deterministically) sort each resulting bucket, rather than recursively sorting it.
In more details, let $m$ be some fixed parameter. 
Our algorithm starts by sampling $m$ elements that will serve as pivots. We bucket all the elements by dividing the rest $n-m$ elements into subsets of size $m$ elements each, and comparing each such subset, along with the $m$ pivots by utilizing at most $3\LB_{2m,t}$ many $t$-comparators (per subset). This step tells us, for each one of the $n-m$ elements, between which two pivots it resides. %

A pseudo code of our 2-round randomized algorithm is given below as Algorithm~\ref{alg:rand-general-intro} for the case $t\leq m$. The case $t > m$ is very similar and is covered in Section~\ref{sec:random}.
\begin{algorithm}[ht]
\caption{A randomized 2-round sorting for any $n,t$ with $t \leq m$}
\label{alg:rand-general-intro}
\begin{algorithmic}[1]
\Statex
\Statex \textit{\textbf{Round 1:}}
\State Let $P$ be a set of $m$
elements from $A$, each sampled uniformly and independently from~$A$.%
\State Partition $A \setminus P$ into subsets $A_1,\ldots, A_{k}$ of size at most $m$ each. \Comment{$k=\lceil (n-m)/m\rceil$}
\ForAll {$i\in[k]$}
    \State Sort $P\cup A_i$ using the optimal 1-round deterministic algorithm.
\EndFor

\Statex
\Statex \textit{\textbf{Round 2:}}
\State 
Let $P=(p_1, \ldots, p_{m})$ be the ordered elements in~$P$. For $1\le i \le m-1$, set $S_i$ to contain all the elements which are greater than $p_i$ but lower than $p_{i+1}$. Set $S_0$ to be all the elements lower than $p_1$ and $S_{m}$ be all the elements greater than $p_{m}$.
\ForAll {$0 \le i \le m $}
    \State Sort $S_i$ using the optimal 1-round deterministic algorithm.
\EndFor
\end{algorithmic}
\end{algorithm}

In expectation, each bucket is of size $\approx n/m$ and sorting a bucket of this size takes  $3\LB_{n/m,t}$ many $t$-comparators. 
If all buckets had size exactly~$n/m$, this would lead immediately to the desired result of 
$3\frac{n}{m}\LB_{2m,t}+ 3(m+1)\LB_{n/m,t}= O(\frac{nm}{t^2}+\frac{n^2}{mt^2})$. 
This quantity is minimal when $m\approx \sqrt{n}$ (ignoring constants), leading to the claimed~$O(\frac{n^{3/2}}{t^2})$.\footnote{The term $O(n/t)$ in Theorem~\ref{thm:main-rand-informal} stems from the other case,  where $t>m$, i.e., $t>\sqrt{n}$.}
Unfortunately, buckets' sizes vary, and some of them might be much larger, say, of size $(n/m) \log(n/m)$. However, our analysis shows that this event is very rare and the additional number of comparators needed to handle these cases is rather small. More specifically, in our analysis, we formulate a balls-into-bins process to distribute elements into buckets, and bound the number of such bad events using the balls-into-bins process. Let us now expend on the techniques used in this analysis.

\subsubsection{Techniques: The binary tree of deferred randomness}
\label{sec:highlevel}
Let us start by describing the balls-into-bins process we use. 
Consider the $n$ elements, and rename them $a_1,\ldots, a_n$ 
so that they are sorted. Starting with $a_1$, we group together
sequences of~$c n/m$ consecutive elements, for some sufficiently large constant~$c$.
We call each such group \emph{a bin}; namely,
the first bin is $b_1=\{a_1,\ldots,a_{c n/m}\}$ the second bin is
$b_2 =\{ a_{c n/m +1}, \ldots, a_{2c n/m}\}$ and so on, resulting in a total of $m/c$~bins overall.
The balls will be the $m$ elements we pick as pivots. 
That is, let $P=\{p_1, p_2, \dots, p_m\}$ be the elements selected as pivots. 
Since each pivot is sampled uniformly at random, the selection of some $p_i$ is
equivalent to throwing a ball to bin~$b_j$ where $p_i\in b_j$.\footnote{We note that this balls-into-bin process differs slightly from our pivot selection process in the sense that it samples pivots with replacement, while the original process samples without replacement.
However, one could modify the original process by allowing the same element to be sampled multiple times, and later ignore these extra copies. 
It is immediate that sampling without replacement can only create smaller bins and thus improve the overall complexity.}

If each bin has a ball, than each ``bucket'' has at most $2cn/m$ elements, and the cost, measured in the number of comparators needed to sort that bucket, is as desired. 
However, the absence of a ball in a bin implies larger buckets. 
That is, the size of the bucket, and hence the cost of sorting it, 
is determined by the stretch of bins without balls (up to two additional bins, one from each side).
In other words, in order to bound the cost of the second round, 
we throw $|P|=m$ balls uniformly at random into $m/c$ bins 
and count the length of consecutive \emph{empty} bins.
Recall that $m=\sqrt{n}$; we will  substitute this value to avoid cumbersome equations in the following.

A straightforward balls-into-bins analysis shows that there are $c$ pivots per bin in expectation and that the probability of not having a pivot in $c'$ consecutive  bins scales as~$e^{-\Omega(c')}$. 
Ideally we would like to use the above probability and obtain a polynomially-small failure probability by considering all the bins at the same time.
Unfortunately, this approach breaks due to the correlation between empty bins. 
Indeed, the fact that some bins are empty indicates that the balls went somewhere else, altering the probability of having empty bins elsewhere.
The bins' loads are negatively correlated. This means that concentration bounds could potentially be used for negatively correlated variables. 
However, there are many obstacles to this approach.
First, note that while the loads of the bins are negatively correlated, we actually need to bound different variables, namely, the lengths of consecutive sequences of empty bin. 
Second, defining these variables and analyzing their probability function, as well as proving that they are negatively correlated, seems to be a difficult task.
Finally,  note that even the number of these random variables, is itself a random variable.

\begin{tcolorbox}[boxrule=1pt,leftrule=3mm]
One might think that the number of empty bins can be bounded by  analyzing the following process:
\textit{Form $m+1$ bins and throw balls  uniformly into them to determine the load of each bin (i.e., the size of each resulting bucket). Use the fact that the loads are negatively correlated to reach concentration.}
Unfortunately, this process does \emph{not} correspond to our algorithm.
To see this, consider the simple case of $n=10$ elements and a single pivot, yielding two buckets. 
When choosing a uniform pivot, the probability that all elements end up in the same bucket is 2/10, which happens either when the maximum element or the minimum element is the pivot. 
However, if we uniformly throw $9$~balls into two bins, they will all end up in the same bin with a probability of $2\cdot(1/2)^{9} = 1/256$. 
\end{tcolorbox}

Instead, we introduce the concept of a \emph{binary tree of deferred randomness}
which is instrumental in allowing us to do a simpler analysis of the concentration of empty bins while averting the obstacles caused by their dependencies. 
We think of the assignment of a pivot (a ball) to a bin 
as the bit-string describing the bin where the pivot ends, that we reveal bit-by-bit.
We define a binary tree, where each one of the $\sqrt{n}/c$ bins is a leaf. 
Thus, the tree has a depth of $\log(\sqrt{n}/c)$ (assuming $\sqrt{n}/c$ is a power of two).
We define the following iterative process of assigning balls to the leaves of the tree:
Initially we have $\sqrt{n}$ balls at the root. 
At every step, at every node~$u$, we randomly assign each ball to one of $u$'s children. This is equivalent to revealing the next bit in the string representing the bin to which the pivot belongs to.
The advantage of this approach lies in the careful revelation of the randomness. 
At every level, we can derive concentration
bounds without affecting the following levels---the only thing that matters at a given node is how many balls arrive at it.

Consider the binary tree of deferred randomness after all balls are assigned and follow an arbitrary path from the root to a leaf~$v$.
There are two cases.
In the ideal case, at every node along the path to~$v$, the number of balls going left and right is close to the expected value, namely, close to half.
If this happens, then enough balls propagate along this path and with high probability at least one of them will reach the leaf~$v$. This is the good scenario, since if this holds for many bins, the cost of sorting their elements will be very close to the expected cost.

The second case is when the concentration fails at some node~$u$ on the path, and the assignments of the balls is not close to half.
If this happens first at node $u$, we say that the bad event $\mathcal{E}_u$ occurred, stop the process there (i.e., ignore other nodes in $u$'s subtree), 
and charge a cost as if only a single ball reaches the bins under the node~$u$.
In other words, if there are 
$\ell$ balls at node~$u$, 
we assume that 
all the $\ell$ pivot selections ended up picking the same element.
By doing so, we overestimate the size of the resulted bucket to contain all the elements in all the bins below~$u$. 
Specifically, we charge this event with the cost of sorting $O(\ell \cdot c\sqrt{n}) > |\text{bins}(u)| \cdot c\sqrt{n}$ elements; here we use the fact that, as long as the bad event $\mathcal{E}_u$ does not happen, the number of balls reaching~$u$ always exceeds the number of bins in the subtree of~$u$, $|\text{bins}(u)|$. %

\begin{figure}
    \centering


\begin{tikzpicture}
  [scale=1,  
   GLabel/.style={lower node part/.style={green!55!black}},
   RLabel/.style={red!85!black,scale=0.8},
   vEND/.style={black!15,text=black,draw=black},
   every node/.style={fill=white,draw=black,circle,inner sep=2pt,minimum size=1.1cm},
   every lower node part/.style={font=\scriptsize,font=\bfseries},
   level 1/.style={level distance=14mm,sibling distance=48mm},
   level 2/.style={level distance=14mm,sibling distance=24mm,label distance=-2.5mm},
   level 3/.style={level distance=14mm,sibling distance=12mm,label distance=0mm,nodes={circle split}},
   level 4/.style={level distance=14mm,sibling distance=6mm,minimum height=10mm,nodes={minimum size=6mm,rectangle,draw=black}}]

  \node [circle split] {$u_1$ \nodepart{lower} \small 1600}
     child {node  [circle split] {$u_2$ \nodepart{lower} 790}
       child {node [circle split] {$u_4$ \nodepart{lower} 310}   
         child {node  {$u_{8}$ \nodepart{lower} 127}  
         	child {node [vEND] {$b_1$}}
		child {node [vEND] {$b_2$}}
         }
         child {node  {$u_9$\nodepart{lower} 183}	
         	child {node [vEND] {$b_3$}}
		child {node [vEND] {$b_4$}}
         }
       }
       child {node (u5) [circle split,vEND] {$u_5$ \nodepart{lower} 480}
         child {[densely dotted] node [circle split]  {$u_{10}$ \nodepart{lower} 470}	
                 child {node {$b_5$}}
		          child {node {$b_6$}}
         }
         child {[densely dotted] node [circle split]  {$u_{11}$  \nodepart{lower} 10}
         	  child {node {$b_7$}}
		      child {node (b8) {$b_8$}}         	
         }
       }
     }
     child {node [circle split] {$u_3$ \nodepart{lower} 810}
       child {node [circle split] {$u_6$ \nodepart{lower} 422}  
         child {[label distance=-2.5mm] node (u12) [vEND] {$u_{12}$ \nodepart{lower} 145}
         	  child {[densely dotted] node  (b9) {$b_9$}}
		      child {[densely dotted] node  {$b_{10}$}}         	
         }
         child {node {$u_{13}$ \nodepart{lower} 277}
         	  child {node [vEND] {$b_{11}$}}
		      child {node  [vEND]  {$b_{12}$}}     
	     }
       }
       child {node  [circle split] {$u_7$ \nodepart{lower} 388}  
       		child {node {$u_{14}$ \nodepart{lower} 194}
         		child {node [vEND] {$b_{13}$}}
			child {node [vEND] {$b_{14}$}}   		
		}
		child {node {$u_{15}$ \nodepart{lower} 194}
		        	child {node [vEND] {$b_{15}$}}
			child {node [vEND] {$b_{16}$}}   
		}
       }
     };
     
     \path (u5.north east) +(1mm,1mm) coordinate (e5);
     \node  at (e5)
     [circle,fill=red!60!black,text=white,inner sep=0pt,minimum size=0pt,scale=0.8]  {$\mathcal{E}_{u_5}$};
     \path (u12.north west)+(-0.5mm,1.5mm) coordinate (e12);
          \node  at (e12)
     [circle,fill=red!60!black,text=white,inner sep=-1pt,minimum size=0pt,scale=0.8]  {$\mathcal{E}_{u_{12}}$};
     

  \path (b8) + (0,-0.5cm) coordinate (BBox);
 
\end{tikzpicture}
    \caption{The figure shows the distribution of pivots (balls) on the \emph{tree of deferred randomness}, marked as the numbers in each node. 
Here we have $\sqrt{n}=1600$ pivots and 16 bins ($c=100$). 
In the first two levels, the distribution is about even. 
The node $u_{11}$ receives too few balls and so the event $\mathcal{E}_{u_5}$ holds. 
Similarly, $b_9$ gets too few balls (bin's balls are not shown in the figure), causing $\mathcal{E}_{u_{12}}$ to happen. 
The nodes in gray portray the set $\END$ described in Section~\ref{sec:random:analysis}, which includes the nodes where our deferred randomness revelation process stops. The cost of the scheme scales with the number of balls reaching nodes in this set.
}
    \label{fig:example}
\end{figure}
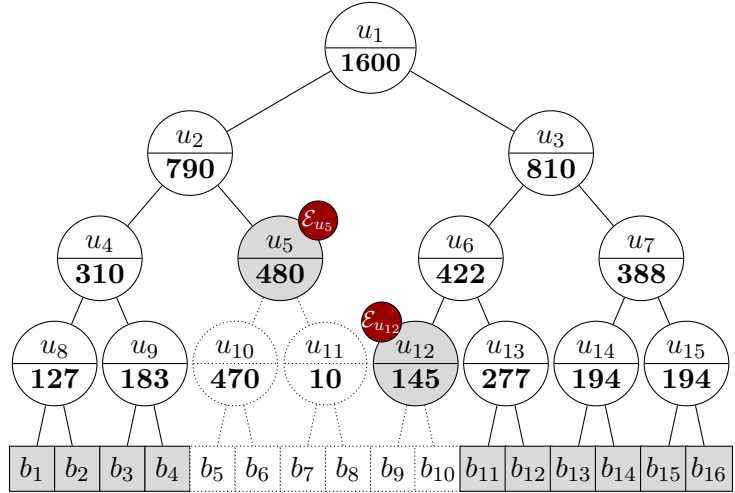

Figure~\ref{fig:example} illustrates the infiltration of balls through the tree: 
a node $u$ at level $i$ is associated to the $2^{\log(c \sqrt{n})-i}$ bins below it.
The number inside a node denotes how many balls are assigned to that node.
When $u$ assigns the balls to its children, each ball picks one of the children uniformly at random, so each of the children is assigned half of $u$'s balls, in expectation. 
The process continues until we reach the leaves at level $\log(c \sqrt{n})$. 
In the rare event that balls are distributed in a very skewed manner, the bad event $\mathcal{E}_u$~happens.
For instance, while $u_5$ has $480$~balls, they split very unevenly among its children, causing the bad event~$\mathcal{E}_{u_5}$. The process stops there, i.e., we do not care how the balls continue in the subtree of~$u_5$ and in particular,  $\mathcal{E}_u$ never happens in any of $u$'s descendants.  
Since $\mathcal{E}_{u_5}$ happens and the process stops there, the analysis charges an \emph{amortized} cost which is proportional to sorting a bucket of size of $4$ bins (due to the $4$ bins $b_5, b_6, b_7, b_8$---for all we know, all the balls could end up in~$b_8$, creating a single bucket that consists all the respective elements). 
In fact, we upper bound this cost by the number of balls that arrive to~$u_5$, whose expectation in this example is  $4c \gg 4$. 
The situation might get even worse, since $\mathcal{E}_{12}$ occurs as well. This effectively means that a single bucket might consist of all the element in bins $b_4$--$b_{10}$. The dependency between neighboring nodes with bad events complicates the cost analysis. However, by summing up the costs of all these events, we can derive the amortized cost per such bad event and simplify the analysis by considering a single event at a time.

Luckily, the higher up in the tree a node is, the more balls the node holds and the less likely the concentration bound will fail.
The lower in the tree the node is, the lower the cost is.
In particular, once we approach the lower levels of the tree, the bad event $\mathcal{E}_u$ occurs with constant probability. This does not pose any trouble, because the cost in this case is only a constant factor larger than the expected cost of the case where each 
bin has at least one ball in it.
Overall, we show that for every level of the tree, the cost imposed in our process is very close to its expectation, with high probability (at least $1-1/n^2$).
Taking a union bound over all the (at most $n$) levels of the our tree of deferred randomness yields the desired claim. We give the full details in Section~\ref{sec:rand:general}.

\subsection{Related Work}
A fundamental task like sorting naturally attracted a lot of attention in numerous variants and settings. To put our result in the right context, in this section we mention just a few of these variants and we mainly focus on \emph{comparison-based} sorting algorithms. We refer the reader to  surveys~\cite{10.1145/356593.356594,10.1145/146370.146381,singh2018survey} and books~\cite{knuthVol3-Sorting,CLRS,akl1985} for a more complete treatment on the background of (general) sorting.

The task of sorting in small number of rounds was initiated by the work of 
H\"{a}ggkvist and Hell~\cite{HH81}, who considered the case of sorting $n$ elements in a single round by comparing pairs of elements (i.e., $t=2$). While they do not give any explicit  sorting algorithm, they bound the number of $2$-comparators required for sorting in $d$-rounds by $\Omega(n^{1+1/d})$ from below and by $O(n^{\alpha_d\log n})$ from above, for a constant $\alpha_d$ that monotonously decreases towards~$3/2$ as $d$ grows. 
Specifically, for $d=2$, they prove that the optimal number of comparisons lies within the range $(C_1 n^{3/2} , C_2n^{5/3}\log n)$ for some constants $C_1,C_2$.
Alon, Azar, and Vishkin~\cite{AAV86} improved the lower bound to~$\Omega(n^{1+1/d}(\log n)^{1/d})$. Alon and Azar~\cite{AA87,AA88} lower-bounded the \emph{average} number of comparisons by~$\Omega(dn^{1+1/d})$, for any $d$-round algorithm with $d\le \log n$.
They also improved the upper bound to $O(n^{1+1/d}\log n)$ for a fixed~$d$, and to $dn^{1+O(1)/d}$ for any $d\le \log n$.
Bollob{\'a}s and Rosenfeld considered a relaxed sorting task, where the  relative  order of $\eps n^2$ pairs might still be unknown at the end. They showed that by performing $C_{\eps} n^{3/2}$ comparisons, one can learn the order of $\binom{n}{2}-\eps n^2$ pairs, where $\eps\to0$ as $C_{\eps}\to\infty$.
In contrast to the above existential bounds for $2$-comparator based algorithms, our work provides  \emph{explicit} sorting algorithms. 
Our algorithms are efficient, they utilize $t$-comparators (allowing large values of~$t$) and are asymptotically optimal, with respect to the above bounds. 

Other related tasks were also considered in the literature. Alon and Azar~\cite{AA88} gave bounds on the  number of comparisons required for approximate sorting and for selecting the median. 
Braverman, Mao, and Weinberg~\cite{BMW16} considered the task of selecting the $k$-rank item, in a single  round (and multiple rounds), and of partitioning an unordered array into the $k$-top and $(n-k)$-bottom elements, in a single round. Their algorithms also work in the noisy-comparison setting, where each comparison is correct with probability~$2/3$.
Braverman, Mao, and Peres~\cite{BMP19} extended the above results and gave an algorithm sorting the $k$-top elements in small number of rounds ($d=1,2$ and $d\ge3$). They also give lower and upper bounds for this task, both in the noiseless and noisy-comparison setting.

A related approach for sorting is via \emph{sorting networks}~\cite{Batcher68,AKS83,leighton85} and in particular, sorting networks of $t$-comparators, a task that was raised by  Knuth~\cite[Question 54 in Section 5.3]{knuthVol3-Sorting} and examined in~\cite{AKS83,PP89,Chiang01,DKP23}. 
These are fixed networks of comparators with $n$ inputs (each element is an input) and $n$ outputs (the sorted elements). %
One main difference between our $d$-round sorting and a sorting network is that in the latter,
each element appears exactly once as an input. Then, any comparator that gets this element as an input must appear in a different ``round''. 
However, in a sorting algorithm, it is possible to give the same element to multiple comparators at the same round, and then form the total order out of the outcomes of all comparators (for instance, by the approach in Section~\ref{sec:FromCompsToSorting}).

Distributed sorting has appeared in the literature before, but it had a different meaning than the distributed sorting we consider here.
Wegner~\cite{WEGNER84} and 
Rotem, Santoro, and Sidney~\cite{RSS85} considered the task of moving records around in a distributed network, so that they end up in a sorted manner (i.e., records that end up at the first site have keys which are strictly smaller than the records in the second site). These works mainly focused on the number of exchanged messages.
We also briefly mention parallel VLSI sorting algorithms, e.g., \cite{TK77,SS86,SSS86,KK92,OZ96}. Here the common setting is of $n\times n$ parallel processors, usually connected as a two-dimensional grid. Each processors holds one element at any given time and can transfer the element to a neighboring processors. The goal is that the elements will end up in an ordered alignment, i.e., the minimal element at the first processors, etc. This setting is somewhat similar to our case of $n=t^2$, if we think of a row or a column of processors as a single unit that can re-order the elements in that row or column according to their rank.
Another sorting variant was considered by 
Patt-Shamir and Teplitsky~\cite{PT11} (building on~\cite{LW11}). 
Here, each computer starts with $n$ records and needs to output their rankings in the global order of all $n^2$ records. Also unlike our task, each computer can sort any number of records that it holds (i.e., it is not limited to being a $t$-comparator).

As mentioned above, randomized \emph{quicksort} with $t$-comparators was given by Beigel and Gill~\cite{beigel1990sorting}. This algorithm features an optimal number of comparators, albeit it employs a large number of rounds, $d=O(\log_t n)$. \emph{Mergesort} with $t$-comparators is given in~\cite{SZW14}, and \emph{cubesort} with $t$-comparators is presented in~\cite{CS92}.

\subsection{Organization}
We formally state the problem of sorting with $t$-comparators, setting the relevant notations in Section~\ref{sec:prelim}. We discuss one-round deterministic sorting in Section~\ref{sec:detrministic}. Our optimal 2-round randomized algorithm and its analysis using the  binary tree of deferred randomness, can be found in Section~\ref{sec:random}. In Appendix~\ref{app:simulations} we provide some simulations comparing our 2-round randomized algorithm with the state-of-the-art $O(\log_t n)$-round $t$-quicksort algorithm, showing that the latter has in fact an expected number of rounds strictly larger than~4 when~$n=t^2$.
In Section~\ref{sec:FromCompsToSorting} we describe how to apply our algorithms in distributed settings, where each comparator is an independent device in the network.
Conclusions and some future directions are presented in Section~\ref{sec:conclusion}.

\section{Preliminaries}
\label{sec:prelim}
\textbf{Notations.}
For a positive integer $n$, we let $\setO{n}$ denote the set $\{1,2,\ldots, n\}$.
All logarithms are taken to base 2 unless otherwise noted. We say that an event happens \emph{with high probability} in some parameter (usually, in the number of elements~$n$), if the event occurs with probability at least $1-1/n^c$ for some positive constant~$c\ge1$.

\bigskip\noindent
\textbf{Problem Statement.}
The elements are $A=\{a_1, a_2, \ldots, a_n\}$. Each element has a  value $\val(a_i)\in [n]$. We assume that all values are unique, so that for any $i\ne j$, $\val(i)\ne\val(j)$, and all values in~$[n]$ are covered. 

A $t$-comparator is a device that gets $t$ elements $\{a_{i_1}, \ldots, a_{i_t}\}$ as an input, and outputs the respective order of their values. 
That is, it outputs a list $j_1, \ldots, j_t$ of indices, such that these  are a permutation of ${i_1}, \ldots, {i_t}$ and it holds that $val(a_{j_1}) \le \val(a_{j_2}) \le \cdots \le \val(a_{j_t})$. Note that it is allowed to give as an input the same element multiple times (hence the inequality in the $\val()$ values).

A \emph{round of sorting} is any assignment of elements to (possibly multiple) comparators. The output of a single round of sorting is defined to be the  output of all the comparators in that round, i.e., the relative order between any $t$ elements compared by some comparator.
\begin{definition}
\label{def:sorting}
\emph{Sorting $n$ elements in $d$ rounds via $t$-comparators} is performing $d$ rounds of sorting, where the assignment of round $i \le d$ depends on  the outputs of rounds $1,\ldots, i-1$. 
The assignment of elements to comparators is such that, for any possible assignment of values to the elements, there exists a \emph{single} total ordering of the $n$ elements that is consistent with all the outputs of the $d$~rounds.
\end{definition}

We will usually care about the number of $t$-comparators required to sort $n$ elements.
Let us denote $\OPT(n,t,d)$ the minimal number of $t$-comparators required to sort $n$~elements in $d$-rounds.
In this paper we will focus on small values of~$d$.
In particular, 
in Section~\ref{sec:detrministic} we analyze the case of deterministic sorting in $d=1$ rounds. In Section~\ref{sec:random} we discuss randomized sorting with $d=2$ rounds.

\section{Sorting \texorpdfstring{$n$}{n} elements in a single deterministic round}
\label{sec:detrministic}
In this section we analyze sorting $n$ elements with $t$-comparators in a single round. That is, we seek ways to assign elements to comparators that yield enough information to obtain a total-ordering of the elements. Since we restrict ourselves to a single round, we cannot adaptively select elements to compare based on previous result. Instead, all the assignments must be predetermined.

We begin with a few straightforward observations and facts.
The following lemma is probably a well known  folklore: if we are allowed to compare only pairs of elements ($t=2$) and the comparisons are non-adaptive ($d=1$), then \emph{all} pairs of elements must be compared in order to obtain the total-ordering of the $n$ element. 
\begin{lemma}
\label{lem:t2allPairs}
For $t=2$,   
sorting $n$ elements with $2$-comparators in $d=1$ rounds requires learning the relative order of each of the $\binom{n}{2} = \Theta(n^2)$ pairs of elements. Thus, $\OPT(n,t=2,d=1)=\binom{n}{2}$.
\end{lemma}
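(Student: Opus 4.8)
The plan is to prove the two halves of the claim separately: first that in any valid single-round assignment \emph{every} one of the $\binom{n}{2}$ pairs must appear together inside some comparator (this gives the lower bound), and then that comparing each pair with its own $2$-comparator is a valid assignment (this gives the matching upper bound). Since a $2$-comparator learns the relative order of exactly one pair, the lower bound translates directly into $\OPT(n,2,1)\ge\binom{n}{2}$, and the upper bound gives equality.

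For the lower bound I would argue by contradiction. Suppose some valid single-round assignment leaves a pair $a_i,a_j$ such that no comparator receives both $a_i$ and $a_j$. I would then build two value assignments that produce identical outputs on every comparator but induce different total orders, contradicting Definition~\ref{def:sorting}. Concretely, pick any assignment of values in which $\val(a_i)$ and $\val(a_j)$ are \emph{consecutive} ranks, say $r$ and $r+1$, and let the second assignment be obtained by swapping these two values (leaving all other elements untouched). The key step is to check that every comparator outputs the same thing under both assignments. A comparator on two elements either (a) contains neither $a_i$ nor $a_j$, in which case nothing changed, or (b) contains exactly one of them together with some third element $a_k$ — it cannot contain both, by assumption. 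In case (b), since $\val(a_k)\notin\{r,r+1\}$ (values are distinct) and $r,r+1$ are consecutive, $a_k$ compares the same way against $a_i$ (and against $a_j$) before and after the swap: either $\val(a_k)<r$, so $a_k$ is below both, or $\val(a_k)>r+1$, so $a_k$ is above both. Hence all comparator outputs are invariant, yet the two assignments disagree on the order of $a_i$ versus $a_j$, so no single total order is consistent with all outputs — contradiction. Therefore every pair is compared, and since distinct pairs need distinct comparators (a $2$-comparator covers only one pair), at least $\binom{n}{2}$ comparators are used.

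For the upper bound, I would simply exhibit the assignment that uses one $2$-comparator per pair $\{a_i,a_j\}$: the union of all these outputs determines the relative order of every pair, hence a unique total order, so this is a valid single round of sorting with exactly $\binom{n}{2}$ comparators. Combining the two bounds yields $\OPT(n,2,1)=\binom{n}{2}=\Theta(n^2)$. I do not anticipate a serious obstacle here; the only point requiring care is the indistinguishability argument, and in particular the choice of a \emph{rank-adjacent} pair to swap, which is exactly what guarantees that no comparator containing only one of the two elements can detect the change. This is the step I would write out most carefully, and it is also the observation that will be reused (in the stronger form of Corollary~\ref{cor:triv-lowerBound}) for general~$t$.
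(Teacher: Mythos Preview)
Your proposal is correct and follows essentially the same indistinguishability-by-swapping argument as the paper. The only cosmetic difference is that the paper places $a_i,a_j$ at the two \emph{smallest} ranks (so every other element is above both), whereas you place them at an arbitrary pair of \emph{consecutive} ranks $r,r{+}1$; both choices make the swap invisible to any comparator missing one of the two, and both yield the same contradiction with Definition~\ref{def:sorting}.
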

\begin{proof}
Otherwise, there are two elements $a_i,a_j$ that are not compared against each other. Let the two minimal elements (in the ranking) be $a_i,a_j$, respectively. Switching their relative order (i.e., letting the minimal elements be $a_j,a_i$, respectively) will not change the outputs of any of the comparators. Hence, there are two total ordering consistent with all the outputs, contradicting the fact that this is a sorting of $n$ elements, Definition~\ref{def:sorting}.
\end{proof}
\begin{corollary}\label{cor:triv-lowerBound}
$\OPT(n,t,1) \ge \binom{n}{2}/\binom{t}{2}$.
\end{corollary}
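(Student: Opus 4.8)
The plan is to reduce to the $t=2$ case by the same adversarial swapping argument used in Lemma~\ref{lem:t2allPairs}, and then bound the number of pairs a single $t$-comparator can ``witness''. First I would argue that in any valid single-round sorting scheme, every pair $\{a_i,a_j\}$ must appear together in the input of at least one comparator: if some pair $\{a_i,a_j\}$ is never fed jointly to any comparator, then assigning $a_i$ and $a_j$ the two smallest values and swapping their relative order leaves the input multiset of every comparator unchanged, hence leaves every comparator's output unchanged, producing two distinct total orders consistent with the same transcript---contradicting Definition~\ref{def:sorting}. This is exactly the argument of Lemma~\ref{lem:t2allPairs}, now applied at the level of ``appears jointly in some comparator'' rather than ``is compared''.

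Next I would count. Fix the set of comparators used, say there are $C$ of them, and for the $\ell$-th comparator let $S_\ell$ be the set of \emph{distinct} elements appearing in its input; since a comparator receives $t$ elements (possibly with repetitions), $|S_\ell|\le t$. The pairs witnessed by comparator $\ell$ are exactly the pairs inside $S_\ell$, so their number is $\binom{|S_\ell|}{2}\le\binom{t}{2}$. By the previous paragraph, $\bigcup_{\ell=1}^{C}\binom{S_\ell}{2}$ must cover all $\binom{[n]}{2}$ pairs, so
\[
\binom{n}{2}\;\le\;\sum_{\ell=1}^{C}\binom{|S_\ell|}{2}\;\le\;C\binom{t}{2},
\]
which rearranges to $C\ge\binom{n}{2}/\binom{t}{2}$. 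Since this holds for every valid single-round scheme, $\OPT(n,t,1)\ge\binom{n}{2}/\binom{t}{2}$.

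There is essentially no hard step here; the only point requiring a word of care is the ``repeated inputs'' clause of the problem statement, which is handled by working with the distinct-element set $S_\ell$ rather than the raw input list---feeding an element twice only shrinks $S_\ell$ and hence the pair count, so it cannot help. I would state the covering claim as a short displayed chain of inequalities as above and conclude immediately.
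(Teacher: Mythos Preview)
Your proposal is correct and follows essentially the same approach as the paper: extend the swapping argument of Lemma~\ref{lem:t2allPairs} to show every pair must be co-input to some comparator, then divide $\binom{n}{2}$ by the at-most-$\binom{t}{2}$ pairs each comparator can witness. Your treatment is slightly more careful than the paper's (explicitly handling repeated inputs via the distinct-element sets $S_\ell$), but the argument is the same.
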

\begin{proof}
The proof of Lemma~\ref{lem:t2allPairs} extends to larger comparators. If two elements are not being compared by some comparator, let them be of minimal value and exchange their relative order to end up with two consistent total ordering. Thus, $\OPT(n,t,1)$ must provide enough comparators to compare all pairs.

Each $t$-comparator gives the ranking of~$t$ elements among themselves. That is, it allows us to learn the (pair-wise) order between at most $\binom{t}{2}$ pairs of elements.
The statement immediately follows. 
\end{proof}
Note that 
\begin{align}
\label{eqn:LB}
    \frac{\binom{n}{2}}{\binom{t}{2}} = \frac{n(n-1)}{t(t-1)} \ge \frac{n^2}{t^2}-\frac{n}{t^2},
\end{align}
We can show that sorting with at most twice the amount of optimal comparators of Eq.~\eqref{eqn:LB} can be achieved for certain values of~$n,t$; sorting with at most three times the optimal is always possible. 
\begin{lemma}\label{lem:loose-upperbound}
When $(t/2) \mid n$,
$\OPT(n,t,1) < 2 {\binom{n}{2}}/{\binom{t}{2}}$.
Otherwise, $\OPT(n,t,1) < 3 {\binom{n}{2}}/{\binom{t}{2}}$.
\end{lemma}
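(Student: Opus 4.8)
The plan is to formalize the simple partition-and-pair idea sketched in the paragraph just before the statement. First I would handle the clean case $(t/2)\mid n$. Set $s = t/2$ and partition the $n$ elements into $n/s$ disjoint blocks $B_1,\dots,B_{n/s}$, each of size exactly $s$. For every unordered pair $\{i,j\}$ of block-indices, use one $t$-comparator on the $2s = t$ elements of $B_i \cup B_j$. Since every element lies in exactly one block, any two elements $a,b$ lie in blocks $B_i, B_j$ with $i\neq j$ (if $i = j$ they are in the same block; but then I need to make sure that block is also compared — see below), so the comparator on $B_i\cup B_j$ reveals their relative order; by Corollary~\ref{cor:triv-lowerBound}'s argument this suffices to reconstruct the total order. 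The number of comparators is $\binom{n/s}{2} = \frac12\cdot\frac ns\left(\frac ns - 1\right)$. Now I would plug in $s = t/2$ and compare against $2\binom n2/\binom t2 = 2\cdot\frac{n(n-1)}{t(t-1)}$: the claimed bound becomes $\frac{n}{t}\left(\frac{2n}{t}-1\right) < \frac{2n(n-1)}{t(t-1)}$, which after clearing denominators reduces to an inequality that holds for all $t\ge 2$, $n\ge t$. (One caveat: pairs within a single block are not directly compared by the scheme as stated; the standard fix is to note that each block $B_i$ of size $s = t/2 < t$ can simply be included in any one comparator $B_i\cup B_j$ and that comparator already sorts $B_i$ internally, so no extra comparators are needed — I should state this cleanly.)

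For the general case where $(t/2)\nmid n$, I would take $s = t/2$ as before but now use $\lceil n/s\rceil$ blocks, where the last block may be smaller (or, equivalently, pad with $\lceil n/s\rceil s - n < s$ dummy elements of sufficiently extreme value that get discarded). The number of comparators is $\binom{\lceil n/s\rceil}{2}$. Writing $\lceil n/s\rceil \le n/s + 1 = 2n/t + 1$, I get $\binom{\lceil n/s\rceil}{2} \le \frac12\left(\frac{2n}{t}+1\right)\left(\frac{2n}{t}\right) = \frac{n}{t}\left(\frac{2n}{t}+1\right)$, and I would verify $\frac nt\left(\frac{2n}{t}+1\right) < 3\cdot\frac{n(n-1)}{t(t-1)}$ by clearing denominators. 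This last inequality is the one place a little care is needed: it should hold whenever $n \ge t \ge 2$ (possibly with a trivial edge case $n = t$ to check separately, where the left side is $n\cdot 3 = 3n$ and the right side is $3\binom n2/\binom n2 \cdot$ — actually $3$, so I'd want $n>t$ strictly, or observe that $n=t$ needs only one comparator anyway and the bound is vacuous/trivially adjusted).

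The main obstacle — such as it is — is purely bookkeeping: making the arithmetic comparisons between $\binom{\lceil n/s\rceil}{2}$ and the constant multiple of $\binom n2/\binom t2$ come out with the stated constants $2$ and $3$ and strict inequality across the full parameter range, including small cases like $t = 2$ or $n$ barely exceeding $t$. I expect the cleanest route is to bound $\lceil n/s\rceil \le 2n/t + 1$, expand both sides, and reduce to a polynomial inequality in $n$ and $t$; the factor-of-$2$ versus factor-of-$3$ gap is exactly what absorbs the rounding term $+1$. Correctness of the sorting itself (every pair of elements compared, hence a unique consistent total order) is immediate from the argument already used in the proof of Corollary~\ref{cor:triv-lowerBound}, so I would just cite that rather than repeat it.
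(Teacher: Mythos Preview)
Your proposal is correct and follows essentially the same approach as the paper: partition into $\lceil 2n/t\rceil$ blocks of size at most $t/2$, use one $t$-comparator per pair of blocks, and then verify the arithmetic against $2\binom{n}{2}/\binom{t}{2}$ (respectively $3\binom{n}{2}/\binom{t}{2}$). The paper handles the factor-$3$ case by writing the ratio as $\frac{t-1}{t}\cdot\frac{2n+t}{n-1}$ and maximizing over $t\le n-1$, whereas your plan to clear denominators reduces to the equivalent (and slightly cleaner) condition $n>t$; either route works, and your observation that intra-block pairs are already covered by any comparator $B_i\cup B_j$ is a point the paper leaves implicit.
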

\begin{proof}
    Assume $(t/2) \mid n$. Split the $n$ elements into $2n/t$ subsets of size $t/2$ each, $S_1,\ldots,S_{2n/t}$. Now, for any $i,j\in [2n/t]$ compare the elements in $S_i \cup S_j$ using a $t$-comparator. 
    It is immediate that any two elements will be compared in this process. The total number of comparators used is 
    \begin{align*}
        \binom{2n/t}{2} &= \frac12 \cdot \frac{2n}{t}\left(\frac{2n}{t}-1\right)
         =  \frac{2n^2}{t^2}-\frac{n}{t}.
    \end{align*}
    The above is clearly larger than twice Eq.~\eqref{eqn:LB}, by noting that 
    $n/t \ge 2n/t^2$ holds for $t\ge2$.

However, when $t/2$ does not divide $n$, we need one additional subset $S_{2n/t+1}$ for the leftovers. This results with a total of $2n^2/t^2+n/t$ comparators. When $t<\frac{-1+\sqrt{1+8n}}{2}$,  this is still within a factor~2 of ${\binom{n}{2}}/{\binom{t}{2}}$. Otherwise, it is easy to see that we are within a factor $3$  of the lower bound. Let us bound the ratio
\begin{align*}
\frac{\frac{2n^2}{t^2}+\frac{n}{t}}{\frac{n(n-1)}{t(t-1)}} = \frac{t-1}{t}\cdot\frac{2n+t}{n-1}.
\end{align*}
The right hand side monotonically increases in~$t$, and obtains its maximal value at~$t=n-1$. This yields
\begin{align*}
 \frac{n-2}{n-1}\cdot\frac{3n-1}{n-1}.
\end{align*}
This function monotonically increases in~$n$ (as can  easily be seen from its derivation) and has a limit of~$3$ as $n\to\infty$. 
\end{proof}

\subsection{The case of a large \texorpdfstring{$t$}{t}}
Let us now give optimal sorting assignments with $d=1$ for the case of a large comparator, $t=\Omega(n)$.
To demonstrate the basic idea, assume $t=n-1$. We argue that three comparators are sufficient in this case, which makes the bound in Lemma~\ref{lem:loose-upperbound} tight for~$n\ge 9$.
First, we compare
$\{a_1,\ldots, a_{n-1}\}$ which gives a total-ordering for all elements but the last element,~$a_n$, so we need to compare $a_n$ with all the other elements. This  can be done with by employing two additional comparators,
e.g., comparing $\{a_n, a_2\ldots, a_{n-1}\}$ and $\{a_n, a_1, \ldots, a_1\}$. Note that the second comparator is substantially under-utilized. This means that we could still perform sorting with only three comparators even for smaller values of~$t$.

\begin{lemma}
\label{lem:t-twothirds-n}
For any $t \ge \tfrac23 n$, sorting $n$ elements in a single round can be done with three comparators.
\end{lemma}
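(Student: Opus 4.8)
The plan is to generalize the three-comparator trick used in the $t=n-1$ example: split the $n$ elements into three (nearly) equal blocks and devote one $t$-comparator to the union of each pair of blocks. Concretely, I would partition $A$ into pairwise-disjoint sets $X_1,X_2,X_3$ whose sizes are as balanced as possible, so each has size $\lfloor n/3\rfloor$ or $\lceil n/3\rceil$. The three comparators then act on $X_1\cup X_2$, $X_1\cup X_3$, and $X_2\cup X_3$; if such a union has fewer than $t$ distinct elements, I pad the input up to exactly $t$ slots by repeating arbitrary elements, which the model explicitly permits.

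\textbf{Correctness of the sorting.} This part is routine. For any two distinct elements $a,b$: if they lie in the same block $X_i$, then they appear together in $X_i\cup X_j$ for either choice of the remaining index $j$; if they lie in different blocks $X_i,X_j$, they appear together in $X_i\cup X_j$. Hence every pair of elements is compared by at least one of the three comparators, so the combined outputs reveal the relative order of every pair, which uniquely determines the total order (as in Lemma~\ref{lem:t2allPairs} / Corollary~\ref{cor:triv-lowerBound}). Thus Definition~\ref{def:sorting} is satisfied.

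\textbf{Feasibility of the comparators — the crux.} The only substantive point is that each comparator uses at most $t$ elements, i.e. $|X_i\cup X_j|\le t$ for all pairs $\{i,j\}$. The largest such union equals $n$ minus the size of the smallest block, so it suffices that the smallest block has size at least $n-t$. Because the partition is balanced, the smallest block has size $\lfloor n/3\rfloor$, and since $t$ is an integer with $t\ge \tfrac23 n$ we have $n-t\le n/3$, hence $n-t\le\lfloor n/3\rfloor$. Equivalently, one checks directly that the two largest of the three balanced blocks always sum to exactly $\lceil 2n/3\rceil\le t$. This inequality is exactly where the constant $2/3$ is used, and it is the step I expect to require the most care (the rounding in the case $3\nmid n$); everything else is bookkeeping. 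I would also remark that the bound is tight: when $3\mid n$ and $t=2n/3$, the two larger blocks fill a comparator exactly.

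\textbf{Write-up order.} I would present it as: (i) define the balanced partition $X_1,X_2,X_3$; (ii) verify $|X_i\cup X_j|\le t$ via $\lfloor n/3\rfloor\ge n-t$; (iii) instantiate the three comparators (with padding); (iv) observe every pair of elements is co-located in some $X_i\cup X_j$ and conclude the outputs pin down the total order, so three comparators suffice.
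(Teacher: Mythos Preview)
Your proposal is correct and follows essentially the same approach as the paper: partition the elements into three blocks and feed each pairwise union to one $t$-comparator. The only difference is cosmetic---the paper uses blocks of sizes $\lceil t/2\rceil$, $\lfloor t/2\rfloor$, and $n-t$ (so the first comparator is filled exactly), whereas you balance the blocks at sizes $\lfloor n/3\rfloor$ or $\lceil n/3\rceil$; both choices satisfy the size constraint under $t\ge\tfrac23 n$, and your rounding argument is clean.
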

\begin{proof}
The inputs to the three comparators are 
(1) $\{a_1, \ldots, a_t\}$, (2)
$\{a_n, a_{n-1}, \ldots, a_{t+1} \textbf{\ ,\ } a_1, a_2, \ldots, a_{\lceil t/2\rceil }\}$ and (3)
$\{a_n, a_{n-1}, \ldots, a_{t+1}\textbf{\ ,\ } a_{\lceil t/2\rceil+1}, \ldots a_t\}$. Note that any two elements $a_i,a_i$ are being compared by some comparator, yielding all the information we need to obtain a single consistent total order of the elements.

Since $t\ge \tfrac23n$, the second and third comparators get each 
$(n - (t+1)+1) + \lceil \tfrac{t}2 \rceil 
\le \lfloor \tfrac32t \rfloor-t+\lceil \tfrac{t}2\rceil = t
$ elements as input. Note that the ceiling/flooring matters only when $t$ is odd. In this case $\tfrac32t$ is fractional and since $n$ must be an integer, we have $n \le \lfloor \tfrac32 t\rfloor$.
\end{proof}

The above three comparators construction is tight, as it is impossible to sort $n$ elements with only two comparators. The proof resembles the approach taken by Lemma~\ref{lem:t2allPairs} for the case of $t=2$.
\begin{lemma}
\label{lem:no2comps-possible}
For any $t<n$, 
sorting in one round cannot be achieved with two comparators.
\end{lemma}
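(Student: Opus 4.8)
The plan is to argue by a counting/adversary argument that two $t$-comparators cannot compare all $\binom{n}{2}$ pairs when $t<n$, and then invoke the (already established) principle from Lemma~\ref{lem:t2allPairs} and Corollary~\ref{cor:triv-lowerBound}: if some pair $a_i,a_j$ is never jointly input to a comparator, one can make them the two smallest elements and swap their order without changing any comparator output, producing two distinct consistent total orders and contradicting Definition~\ref{def:sorting}. So the whole task reduces to showing that two subsets $T_1,T_2\subseteq A$ with $|T_1|\le t$, $|T_2|\le t$ and $t<n$ cannot satisfy: every pair of elements of $A$ lies inside $T_1$ or inside $T_2$.

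First I would observe that an element not in $T_1\cup T_2$ is compared with nobody, so we may assume $T_1\cup T_2=A$; hence $|T_1|+|T_2|\ge n + |T_1\cap T_2|$, and in particular at least one of the sets, say $T_1$, has size $\le t < n$, so there is an element $a_x\notin T_1$. For $a_x$ to be compared against everyone, every pair $\{a_x,a_y\}$ with $y\ne x$ must be covered, and since it is not covered by $T_1$ it must be covered by $T_2$; thus $a_x\in T_2$ and in fact $A\subseteq T_2$, forcing $|T_2|\ge n>t$, a contradiction. (Equivalently: whichever comparator misses $a_x$ is useless for $a_x$, so the other comparator alone must contain all of $A$.)

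The only remaining subtlety is the degenerate case where both $|T_1|<n$ and $|T_2|<n$ but one might hope the two "small" sets jointly cover all pairs — the argument above already rules this out, because the element $a_x\notin T_1$ forces $T_2=A$. I would also remark that the same reasoning shows the pair one fails to cover can be taken adversarially among the globally smallest elements, so Lemma~\ref{lem:t2allPairs}'s swapping trick applies verbatim. The main obstacle, such as it is, is purely bookkeeping: making sure the reduction "an uncovered pair $\Rightarrow$ two consistent orders" is cited correctly (it is exactly the argument in the proof of Corollary~\ref{cor:triv-lowerBound}) and that the case analysis on $|T_1\cup T_2|$ is exhaustive; there is no hard combinatorics here. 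I expect the write-up to be three or four sentences.
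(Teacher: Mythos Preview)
Your proposal is correct and follows the same approach as the paper: exhibit an uncovered pair and then apply the swapping adversary from Lemma~\ref{lem:t2allPairs}/Corollary~\ref{cor:triv-lowerBound}. The only difference is cosmetic---the paper compresses your ``$a_x\notin T_1$ forces $T_2=A$'' argument into the single phrase ``by a pigeonhole principle,'' whereas you spell it out explicitly (which is arguably clearer, since naive pair-counting $2\binom{t}{2}\ge\binom{n}{2}$ actually \emph{can} hold for $t$ close to~$n$).
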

\begin{proof}
By a pigeonhole principle, there must exist (at least) two elements $a_i,a_j$ that are not compared against each other.  We make it so 
$\forall k\in [n] \setminus \{ i,j\}, \val(a_i)<\val(a_k) \text{ and} \val(a_j)<\val(a_k)$.
Then, it is impossible to determine which one of $a_i,a_j$ is the minimal element. 
Specifically, setting $\val(a_i)<\val(a_j)$ gives the same comparator outputs as the case where $\val(a_j)<\val(a_i)$. This follows since they both are lower than any other element and no comparator has both of them as input. Then, there exists two total ordering consistent with the output of the comparators: one with $\val(a_i)<\val(a_j)$ and the other with $\val(a_j)<\val(a_i)$, contradicting Definition~\ref{def:sorting}.
\end{proof}

\subsection{Minimal sorting for a variety of parameters via design theory}
Recall the proof of Corollary~\ref{cor:triv-lowerBound}. It implies that every two elements must be compared against each other. 
This leads us to defining \emph{minimal} sorting as follows.

\begin{definition}
    Sorting is said to be \emph{minimal} if equality holds in the equation in Corollary~\ref{cor:triv-lowerBound}.
\end{definition}
That is, minimality is obtained when every two elements are compared against each other \emph{exactly} once, and all the $t$-comparators are fully utilized. Then on the one hand there is no redundancy, and on the other hand all computational resources are fully used.
Note that \emph{optimality} means the minimal number of comparators needed to get all pairs compared against each other exactly once, but without requiring that all comparators are fully utilized. 

While minimality implies optimality, the other direction does not hold.
As demonstrated above for $2n/3 \le t<n$, optimality is obtained with 3 comparators. However, minimality is not obtainable in this case. For instance, when $n=10$, and $t=7,8,9$ we have $\binom{n}{2}/\binom{t}{2}\in[1\tfrac14,2\tfrac17]$, but, as we proved, exactly 3 comparators are necessary in all these cases, i.e., some comparator must be under-utilized regardless of the sorting algorithm.

\begin{tcolorbox}[boxrule=1pt,leftrule=3mm]
Our findings on minimal sorting presented here were not developed in a linear fashion, as is common in research. 
We first came up with a minimal sorting design for $t$ being a prime and $n=t^2$. We then extended this result to $t$ being a power of a prime. 
Both these constructions appear in Appendix~\ref{APP:minConstrustions}. 
It was only later that we tried to prove impossibility bounds and realized that minimal sorting is equivalent to a combinatorial structure known as a 2-Steiner system. 
In the following, we jump directly to the end of this line of thought, illustrate the equivalence, and present minimal constructions for some special cases of the set of parameters.
\end{tcolorbox}

\begin{definition}[A Steiner System]
\label{def:steiner}
    A \emph{Steiner System} with parameters $0<c< t<n$, denoted $S(c,t,n)$, is a set~$\P$ of $n$ elements (we will call \emph{points}) and a set~$\L$ of objects (we will call \emph{lines}), where each line is a subset of $t$ points and it holds that any subset of~$c$ points is contained in exactly a single line.
\end{definition}

Corollary~\ref{cor:triv-lowerBound} and the discussion above imply the following.
\begin{theorem}
\label{thm:equivalence:sortIsSteiner}
    The Steiner system $S(2,t,n)$ is equivalent to a minimal sorting of $n$ elements via $t$-comparators.
\end{theorem}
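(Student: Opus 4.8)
The plan is to prove the equivalence in both directions, showing that a Steiner system $S(2,t,n)$ yields a minimal sorting assignment, and conversely that a minimal sorting assignment yields such a Steiner system. The statement is really an unwinding of definitions, with Corollary~\ref{cor:triv-lowerBound} supplying the quantitative backbone, so the main work is to carefully match the combinatorial objects on the two sides and to verify that the consistency property of Definition~\ref{def:sorting} is equivalent to the ``every pair compared exactly once'' condition.

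First I would set up the dictionary: points of $\P$ correspond to the $n$ elements $A$, and each line $L\in\L$ corresponds to a single $t$-comparator whose input is the set of $t$ elements indexed by $L$. For the direction ``Steiner $\Rightarrow$ minimal sorting'', suppose we are given $S(2,t,n)$. Since every pair of points lies in a unique line, every pair of elements $a_i,a_j$ is fed to exactly one comparator, hence (as in the proof of Corollary~\ref{cor:triv-lowerBound}) the outcomes of all comparators determine the relative order of every pair, and thus a single total order consistent with all outputs; this is a valid single-round $t$-sorting. To see it is minimal, I would double-count incidences of pairs in lines: each of the $|\L|$ lines covers $\binom{t}{2}$ pairs, each of the $\binom{n}{2}$ pairs is covered exactly once, so $|\L|\binom{t}{2}=\binom{n}{2}$, i.e. $|\L|=\binom{n}{2}/\binom{t}{2}$, which is exactly the lower bound of Corollary~\ref{cor:triv-lowerBound}, so equality holds and the sorting is minimal by definition.

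For the converse, ``minimal sorting $\Rightarrow$ Steiner'', suppose we have a single-round $t$-sorting algorithm using exactly $\binom{n}{2}/\binom{t}{2}$ comparators. Take $\P=A$ and let $\L$ be the collection of $t$-element input sets of the comparators. By the argument inside Corollary~\ref{cor:triv-lowerBound} (exchanging the two smallest elements when a pair is uncompared), every pair of elements must be compared by \emph{at least} one comparator. Now the double-counting bound gives: the number of (pair, comparator-containing-it) incidences is at least $\binom{n}{2}$ from the ``at least once'' side and at most $\binom{n}{2}/\binom{t}{2}\cdot\binom{t}{2}=\binom{n}{2}$ from the ``$\binom{t}{2}$ pairs per comparator'' side; hence both inequalities are tight, meaning every comparator's input set has exactly $t$ distinct elements (no element repeated in an input) and every pair is contained in exactly one input set. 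That is precisely the statement that $(\P,\L)$ is a Steiner system $S(2,t,n)$, once we also note $t<n$ (else a single comparator suffices and the system is degenerate) and $2<t$ (for $t=2$ the ``system'' is the complete graph, the boundary case of Lemma~\ref{lem:t2allPairs}).

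The one subtlety I expect to be the main obstacle — though it is minor — is handling the possibility that an input set to a comparator has repeated elements or fewer than $t$ distinct elements: the problem statement explicitly allows feeding the same element multiple times. I would argue that in a \emph{minimal} assignment this cannot happen without wasting capacity: a comparator with only $t'<t$ distinct inputs covers at most $\binom{t'}{2}<\binom{t}{2}$ distinct pairs, so the incidence double-count would force strict inequality somewhere, contradicting that we have exactly $\binom{n}{2}/\binom{t}{2}$ comparators covering every pair. Thus minimality automatically forces full, repetition-free utilization, and the clean bijection with $S(2,t,n)$ goes through. I would also remark that equality in Corollary~\ref{cor:triv-lowerBound} need not a priori mean every pair is covered exactly once (a priori some pair uncovered, compensated by a double-cover elsewhere) — but the ``at least once'' lower bound plus the counting argument rules this out, which is exactly why the minimality hypothesis is strong enough.
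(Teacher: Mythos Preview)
Your proposal is correct and follows the same dictionary (points $\leftrightarrow$ elements, lines $\leftrightarrow$ comparator input sets) as the paper's proof, which is a two-sentence ``immediate from definitions'' remark. Your version is simply more thorough: the paper asserts without argument that minimality is the same as ``every pair compared exactly once and every comparator fully utilized,'' whereas you actually supply the double-counting that justifies this, and you explicitly dispose of the repeated-input possibility that the problem setup allows---a genuine subtlety the paper does not mention.
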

\begin{proof}
    Immediate from definitions. Every point is an element to sort, every line is a single comparator. Since any two points are contained exactly in a single line and since every line contains exactly $t$~points, we obtain minimality.
\end{proof}
The above equivalence allow us to use known results about $S(2,t,n)$ to deduce cases for which minimal sorting is possible. The following is an immediate corollary of the known state-of-the-art about Steiner systems with $c=2$, see e.g., \cite{Hughes_Piper_1985,GG94,reid2012steiner}.

\begin{theorem}\label{thm:minimal-sorting-params}
~
\begin{enumerate}
    \item Let $t$ be a power of a prime. Minimal sorting of $n=t^2$ elements  is possible by employing $t^2+t$ many $t$-comparators.
    \item Let $t-1$ be a power of a prime.
    Minimal sorting of $n=t^2-t+1$ elements 
    is possible by employing $t^2-t+1$ many $t$-comparators.
\end{enumerate}
\end{theorem}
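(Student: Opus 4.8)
The plan is to invoke the classical existence theorems for Steiner systems with $c=2$ and translate them, via Theorem~\ref{thm:equivalence:sortIsSteiner}, into statements about minimal sorting. Concretely, item~(1) corresponds to an affine plane of order $t$, and item~(2) to a projective plane of order $t-1$, both of which exist whenever the stated primality condition holds.

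For item~(1), I would take $\F$ to be the finite field with $t$ elements (which exists since $t$ is a prime power) and let the point set be $\P=\F^2$, so $n=|\P|=t^2$. The lines are the affine lines of $\F^2$: for each slope $a\in\F$ and intercept $b\in\F$ the line $\{(x, ax+b) : x\in\F\}$, together with the $t$ vertical lines $\{(c,y): y\in\F\}$. Each line has exactly $t$ points, and any two distinct points lie on a unique such line (this is the standard verification sketched in the introduction, citing \cite[Section~3.2]{Pascoe18}). Counting lines: there are $t$ choices of slope times $t$ choices of intercept, plus $t$ verticals, giving $t^2+t$ lines. By Theorem~\ref{thm:equivalence:sortIsSteiner} this is exactly a minimal sorting of $t^2$ elements with $t^2+t$ comparators, and one checks that $t^2+t = \binom{t^2}{2}/\binom{t}{2}$, so the count is consistent with Corollary~\ref{cor:triv-lowerBound}.

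For item~(2), when $t-1$ is a prime power there is a projective plane $\mathrm{PG}(2,t-1)$ of order $t-1$: it has $(t-1)^2+(t-1)+1 = t^2-t+1$ points, the same number of lines, every line contains $(t-1)+1 = t$ points, and every two points lie on a unique line. Thus it is precisely an $S(2,t,t^2-t+1)$, and Theorem~\ref{thm:equivalence:sortIsSteiner} yields a minimal sorting of $n=t^2-t+1$ elements using $t^2-t+1$ many $t$-comparators. Again one verifies $t^2-t+1=\binom{n}{2}/\binom{t}{2}$ for $n=t^2-t+1$, matching the lower bound.

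There is essentially no obstacle here beyond bookkeeping — the content is entirely in the classical constructions of affine and projective planes over finite fields, which I am assuming as known (see \cite{Hughes_Piper_1985,GG94,reid2012steiner}). The only things to be careful about are: (i) that the parameter ranges of Definition~\ref{def:steiner}, namely $0<2<t<n$, are respected (true once $t\ge 3$, and the small cases $t\le 2$ are already handled by Lemma~\ref{lem:t2allPairs}); and (ii) presenting the two explicit designs cleanly enough that the reduction to Theorem~\ref{thm:equivalence:sortIsSteiner} is transparent. Since the paper also promises self-contained constructions in Appendix~\ref{APP:minConstrustions}, I would keep the proof in the body short: state the field-based construction for the affine case, cite the standard source for the projective case, and in both instances close by applying Theorem~\ref{thm:equivalence:sortIsSteiner}.
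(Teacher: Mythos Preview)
Your proposal is correct and matches the paper's proof essentially verbatim: the paper simply cites that a field of size $t$ yields an affine plane $S(2,t,t^2)$ and a field of size $t-1$ yields a projective plane $S(2,t,t^2-t+1)$, then applies Theorem~\ref{thm:equivalence:sortIsSteiner}. Your write-up is in fact slightly more detailed than the paper's (you spell out the line count and verify the comparator totals against Corollary~\ref{cor:triv-lowerBound}), but the argument is the same.
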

\begin{proof}
    (1) Follows from the fact that every field of size $t$ implies a Steiner system $S(2,t,t^2)$ (an Affine plane), see \cite[Section~3.2]{Pascoe18}. (2) Follows from the fact that every field of size $t-1$ implies a Steiner system $S(2,t,t^2-t+1)$ (a Projective plane), see \cite[Section~4.5]{Pascoe18}. We note that both constructions are explicit.
\end{proof}

The equivalence stated in Theorem~\ref{thm:equivalence:sortIsSteiner} also yields some impossibilities on minimal sorting. It is well known that the Steiner system $S(2,6,36)$ does not exist. This problem, stated originally as a question about Latin Squares and known as the 36 officers problem, dates back to Euler~\cite{euler1782} and was proven impossible by Terry~\cite{Tarry1900}.
Bruck and Ryster~\cite{BR49} extended this result and proved that Steiner systems of  many other orders are also impossible. 
\begin{corollary}[\cite{Tarry1900,BR49}]
    Minimal sorting of $n=t^2$ elements (i.e., with exactly $t+t^2$ many $t$-comparators) is impossible for infinitely many values of~$t$.
\end{corollary}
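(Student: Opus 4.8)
The plan is to combine the equivalence of Theorem~\ref{thm:equivalence:sortIsSteiner} with the classical non-existence results for finite planes. By Theorem~\ref{thm:equivalence:sortIsSteiner}, a minimal sorting of $n=t^2$ elements via $t$-comparators exists if and only if the Steiner system $S(2,t,t^2)$ exists, and $S(2,t,t^2)$ is precisely an affine plane of order $t$: the Euclidean property (A1) and properties (P1)--(P3) all follow from the uniqueness of the line through two points together with a short double-counting argument, exactly as recalled in the introduction and in \cite{Pascoe18}. So the first step is to restate the claim as: affine planes of order $t$ fail to exist for infinitely many~$t$.

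The second step is the standard correspondence between affine and projective planes of a fixed order: from an affine plane one builds a projective plane by adjoining a single ``line at infinity'' whose points are the parallel classes, and conversely from a projective plane one recovers an affine plane by deleting an arbitrary line together with all points on it. Hence it is enough to produce infinitely many $t$ for which no projective plane of order $t$ exists.

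The third step invokes the Bruck--Ryser theorem \cite{BR49}: if $t\equiv 1$ or $2\pmod 4$ and a projective plane of order $t$ exists, then $t$ is a sum of two integer squares. It then remains to exhibit an infinite family of admissible $t$ violating the conclusion. For instance, take $t = 6m^2$ with $m$ odd and $3\nmid m$; since $m$ odd gives $m^2\equiv 1\pmod 8$ we get $t\equiv 2\pmod 4$, while in the prime factorization of $t$ the prime $3$ (which is $\equiv 3\pmod4$) occurs exactly once, i.e.\ to an odd power, so by Fermat's two-square theorem $t$ is not a sum of two squares. Bruck--Ryser therefore rules out a projective plane of order $t$, hence an affine plane of order $t$, hence, by Theorem~\ref{thm:equivalence:sortIsSteiner}, a minimal sorting of $t^2$ elements; and the values $m=1,5,7,11,\dots$ give infinitely many such $t$. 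The smallest instance, $t=6$, is Euler's $36$ officers problem, settled by Tarry \cite{Tarry1900,euler1782}.

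There is no genuine obstacle here beyond citing the right classical results: the whole argument is a translation of Theorem~\ref{thm:equivalence:sortIsSteiner} into the language of planes plus a black-box appeal to Bruck--Ryser. If one insisted on a self-contained proof, the hard part would be Bruck--Ryser itself, whose proof studies the rational quadratic form on $t^2{+}t{+}1$ variables determined by the incidence matrix of the putative plane and invokes Lagrange's four-square identity together with a Hasse--Minkowski-type local-global argument; I would keep that as a reference rather than reproduce it, consistent with how the statement is attributed.
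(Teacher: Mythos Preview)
Your argument is correct and follows the same route the paper takes: the paper does not give a formal proof but simply invokes Theorem~\ref{thm:equivalence:sortIsSteiner} together with the Bruck--Ryser theorem~\cite{BR49} (and Tarry's resolution of the $36$ officers problem for $t=6$) to conclude non-existence for infinitely many~$t$. Your write-up adds value by spelling out the affine/projective correspondence and by exhibiting an explicit infinite family $t=6m^2$ with $m$ odd and $3\nmid m$, which the paper leaves implicit in the citation.
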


Despite decades of research, a full characterization of values of~$t$ that admit a $S(2,t,t^2)$ system does not exist. 
In 1975, Willson~\cite{WILSON75III} showed that for any $t$, a Steiner $S(2,t,n)$ system exists if and only if $t \mid n$ and $t(t-1) \mid n(n-1)$, \emph{except for finitely many values of~$n$}. This implies the following corollary
\begin{corollary}
For any $t$ and large enough integer~$c$, minimal sorting of $n=t^c$ elements is possible with $\OPT(n,t,1)=\binom{n}{2}/\binom{t}{2}$ many $t$-comparators.
\end{corollary}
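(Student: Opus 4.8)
The plan is to invoke Wilson's theorem directly via the equivalence from Theorem~\ref{thm:equivalence:sortIsSteiner}. By that theorem, a minimal sorting of $n$ elements with $t$-comparators exists if and only if a Steiner system $S(2,t,n)$ exists, and in that case the number of comparators used is exactly $\binom{n}{2}/\binom{t}{2}$, which by Corollary~\ref{cor:triv-lowerBound} meets the lower bound on $\OPT(n,t,1)$. So it suffices to show that for every $t$ there are infinitely many exponents $c$ (in particular, all sufficiently large $c$) for which $S(2,t,t^c)$ exists.

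First I would recall Wilson's divisibility conditions: for fixed $t$, the system $S(2,t,n)$ exists for all but finitely many $n$ satisfying the two necessary congruences, namely $t \mid n$ and $t(t-1) \mid n(n-1)$. Then the core of the argument is the elementary number-theoretic check that $n = t^c$ satisfies both conditions once $c$ is large enough. The first condition $t \mid t^c$ is trivial for $c \ge 1$. For the second, write $n(n-1) = t^c(t^c - 1)$; since $\gcd(t^c, t^c-1) = 1$, we need $t(t-1) \mid t^c(t^c-1)$, and it is enough to show $t \mid t^c$ (clear for $c\ge 1$) and $(t-1) \mid (t^c - 1)$. The latter is a standard fact: $t \equiv 1 \pmod{t-1}$, hence $t^c \equiv 1 \pmod{t-1}$ for every $c \ge 1$. (One should be slightly careful about common factors of $t$ and $t-1$ with the product, but since $\gcd(t, t-1)=1$ the divisor $t(t-1)$ splits cleanly: $t$ divides the $t^c$ factor and $t-1$ divides the $t^c-1$ factor, and these two targets are coprime.) Thus both congruences hold for all $c \ge 1$, and Wilson's theorem then guarantees existence of $S(2,t,t^c)$ for all $c$ outside some finite set depending on $t$ — in particular for all sufficiently large $c$, which is exactly the statement's ``large enough integer~$c$''.

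The one genuinely subtle point — the only place this is more than bookkeeping — is that Wilson's theorem only rules out \emph{finitely many} exceptional values of $n$, so we cannot name an explicit threshold on $c$; we merely assert such a threshold exists. This is harmless for the corollary as stated, since it asks only for ``large enough $c$''. I would phrase the conclusion accordingly: fix $t$; the congruences $t\mid t^c$ and $t(t-1)\mid t^c(t^c-1)$ hold for all $c\ge 1$, so by Wilson's theorem $S(2,t,t^c)$ exists for all but finitely many $c$, hence for all $c$ above some $c_0(t)$; by Theorem~\ref{thm:equivalence:sortIsSteiner} this yields a minimal sorting, and minimality together with Corollary~\ref{cor:triv-lowerBound} gives $\OPT(t^c,t,1) = \binom{t^c}{2}/\binom{t}{2}$. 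No step requires a nontrivial calculation beyond the congruence $t^c \equiv 1 \pmod{t-1}$.
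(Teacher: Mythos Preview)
Your proposal is correct and follows essentially the same approach as the paper: invoke Wilson's theorem via the equivalence of Theorem~\ref{thm:equivalence:sortIsSteiner}, and verify the two divisibility conditions $t\mid t^c$ and $(t-1)\mid(t^c-1)$ for all $c\ge 1$. The paper is terser (it uses the explicit factorization $t^c-1=(t-1)(t^{c-1}+\cdots+1)$ rather than your congruence argument), but the content is identical.
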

Indeed, for any $c\ge 1$ we have that $t \mid t^c$ and $(t-1) \mid (t^c-1)$ since $t^c-1 = (t-1)(t^{c-1}+t^{c-2}+\cdots+1)$. 
Our composition theorem, which is given in the next section (Lemma~\ref{lem:composition}), gives explicit construction for some values of~$n,t$. Finding explicit constructions for other values remains open.

\subsection{A Composition Theorem}
The above Theorem~\ref{thm:minimal-sorting-params} applies only to the cases where $n=t^2$ or $n=t^2-t+1$ (for certain values of~$t$). An interesting question is how to obtain a single-round sorting for other values of $t$ and $n$, e.g., for $n=t^c$ elements, with~$c\ge3$. 
We partially answer this task by constructing a $t^2$-comparator out of an optimal number of $t$-comparators. Operating recursively on larger $n$'s, this approach leads to the following theorem.
\begin{lemma}
\label{lem:composition}
    Let $t$ be power of a prime and let $n=t^{2^k}$ for some $k\in \N$. 
    Then, minimal sorting of $n$ elements with $t$-comparators is possible and employs   $\OPT(n=t^{2^k},t,1)=\binom{n}{2}/\binom{t}{2}$ many $t$-comparators.
\end{lemma}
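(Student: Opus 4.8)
The plan is to prove this by induction on $k$, where the base case $k=1$ (i.e.\ $n=t^2$) is exactly part~(1) of Theorem~\ref{thm:minimal-sorting-params}, which gives a minimal sorting of $t^2$ elements with $t^2+t$ many $t$-comparators via the affine plane of order~$t$. The inductive step is the heart of the argument: assuming we have a minimal single-round sorting of $N^2$ elements using $N$-comparators (where $N=t^{2^{k-1}}$ and $n=N^2$), I want to build a minimal single-round sorting of $n$ elements using $t$-comparators. The key observation is that a "minimal sorting of $N^2$ elements via $N$-comparators" is, by Theorem~\ref{thm:equivalence:sortIsSteiner}, a Steiner system $S(2,N,N^2)$, i.e.\ an affine plane of order $N$; its lines are subsets of $N$ points such that every pair of points lies on exactly one line. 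So the real task reduces to: given an $S(2,N,N^2)$ and given (by the inductive hypothesis applied one level down, or by the base case) a minimal sorting of $N$ elements via $t$-comparators, i.e.\ an $S(2,t,N)$, compose them to get an $S(2,t,N^2)$.

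The composition I would carry out is the standard "substitution/blow-up" of Steiner systems. Take the affine plane $S(2,N,N^2)$ on point set $\mathcal{P}$, $|\mathcal{P}|=N^2$, with line set $\mathcal{L}$. On each line $L\in\mathcal{L}$, which has exactly $N$ points, impose a copy of the $S(2,t,N)$ system: this replaces the single $N$-comparator on $L$ by $\binom{N}{2}/\binom{t}{2}$ many $t$-comparators, each comparing a $t$-subset of $L$, such that every pair of points inside $L$ is covered exactly once. The resulting collection of $t$-subsets of $\mathcal{P}$ is the claimed $S(2,t,N^2)$: any pair $\{p,q\}\subseteq\mathcal{P}$ lies on a unique line $L$ of the affine plane, hence is covered by exactly one $t$-comparator coming from the copy of $S(2,t,N)$ placed on that $L$, and by no other (pairs on different lines are never both inside a common $t$-subset we created). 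Every $t$-comparator we created is by construction a $t$-subset, so it is fully utilized, and therefore by Theorem~\ref{thm:equivalence:sortIsSteiner} this is a minimal single-round $t$-sorting of $n=N^2$ elements. For the inductive hypothesis to supply the needed $S(2,t,N)$ with $N=t^{2^{k-1}}$, I would simply apply the same lemma at parameter $k-1$; the base of that recursion is again Theorem~\ref{thm:minimal-sorting-params}(1).

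It remains to verify the comparator count matches $\binom{n}{2}/\binom{t}{2}$, which is forced: the number of $t$-comparators produced equals $|\mathcal{L}|\cdot\bigl(\binom{N}{2}/\binom{t}{2}\bigr)$, and since every pair of the $N^2$ points is covered exactly once, this must equal $\binom{N^2}{2}/\binom{t}{2}=\binom{n}{2}/\binom{t}{2}$ — so no separate calculation is strictly needed, though one can double-check via $|\mathcal{L}|=N^2+N$ and $\binom{N^2}{2}=(N^2+N)\binom{N}{2}$. I expect the main obstacle (really the only subtle point) to be making sure the "compose an $S(2,N,N^2)$ with an $S(2,t,N)$ to get an $S(2,t,N^2)$" step is stated and justified cleanly, in particular the disjointness claim that pairs lying on distinct affine-plane lines are never co-covered by a created $t$-subset — this is immediate since every created $t$-subset is contained in a single affine-plane line, but it is the crux of why exactness (minimality, not just a valid sorting) is preserved. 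Everything else is bookkeeping and the arithmetic identity among binomial coefficients.
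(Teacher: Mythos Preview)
Your proposal is correct and follows essentially the same approach as the paper: induct on~$k$, use the affine-plane construction (Theorem~\ref{thm:minimal-sorting-params}(1)) to get an $S(2,N,N^2)$ with $N=t^{2^{k-1}}$, then replace each $N$-comparator by the $S(2,t,N)$ supplied by the inductive hypothesis, and verify the comparator count telescopes to $\binom{n}{2}/\binom{t}{2}$. The paper's write-up is terser---it states the substitution and the binomial identity without spelling out the pair-coverage argument---whereas you explicitly justify that the composed system is again a Steiner system; the one thing you leave implicit (and should state) is that $N=t^{2^{k-1}}$ is itself a prime power, which is what licenses invoking Theorem~\ref{thm:minimal-sorting-params}(1) at the outer level.
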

\begin{proof}
    We prove that minimal sorting is possible by induction on~$k$. The base case, $k=1$ is given by Theorem~\ref{thm:minimal-sorting-params}(1).

    For the induction step, assume we can sort $n'=t^{2^{k-1}}$ elements using $\binom{n'}{2}/\binom{t}{2}$ many $t$-comparators. We show how to sort $n=t^{2^k}$ elements with exactly $\binom{n}{2}/\binom{t}{2}$ $t$-comparators. Since $n'$ is a power of a prime, Theorem~\ref{thm:minimal-sorting-params} provides us a optimal (minimal) way to sort $n$ elements using $n'$-comparators. Each $n'$-comparator can be implemented via an optimal (minimal) number of $t$-comparators, by induction. The total number of $t$-comparator thus required to sort $n$ elements is
    \begin{align*}
        \frac{\binom{n}{2}}{\binom{n'}{2}} \cdot 
        \frac{\binom{n'}{2}}{\binom{t}{2}}= 
        \frac{\binom{n}{2}}{\binom{t}{2}}\text{,}
    \end{align*}
    and this quantity is minimal due to Corollary~\ref{cor:triv-lowerBound}.
\end{proof}

As a corollary, the above composition theorem implies an explicit construction of a $S(2,t,t^{2^k})$ system for $t$ a power of a prime and all integers~$k>0$. 

\section{Optimally sorting \texorpdfstring{$n$}{n} elements in \texorpdfstring{$d=2$}{d=2}  randomized rounds}
\label{sec:random}

In Section~\ref{sec:detrministic}, we studied optimal deterministic sorting in $d=1$ rounds. We now wish to turn to the case of $d=2$ rounds, trading-off one additional round for fewer comparisons.
We study the \emph{randomized} case since it allows us to reduce the number of  comparisons considerably. Since for $d=1$ we have already obtained an optimal deterministic solution, it makes sense to discuss randomized algorithms for $d>1$. 
As randomized sorting with $O(\log n)$ comparators are well-known \cite{quicksort,beigel1990sorting}, we wish to keep the number of rounds small, and focus on the case of~$d=2$.
We design a fast randomized $t$-sorting algorithm, which is asymptotically optimal in the number of $t$-comparators used, restricted to algorithms with $d=2$ rounds. In certain cases, for instance when $n=t^2$, the asymptotic number of $t$-comparators is optimal even without the round restriction. 
We discuss lower bounds on the number of $t$-comparators required for sorting in Section~\ref{sec:rand:LB}.
In Section~\ref{sec:rand:n-tsquare} we consider the special case of $d=2$ and $n=t^2$ and in Section~\ref{sec:rand:general} we consider the more general case of $d=2$ and arbitrary $n$ and~$t$. Our main result is Theorem~\ref{thm:main-rand-informal}, which we now recall.

\RESTATABLEmainRandomized*

\subsection{Lower bounds}
\label{sec:rand:LB}
Before describing our algorithms, let us recall the lower bound on the number of $t$-comparators, by Beigel and Gill~\cite{beigel1990sorting}.
\begin{theorem}[\cite{beigel1990sorting}]
\label{thm:OPT-rand}
Sorting $n$ elements requires utilizing at least
\(
\frac{\log (n!)}{\log (t!)}=\frac{n\log n}{t\log t}(1+o(1))
\)
many $t$-comparators.
\end{theorem}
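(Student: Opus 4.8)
The plan is to run the classical information-theoretic (decision-tree) argument. Fix any sorting algorithm. If it is randomized, fix an outcome of its random coins for which the resulting deterministic algorithm is correct on every input --- such an outcome exists by the Las-Vegas guarantee, and restricting to it can only decrease the number of comparators used --- so it suffices to prove the bound for deterministic, adaptive algorithms. Such an algorithm is a rooted decision tree $T$: each internal node is a single $t$-comparator query (a choice of $t$ elements, possibly with repetitions), each outgoing edge is labelled by one possible outcome of that query, and each leaf is labelled by the total order the algorithm returns there. Executing the algorithm on a given value-assignment amounts to following the root-to-leaf path selected by the query outcomes, and the number of comparators it uses equals the depth of the leaf reached. (If the worst-case depth is infinite the claimed bound is trivial, so assume $T$ has finite depth $d$.)

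The two ingredients are a bound on the branching of $T$ and a bound on its number of leaves. For the branching: a $t$-comparator applied to a multiset of $t$ elements returns the sorted order of that multiset, and the number of distinct orders is at most $t!$ (it is exactly $t!$ when the $t$ inputs are distinct, and strictly smaller when some element is repeated, since repeated copies are forced to be adjacent with equal value). Hence every internal node has at most $t!$ children, so $T$ has at most $(t!)^{d}$ leaves. For the number of leaves: by Definition~\ref{def:sorting}, for each of the $n!$ assignments of the distinct values $1,\dots,n$ to the $n$ elements, the path followed must end at a leaf whose label is the unique total order consistent with that assignment; since each leaf carries a single label, distinct assignments --- which induce distinct total orders --- must reach distinct leaves. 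Therefore $T$ has at least $n!$ leaves.

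Combining, $(t!)^{d}\ge n!$, i.e.\ $d\ge \log(n!)/\log(t!)$, so some input forces the algorithm to use at least $\log(n!)/\log(t!)$ comparators. The stated asymptotic form then follows from Stirling's approximation $\log(m!)=m\log m-\Theta(m)$, which gives $\log(n!)=n\log n\,(1+o(1))$ and $\log(t!)=t\log t\,(1+o(1))$, hence $\log(n!)/\log(t!)=\frac{n\log n}{t\log t}(1+o(1))$. There is no genuine obstacle here; the only points requiring a little care are checking that the repeated-input feature of $t$-comparators does not inflate the branching factor past $t!$ (it does not), and that the Las-Vegas assumption is actually used when passing from the randomized to the deterministic setting (a single fixed coin outcome must still yield an always-correct algorithm). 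If one instead wants the analogous lower bound on the \emph{expected} number of comparators rather than the worst case, it follows from the same tree by applying Kraft's inequality / Shannon's source-coding bound to the leaf depths under the uniform distribution on the $n!$ inputs.
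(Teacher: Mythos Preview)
Your proof is correct and is precisely the standard decision-tree formalization of the information-theoretic argument the paper sketches. The paper itself does not give a proof: it merely states that ``$\log(n!)$ bits of information are required to sort $n$ elements, and each comparator gives $\log(t!)$ bits of information'' and refers the reader to Section~II of~\cite{beigel1990sorting}. Your write-up unpacks this into the explicit branching/leaf-count argument, which is the same approach.

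One small remark: your sentence ``restricting to it can only decrease the number of comparators used'' is a bit loose. What you really need (and what you effectively use) is that for a Las-Vegas algorithm \emph{every} coin outcome yields a correct deterministic procedure, so the depth lower bound applies to each fixed outcome; this gives the worst-case bound immediately. Your closing remark about Kraft's inequality is the right way to extend to the expected-cost version, but note that the simple ``fix one coin outcome'' reduction does \emph{not} by itself yield the expected-case bound.
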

The proof stems from the fact that $\log (n!)$ bits of information are required to sort $n$ elements, and that each comparator gives $\log (t!)$ bits of information. See Section~II in~\cite{beigel1990sorting}.

The above lower bound allows any number of rounds. Alon and Azar~\cite{AA87}  analyzed the average number of $2$-comparators required to sort $n$ elements in $d$ rounds and proved the following.
\begin{theorem}[\cite{AA87}]
\label{thm:avg-comps-Kround-LB}
Sorting $n$ elements in $d\le \log n$ rounds, requires utilizing 
at least
 \(   \Omega(d n^{1+1/d})
 \)
many $2$-comparators \emph{on average}.
\end{theorem}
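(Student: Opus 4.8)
Here is the route I would take, essentially re-deriving the Alon--Azar bound~\cite{AA87} by refining the worst-case round argument of H\"aggkvist--Hell~\cite{HH81,AAV86}. It suffices to exhibit an input distribution on which \emph{every} deterministic $d$-round algorithm uses $\Omega(dn^{1+1/d})$ comparators in expectation; the natural choice is the uniform distribution over permutations of~$[n]$. Fix such an algorithm, and let $M_i$ be the (random) number of $2$-comparators it uses in round~$i$. After round~$i$ the algorithm's knowledge is the transitive closure $P_i$ of all comparisons answered so far; $P_0$ is the empty order and, by Definition~\ref{def:sorting}, $P_d$ is a total order. Let $c_i$ be the size of a minimum chain cover of $P_i$, so $c_0=n$ and $c_d=1$, and think of $P_i$ informally as a collection of $c_i$ sorted chains; round $i+1$ merges some of these chains together. (One may equivalently work with the potential $\Phi_i=\binom n2-\#\{\text{comparable pairs of }P_i\}$; the chain count makes the bookkeeping more transparent.)

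The crux is a per-round lemma quantifying how little progress one non-adaptive round can make against a generic input: \emph{merging $g$ sorted lists of total length $N$ into one sorted list, in a single non-adaptive round, requires $\Omega(gN)$ comparisons, and this remains true in expectation when the interleaving of the lists is uniformly random.} This follows by a ``staircase'' adversary argument applied to each of the $\binom g2$ pairs of lists: to learn the interleaving of two lists of length $\Theta(N/g)$ one must place $\Omega(N/g)$ comparisons along its (random, hence irregular) staircase, and for a typical global interleaving the transitive inferences available through the remaining $g-2$ lists save only a constant factor; summing the $\binom g2\approx g^2/2$ pairwise contributions of $\Omega(N/g)$ each gives $\Omega(gN)$. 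Granting this, consider round~$i$ in the regime where the algorithm's chains are balanced of length $\Theta(n/c_{i-1})$ — which an efficient algorithm is essentially forced into, since building a long chain fast costs super-linearly many comparisons — and suppose it partitions the $c_{i-1}$ chains into $c_i$ merging groups of sizes $g_1,\dots,g_{c_i}$ with $\sum_j g_j=c_{i-1}$. Group~$j$ merges $g_j$ lists of total length $\Theta(g_j n/c_{i-1})$, so by the lemma and Cauchy--Schwarz over the $c_i$ groups,
$$M_i=\Omega\!\Big(\tfrac{n}{c_{i-1}}\sum_{j=1}^{c_i}g_j^2\Big)\ \ge\ \Omega\!\Big(\tfrac{n}{c_{i-1}}\cdot\tfrac{c_{i-1}^2}{c_i}\Big)\ =\ \Omega\!\Big(\tfrac{n\,c_{i-1}}{c_i}\Big).$$

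To combine the rounds, write $\rho_i=c_i/c_{i-1}\le 1$, so that $\prod_{i=1}^d\rho_i=c_d/c_0=1/n$, and apply AM--GM:
$$\sum_{i=1}^d M_i=\Omega\Big(n\sum_{i=1}^d\tfrac1{\rho_i}\Big)\ \ge\ \Omega\Big(n\cdot d\,\big(\textstyle\prod_{i=1}^d\tfrac1{\rho_i}\big)^{1/d}\Big)\ =\ \Omega\big(d\,n^{1+1/d}\big);$$
taking expectations over the input distribution yields the theorem (and for $d=\log n$ this recovers the $\Omega(n\log n)$ bound implicit in Theorem~\ref{thm:OPT-rand}, as it should, using $d\le\log n$ only to keep the bound meaningful). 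The main obstacle is exactly what the ``in the regime where'' hedge papers over: the algorithm is adaptive \emph{across} rounds and may deliberately keep lopsided chains or make uneven progress, so the pointwise estimate $M_i=\Omega(n/\rho_i)$ need not hold round by round. Making the argument rigorous requires either a convexity/smoothing argument showing that imbalance never lowers $\sum_i M_i$, or replacing the chain count $c_i$ by the entropy potential $\log(\#\text{linear extensions of }P_i)$ and re-deriving the per-round loss for that potential — together with a careful average-case version of the staircase lemma. This is precisely the technical heart of~\cite{AA87}.
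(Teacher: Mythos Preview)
The paper does not prove this theorem at all: it is stated as a citation of Alon and Azar~\cite{AA87} and used as a black box to derive Corollary~\ref{cor:LBn32/t-rand}. There is therefore nothing in the paper to compare your argument against.

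As for the argument itself, you are candid that it is a sketch rather than a proof, and you correctly identify the gap: the per-round estimate $M_i=\Omega(n c_{i-1}/c_i)$ is asserted only ``in the regime where the algorithm's chains are balanced,'' and an adaptive algorithm is under no obligation to stay in that regime. The staircase lemma you invoke (merging $g$ lists of total length $N$ non-adaptively costs $\Omega(gN)$) is plausible in spirit but is itself not proved here, and the claim that transitive inferences through the other $g-2$ lists save only a constant factor for a ``typical'' interleaving is exactly the delicate point. Your closing paragraph already says all of this: the rigorous version requires either a potential based on the number of linear extensions or a careful smoothing argument, which is indeed the content of~\cite{AA87}. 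So what you have written is a faithful outline of the strategy together with an honest accounting of what remains; it is not a proof, and the paper does not ask for one.
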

The above theorem could be used to derive lower bounds on sorting with $t$-comparators. Recall that each $t$-comparator compares at most $\binom{t}{2}$ pairs of elements. Then, the  following lower bounds on the avergae number of $t$-comparators required in any randomized sorting is immediate.
\begin{corollary}
    \label{cor:LBn32/t-rand}
    Sorting $n$ elements in $d\le \log n$ rounds, requires utilizing 
at least
 \(
 \Omega(d n^{1+1/d} / t^2)
 \)
many $t$-comparators \emph{on average}.
\end{corollary}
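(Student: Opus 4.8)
The plan is to reduce this to the $2$-comparator bound of Theorem~\ref{thm:avg-comps-Kround-LB} via a round-preserving simulation. I would take any (possibly randomized) $d$-round sorting algorithm $\mathcal{A}$ that uses $t$-comparators and build from it a $d$-round algorithm $\mathcal{A}'$ that uses only $2$-comparators: whenever $\mathcal{A}$ queries a $t$-comparator on a set $\{a_{i_1},\dots,a_{i_t}\}$ in round $j$, the algorithm $\mathcal{A}'$ issues, in the same round $j$, all $\binom{t}{2}$ pairwise $2$-comparators among those elements. Two facts make this legitimate. First, the output of a $t$-comparator on $t$ elements is exactly the total order of those elements, which is determined by --- and determines --- the outcomes of all $\binom{t}{2}$ pairwise comparisons among them; hence after simulating rounds $1,\dots,j-1$ the algorithm $\mathcal{A}'$ holds precisely the information $\mathcal{A}$ had available before round $j$, so it can compute the same round-$j$ query set, and at the end it reconstructs the same consistent total order demanded by Definition~\ref{def:sorting}. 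Second, one round of $\mathcal{A}$ is simulated by exactly one round of $\mathcal{A}'$, so the round count is unchanged.

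Next I would count comparators. On a fixed input and a fixed setting of the coins, if $\mathcal{A}$ uses $C$ many $t$-comparators then $\mathcal{A}'$ uses at most $\binom{t}{2}\,C$ many $2$-comparators (fewer when the $t$ inputs of some comparator are not all distinct, which the model permits). Averaging over the input permutation --- and over the random coins if $\mathcal{A}$ is randomized --- and using linearity of expectation, the average number of $2$-comparators used by $\mathcal{A}'$ is at most $\binom{t}{2}$ times the average number of $t$-comparators used by $\mathcal{A}$. Theorem~\ref{thm:avg-comps-Kround-LB}, being a lower bound on the average cost of every $d$-round $2$-comparator algorithm (and hence, by the standard averaging argument, on the expected cost of randomized ones), forces the former to be $\Omega(d n^{1+1/d})$ whenever $d \le \log n$. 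Combining the two inequalities, the average number of $t$-comparators is at least $\Omega(d n^{1+1/d})/\binom{t}{2} \ge \Omega(d n^{1+1/d}/t^2)$, since $\binom{t}{2}\le t^2$.

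The step that needs the most care is the claim that the simulation preserves the round structure: one must check that the $\binom{t}{2}$ pairwise comparisons replacing a single $t$-comparator are mutually non-adaptive, which is precisely the observation that the $t$-comparator's output factors through its pairwise comparisons, so they may all share a round. Everything else --- linearity of expectation for the cost, the passage from the average-case $2$-comparator bound to randomized algorithms, and the bound $\binom{t}{2}\le t^2$ --- is routine. As in Corollary~\ref{cor:triv-lowerBound}, the only property of $t$-comparators used is that each one conveys at most $\binom{t}{2}$ pairwise orderings, so this is an immediate consequence of Theorem~\ref{thm:avg-comps-Kround-LB}.
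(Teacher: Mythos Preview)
Your proposal is correct and follows essentially the same approach as the paper: simulate each $t$-comparator by its $\binom{t}{2}$ pairwise $2$-comparators within the same round, then invoke Theorem~\ref{thm:avg-comps-Kround-LB}. The paper states this as an immediate consequence without spelling out the round-preserving simulation; your added care about the non-adaptivity of the replacement comparisons is fine but not strictly needed for a corollary the paper treats as one line.
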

Because any average-case lower bound is also a worst-case lower-bound,
if we plug in $d=2$ in the above corollary, we obtain that our algorithm with $O(n^{3/2}/t^2)$ many $t$-comparator when $t<\sqrt{n}$, is asymptotically tight.

\subsection{The simple special case of \texorpdfstring{$n=t^2$}{n=t\texttwosuperior}}
\label{sec:rand:n-tsquare}
In this section we present Algorithm~\ref{alg:rand-2round}, which performs $t$-sorting of $n=t^2$ elements in two rounds and utilizes $O(t)$ many $t$-comparators. Note that by Theorem~\ref{thm:OPT-rand}, this is asymptotically tight, even without the restriction to $d=2$ rounds.
Although our Algorithm~\ref{alg:rand-general} and Algorithm~\ref{alg:rand-general2}
described in Section~\ref{sec:rand:general} are strictly more general, as they apply to any $n,t$, for pedagogical reasons we first introduce the simplified and very natural Algorithm~\ref{alg:rand-2round} that assumes the special case of~$n=t^2$.

\begin{algorithm}[htp]
\caption{A randomized 2-round sorting of $n=t^2$ elements with $O(t)$ many $t$-comparators}
\label{alg:rand-2round}
\begin{algorithmic}[1]
\Statex
\Statex \textit{\textbf{Round 1:}}
\State Let $P$ be a set of $t/2$ elements from $A$, each sampled uniformly and independently from~$A$. 
\label{line:sample}
\State Partition $A$ into subsets $A_1,\ldots, A_{k}$ of size $t/2$ each.
\ForAll {$i\in[k]$}
    \State Input $P\cup A_i$ into a comparator.
        \Comment{$k$ comparators}
\EndFor

\Statex
\Statex \textit{\textbf{Round 2:}}
\State 
Let $P=(p_1, \ldots, p_{t/2})$ be the ordered elements in $P$. For $1\le i \le t/2-1$, set $S_i$ to contain all the elements which are greater than $p_i$ but lower than $p_{i+1}$. Set $S_0$ to be all the elements lower than $p_1$ and $S_{t/2}$ be all the elements greater than $p_{t/2}$.
\ForAll {$0 \le i \le t/2$}
    \State Sort $S_i$ via Lemma~\ref{lem:loose-upperbound}.
        \Comment{at most $\sum_i 3|S_i|^2/t^2$ comparators}
\EndFor
\end{algorithmic}
\end{algorithm}

Recall our notations, where we wish to sort a set of $n=t^2$ elements, denoted  $A=\{a_1, \ldots, a_n\}$.
We assume that $t$ is even and that $(t/2) \mid n$, and set $k = n / (t/2)$.
The algorithm works as follows. In the first round, we first sample $t/2$ elements uniformly from~$A$. These will be ours ``pivots''. We then take the remaining elements of $A$ and compare them to the pivots. That is, we split the remaining elements into  $n/(t/2)-1$ disjoint subsets of size~$t/2$. We  input each subset to a $t$-comparator together with (all) the $t/2$~pivots. After this step, for each element in~$A$ we know its relative position with respect to the pivots. 
Since we used the same pivots in each comparator, we can see the first round as the pivots splitting $A$ into $t/2+1$ disjoint ``buckets'' such that all the elements in one bucket are strictly smaller (or strictly larger) than all elements in any other bucket.
In the second round of the algorithm, we sort each bucket separately.

The first step utilizes $n/(t/2)=2t$ comparators, one for each subset of~$A$. In the second part, the number of comparators in use depends on the size of the buckets we need to sort, which is a random variable determined by the pivots we sample in the first round.
In expectation, each bucket is of size approximately\footnote{\label{fn:boundingExpectation}To bound the expected size of each bucket, consider the sorted array of elements 
and uniformly select $t$~pivots. 
Connect the beginning of the array to its end to form a cycle. Now consider all intervals between the pivots. The expected sum of the intervals, is roughly~$n$. By linearity of expectation, we can  consider disjoint `chunks' of intervals, each composed of $t$ consecutive intervals.
By symmetry, the expected lengths of all chunks are the same. Thus, each chunk must be, in expectation, about $n/t$ elements long (ignoring constants).
} %
$n/(t/2+1)\approx 2\frac{n}{t}$. 
If we assumed that the number of elements per bin is tightly concentrated around its means, then we could deduce that sorting a single bucket   using Lemma~\ref{lem:loose-upperbound} would take $O(\frac{n^2}{t^4})=O(1)$ comparators, and summing up over all $t/2+1$ buckets results in  $O(\frac{n^2}{t^3})=O(t)$ comparators overall, in expectation. 

However, we cannot make such an assumption, since, while each bucket has $\approx 2\frac{n}{t}$ elements in expectation, there might be very large buckets, with, say, $O(\frac{n}{t}\log n)$ elements. 
Our analysis (which we perform only to the general case, in Section~\ref{sec:rand:general} below),  is somewhat more intricate and shows that the event of a large bucket is rare enough so that amortizing across all the buckets, our algorithm still takes~$O(t)$ comparators with high probability.

\subsection{The general case: supporting any \texorpdfstring{$n,t$}{n,t}}
\label{sec:rand:general}

Algorithm~\ref{alg:rand-2round} can be executed with any $n,t$. 
The problem is that this would come at a very high cost (measured in the number of $t$-comparators used).
The main reason for this high cost is that Algorithm~\ref{alg:rand-2round} has a tradeoff between the costs of the different rounds:
the cost of the first rounds is~$O(\frac{n}{t})$ and the 
cost of the second is~$O(\frac{n^2}{t^3})$. 
While these two costs equal~$O(t)$ for~$n=t^2$, for arbitrary $n$ and~$t$ these costs are no longer balanced and one of the rounds would have a relatively high cost. 
The idea behind Algorithm~\ref{alg:rand-general} depicted below,\footnote{Algorithm~\ref{alg:rand-general} is identical to Algorithm~\ref{alg:rand-general-intro} described in the introduction and repeated here for convenience.} is to balance the costs of the phases, by carefully choosing the size of the pivot set and, as a result, the expected sizes of the buckets they yield.

\begin{algorithm}[htp]
\caption{A randomized 2-round sorting for any $n,t$ with $t\leq \sqrt{n}$}
\label{alg:rand-general}
\begin{algorithmic}[1]
\Statex
\Statex \textit{\textbf{Round 1:}}
\State Let $P$ be a set of $m=\sqrt{n}$
elements from $A$, each sampled uniformly and independently from~$A$.\label{line:sample:B}%
\State Partition $A \setminus P$ into subsets $A_1,\ldots, A_{k}$ of size at most $m$ each. \Comment{$k=\lceil (n-m)/m\rceil$}
\ForAll {$i\in[k]$}
    \State Sort $P\cup A_i$ via  Lemma~\ref{lem:loose-upperbound}.
\EndFor

\Statex
\Statex \textit{\textbf{Round 2:}}
\State 
Let $P=(p_1, \ldots, p_{m})$ be the ordered elements in~$P$. For $1\le i \le m-1$, set $S_i$ to contain all the elements which are greater than $p_i$ but lower than $p_{i+1}$. Set $S_0$ to be all the elements lower than $p_1$ and $S_{m}$ be all the elements greater than $p_{m}$.
\ForAll {$0 \le i \le m $}
    \State Sort $S_i$ via Lemma~\ref{lem:loose-upperbound}.
\EndFor
\end{algorithmic}
\end{algorithm}

Assume that the first round randomly selects $m$ pivots, which we denote by the set~$P$.
In order to ``bucket'' the $n$ elements according to the pivots we need to compare them all with all the pivots. To that end, we split the set $A$ into subsets $A_1,\ldots, A_k$ of size~$m$ (maybe except for the last subset), and compare each subset with the pivots. In contrary to Algorithm~\ref{alg:rand-2round}, we can no longer input $A_i \cup P$ into a $t$~comparator. Instead, we need to implement a $2m$-comparator out of $t$-comparators. We do so via Lemma~\ref{lem:loose-upperbound}, at the cost of $8 m^2/t^2$ many $t$-comparators for a single simulated $2m$-comparator.

Let us now analyze the expected cost of Algorithm~\ref{alg:rand-general}.
To calculate the cost of the first round, note that we now need $n/(2m)$ many (simulated) $2m$-comparators each costing us $O(m^2/t^2)$ 
 many $t$-comparators.
Thus, the first round results in a total cost of $O(\frac{nm}{t^2})$.
The expected cost of the second round is given as follows: 
since the set of pivots is sampled uniformly,  the expected size of each bucket is 
$\approx 2n/t$
(Footnote~\ref{fn:boundingExpectation}).
Oversimplifying again
and assuming the number of elements per bin is tightly concentrated (which is not necessarily true for each bin), we get the following. 
By Lemma~\ref{lem:loose-upperbound}, each one of the $m+1$ buckets costs~$O((n/m)^2/t^2)$ comparators in expectation.
Overall, the expected cost in the second round is~$O(\frac{n^2}{m t^2})$.
Summing the costs of the two rounds, the expected cost of Algorithm~\ref{alg:rand-general}
is~$O(\frac{nm}{t^2}+\frac{n^2}{m t^2})$.
Interestingly, this value is minimized when $m=\sqrt{n}$, irrespective of~$t$. 
In the reminder, we simply set $m=\sqrt{n}$, and the cost becomes $O(\frac{n^{3/2}}{t^2})$.

\subparagraph{The case of
\texorpdfstring{$t > \sqrt{n}$}{t > sqrt(n)}.}

The above analysis needs a little tweak to support the case of $t> \sqrt{n}$. In this case, the number of comparators-per-bucket given by the terms $O(m^2/t^2)$ and $O(n/m^2t^2)$ for the first and second round, respectively, is lower bounded by a single comparator, and thus should read $\max\left\{1, O(m^2/t^2)\right\}$ and $\max\left\{1, O(n/m^2t^2)\right\}$, respectively. Therefore, the choice of parameters needs to be adjusted.
In the following we show a selection of parameters that optimize the case of $t>\sqrt{n}$, which yields Algorithm~\ref{alg:rand-general2}. We only give here a sketch of the (simplified) expected cost analysis, since the precise high-probability analysis will be based on the same ideas presented in Section~\ref{sec:random:analysis} below for the case where $t\le \sqrt{n}$.

In Algorithm~\ref{alg:rand-general2}, We set the number of pivots to be $\tilde m=\lceil n/t\rceil$, and group the rest of the elements into subsets $\{A_i\}$ of size $t$ each (instead of size $\tilde m$). We then continue with the sorting as before. 

In the first round of the algorithm, we sort $k = O(n/t)$ sets, each of size $t+{\tilde m} =O(t)$. 
Thus, by Lemma~\ref{lem:loose-upperbound} sorting each such bucket can be done using $c'=O(1)$ comparators resulting in $c'k=O(n/t)$ comparators in total.
In the second round,  each $S_i$ has $O(n/\tilde m)=O(t)$ elements, in expectation. 
Assuming again our oversimplification that the number of elements in each bin is tightly concentrated around its mean, we get by Lemma~\ref{lem:loose-upperbound} that sorting each $S_i$ takes $O(1)$ comparators. 
Since there are ${\tilde m}+1$ such sets, the total number of comparators used in the second round is also bounded by~$O(n/t)$. 

\begin{algorithm}[htp]
\caption{A randomized 2-round sorting for any $n,t$ with $t > \sqrt{n}$}
\label{alg:rand-general2}
\begin{algorithmic}[1]
\Statex
\Statex \textit{\textbf{Round 1:}}
\State $\tilde m=\lceil n/t\rceil$
\State Let $P$ be a set of $b$
elements from $A$, each sampled uniformly and independently from~$A$.%
\State Partition $A \setminus P$ into subsets $A_1,\ldots, A_{k}$ of size at most $t$ each. \Comment{$k=\lceil (n-\tilde m)/t\rceil$}
\ForAll {$i\in[k]$}
    \State Sort $P\cup A_i$ via  Lemma~\ref{lem:loose-upperbound}.
\EndFor

\Statex
\Statex \textit{\textbf{Round 2:}}
\State 
Let $P=(p_1, \ldots, p_{\tilde m})$ be the ordered elements in~$P$. For $1\le i \le {\tilde m}-1$, set $S_i$ to contain all the elements which are greater than $p_i$ but lower than $p_{i+1}$. Set $S_0$ to be all the elements lower than $p_1$ and $S_{\tilde m}$ be all the elements greater than $p_{\tilde m}$.
\ForAll {$0 \le i \le {\tilde m} $}
    \State Sort $S_i$ via Lemma~\ref{lem:loose-upperbound}.
\EndFor
\end{algorithmic}
\end{algorithm}

\medskip
In Section~\ref{sec:random:analysis} we formally analyze the number of comparators used by these schemes (without the oversimplifying assumption) and show that it is concentrated around the stated value, i.e., we prove Theorem~\ref{thm:main-rand-informal}.
As mentioned above, we  only analyze  Algorithm~\ref{alg:rand-general} since the analysis of Algorithm~\ref{alg:rand-general2} is analogous.
We stress again that the expected analysis presented above is oversimplified.
Further, even with a simple and straightforward expected analysis, the dependencies of the events make it difficult to obtain high-probability concentration bounds, i.e., bounds that hold except with a polynomially small  probability.  
Section~\ref{sec:random:analysis} is devoted to the development of tools that will allow us to obtain high-probability concentration bounds.

\subsection{Analysis: Obtaining high probability bounds}
\label{sec:random:analysis}
We urge the reader to first read the high-level description of our analysis, presented in Section~\ref{sec:highlevel}. We make use of the setting introduced there and  briefly recall the  main idea and notations. 
We define a balls-into-bins process, where we throws $m$ balls, uniformly at random, into $m/c$ bins; recall we substitute $m=\sqrt{n}$ throughout. We throw all the balls together, but draw, at each step, only one bit of randomness towards the string that describes their designated bin. This can be seen as a
\emph{binary tree of deferred randomness}, where the balls begin at the root node, and at every step each ball decides, with equal probability, to which children to continue. The tree has depth $T=\log(\sqrt{n}/c)$ where the root is at level~0.

Our goal is to bound the lengths of consecutive stretches of empty bins. 
For each node~$u$, we define $\COST(u)$ as the number of comparators needed in order to sort the elements associated with the bins below the node~$u$. 
In Particular, since each bin represents $c\sqrt{n}$ elements, 
sorting the elements in $s$~consecutive bins requires at most $ 3\binom{s c\sqrt{n}}{2}/\binom{t}{2}\leq 4 s^2 c^2 n/t^2$ comparators (Lemma~\ref{lem:loose-upperbound}).
For convenience, we will normalize the cost by a factor of $4 c^2 n/t^2$, so that the function $\COST(u)$ represents a \emph{normalized} cost. That is, we will have $\COST=s^2$ to indicate a number of $s^2 \cdot 4 c^2 n/t^2$ comparators.

One additional notation we need is the \emph{super-bin}: instead of dealing with each bin separately, we 
group together sets of $\frac{8}{c}\log n$ consecutive bins into one super-bin. The reason behind this definition is that in the upper layers of the tree, with high probability, the balls are distributed nicely, and our focus will be on the lower levels of the tree, namely, when nodes  have approximately $\frac{8}{c}\log n$  bins  (or, a super-bin) under them. 
Let us bound the number of balls~$L$ that are expected to reach a specific super-bin, with high probability.
Up to rounding, there are $(\sqrt{n}/c)/(\frac{8}{c}\log n)=\frac{\sqrt{n}}{8\log n}$ super bins and hence, each specific super-bin 
gets $\E{L}= \sqrt{n} /\frac{\sqrt{n}}{8\log n}=8\log n$ balls, in expectation.
Then, thinking about $L$ as the sum of $m=\sqrt{n}$ random variables indicating whether the $i$-th ball reached our super-bin or not, 
we can use a Chernoff bound with $\delta=7/8$, to have a concentration bound on the number of balls that end up in that super bin. 
\begin{align*}
\Pr{L \notin  [\log n, 15\log n] } = \Pr{L \notin \E{L}(1 \pm \delta) } \leq 
2\exp\left(- \frac{\delta^2 \E{L}}{3}\right)
< 2\exp\left(- 2\log n\right)
< 
2/n^2
\text{.}
\end{align*}

Let us now focus on the level of the tree where each node has $\frac8{c}\log n$ associated bins below it. We name this level by~$N_0$, to ease notation. 
It is easy to verify that $N_0=T-\log(8\log(n)/c)$. 
The argument above shows that with high probability, each node at level $N_0$ has a number of balls in the range $[\log n,15\log n]$.
Further, via a union bound on the 
$\frac{\sqrt{n}}{8\log n}<n$ 
nodes in level~$N_0$, the event that \emph{all} these nodes have a number of balls in the range  $[\log n,15\log n]$ holds with high probability, and we can condition on it in the following.

Let us now define in more detail the process during the last levels of the tree, which we rename as $N_0$ to $N_{\log(8\log(n)/c)}$. To ease notations we sometimes use $b=\log(8\log(n)/c)$.
Consider some node $u\in N_0$, and focus on the balls in this subtree alone.  
We start with all the balls at~$u$ and distribute them to $u$'s child with equal probability. We stop when we reach a leaf $z$ at level $N_{\log(8\log(n)/c)}$.
For any node~$u$, 
we denote by~$\ell_u$ the number of balls reaching~$u$.
For a non leaf~$u$, we define
$\mathcal{E}_u$ 
to be the bad event in which 
one of $u$'s children obtains strictly less than 
$h(\ell_u)=\frac{\ell_u}{2}-g(\ell_u)$ balls,
where $g(x)=\sqrt{2 x\log(x) }$.

If $\mathcal{E}_u$ occurs,  we stop the process here, for the subtree of~$u$. The analysis will show that we can charge this event with (amortized) cost of~$8\ell_u^2$, as this cost greatly surpasses the real cost for sorting the elements in the bins associated with~$u$.
Otherwise, if the event $\mathcal{E}_u$ did not occur in $u$ we continue distributing the balls. The (amotized) cost of a leaf can be shown to be at most~$8$. 

\medskip
Our goal in the rest of this section, is to bound, with high probability, the cost of the root of the binary tree of deferred randomness,~$\COST(\Root)$. 
We do so in three steps.
Since we already condition on the high probability event that no $\mathcal{E}_u$ happened in levels above~$N_0$, in the first step, we consider the (unlikely but illustrative) case, where $\mathcal{E}_u$ never happens in the subtree rooted by some $u\in N_0$. In this case, each leaf gets at least one ball (Lemma~\ref{lem:allgood}), and this would eventually imply the amortized cost per bin (and hence, for a the leaf~$u$) is a constant not bigger than~8.
The second step
deals with the case where
$\mathcal{E}_u$ happens at some node~$u$. 
This means that one child of~$u$ receives too few balls, possibly causing many of the bins below to be empty, i.e., without any ball reaching them. 
Unfortunately, 
bounding the cost of the bins associated with a single~$u$ is problematic, since this cost depends on the cost of two other nodes, whose bins are immediately to the left and to the right of the bins of~$u$.
In Lemma~\ref{lem:amortize} we bound the total cost of the tree, by summing over all the nodes~$v$ in which $\mathcal{E}_v$ happened and over all the remaining leaves above which  $\mathcal{E}_v$ did not happen. By doing so the dependencies between neighbors balance out and allow us to derive a bound on the total cost, $\COST(\Root)$, as a function of the bad events that happen throughout the process. From this total cost we can now derive an  \emph{amortized} cost per bad-event~$\mathcal{E}_u$, according to the depth of~$u$ (upper bounded by the number of balls~$\ell_u$).  
The last step uses Hoeffding's concentration bound (Theorem~\ref{thm:hoe}) to bound the number of bad events that happen at every level below~$N_0$. Incorporating this bound on the number of bad events with the amortized cost per bad-event, completes the proof of Theorem~\ref{thm:main-rand-informal} at the end of this section.

\begin{lemma}\label{lem:allgood}
Consider the binary tree of deferred randomness process  described above, and focus on the subtree rooted at some node $u$ at level~$N_0$.
Consider a path $u=u_0,u_1,u_2,\ldots, u_b$ from $u$ to one of its descendant leaves, and let $\ell_i$ be the number of balls reaching $u_i$.
If $\ell=\ell_0 \geq \log n$ and the bad events $\mathcal{E}_{u_i}$ do not occur for any $i \leq b$, 
then 
$\Pr{\ell_b \ge 1} =1$,
assuming a sufficiently large constant~$c$. 
\end{lemma}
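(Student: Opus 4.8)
The plan is to show that along the path $u_0, u_1, \ldots, u_b$, the number of balls $\ell_i$ decreases so slowly under the ``no bad event'' hypothesis that even after all $b = \log(8\log(n)/c)$ steps, $\ell_b$ stays well above $1$. In fact I expect to prove the much stronger statement that $\ell_b \ge 2$ (or any fixed constant), which forces $\Pr{\ell_b \ge 1} = 1$ deterministically. The key recursion is that, since $\mathcal{E}_{u_i}$ does \emph{not} occur, we have $\ell_{i+1} \ge h(\ell_i) = \frac{\ell_i}{2} - g(\ell_i)$ with $g(x) = \sqrt{2x\log x}$. So the whole argument is a deterministic analysis of the iteration $x \mapsto \frac{x}{2} - \sqrt{2x\log x}$ started from $x_0 = \ell_0 \ge \log n$, run for $b = \log\log n - \log(c/8)$ steps.

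First I would track the ``normalized'' quantity: set $\ell_i = 2^{b-i} \cdot r_i$ so that $r_0 = \ell_0 / 2^b = \ell_0 \cdot c/(8\log n) \ge c/8$, and $r_b = \ell_b$. The hypothesis $\ell_0 \ge \log n$ is exactly what guarantees $r_0 \ge c/8$ — a large constant for large $c$. Dividing the recursion $\ell_{i+1} \ge \ell_i/2 - \sqrt{2\ell_i \log \ell_i}$ by $2^{b-i-1}$ gives $r_{i+1} \ge r_i - 2^{(b-i-1)/2}\sqrt{2 r_i \log \ell_i}/2^{(b-i-1)} \cdot \sqrt{2^{b-i}}/\sqrt{2^{b-i}}$; cleaning this up, the per-step loss in $r_i$ is $\sqrt{2 r_i \log \ell_i}/2^{(b-i-1)/2} \cdot \text{(const)}$, which I would bound by $O(\sqrt{r_i \log\log n}\,/\,2^{(b-i)/2})$ using $\log \ell_i \le \log \ell_0 \le \log n$ and then, since $\ell_i \le 2^{b-i}\cdot\text{poly}$, actually $\log \ell_i = O(b - i + \log\log n) = O(\log\log n)$. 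Summing the geometric-in-$\sqrt{2}$ series $\sum_{i=0}^{b-1} 2^{-(b-i)/2}$, which converges to a constant, the total drop in $r_i$ from $i=0$ to $i=b$ is $O(\sqrt{r_{\max}\log\log n})$ where $r_{\max} = \max_i r_i$; a short monotonicity/bootstrapping argument (if $r_i$ ever dips, the loss terms only shrink) shows $r_{\max} = O(r_0)$, so the total drop is $O(\sqrt{r_0 \log\log n})$. Choosing $c$ large enough that $r_0 \ge c/8$ dominates $O(\sqrt{(c/8)\log\log n})$ — here one uses that $\sqrt{x}$ grows slower than $x$, so for $c$ a sufficiently large absolute constant (possibly with mild dependence swallowed for $n$ past a threshold) we get $r_b = \ell_b \ge r_0/2 \ge c/16 \ge 1$.

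The cleanest packaging, which I would actually write, is an explicit invariant proved by downward induction on $i$: claim that $\ell_i \ge 2^{b-i}\cdot \beta$ for a constant $\beta$ (say $\beta = 2$), for all $0 \le i \le b$. The base case $i = 0$ is $\ell_0 \ge 2^b \beta = 2\beta \log n \cdot (8/c)^{-1}\cdots$ — wait, $2^b = 8\log n / c$, so this needs $\ell_0 \ge 16\log n / c$, which holds since $\ell_0 \ge \log n$ and $c \ge 16$. For the inductive step, assuming $\ell_i \ge 2^{b-i}\beta$, I use $\ell_{i+1} \ge \ell_i/2 - \sqrt{2\ell_i\log\ell_i}$ and need $\ell_i/2 - \sqrt{2\ell_i \log \ell_i} \ge 2^{b-i-1}\beta$, i.e., $\sqrt{2\ell_i\log\ell_i} \le (\ell_i - 2^{b-i}\beta)/2$. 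Using $\ell_i \ge 2^{b-i}\beta \ge 2^{b-b}\beta = \beta$ is too weak near the leaves; the point is that near the leaves ($i$ close to $b$), $2^{b-i}$ is small and the hypothesis $\ell_0 \ge \log n$ is what makes $\ell_i$ large \emph{throughout}. I would instead carry the stronger invariant $\ell_i \ge 2^{b-i}\cdot(\log n)^{1/2}$ or similar for the top half of the tree and only switch to a constant bound once $2^{b-i}$ itself is large; but the simplest is: since $\log\ell_i \le \log\ell_0 \le \log n$ always, the condition becomes $\sqrt{2\ell_i \log n} \le (\ell_i - 2^{b-i}\beta)/2$, which holds as soon as $\ell_i \ge C\log n$ for a constant $C$ \emph{and} $\ell_i \ge 2 \cdot 2^{b-i}\beta$; and $\ell_i \ge C\log n$ fails only near the leaves where $2^{b-i}$ is tiny, at which point I bound $\log \ell_i$ by $O(\log(2^{b-i}) + \log\log n) = O(\log\log n)$ and redo the estimate. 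I expect the \textbf{main obstacle} to be exactly this two-regime bookkeeping — showing the invariant survives all the way down to the leaves where the ``$-\sqrt{2x\log x}$'' correction is proportionally largest, which is precisely why the statement needs ``$c$ sufficiently large'': a large $c$ shrinks $b = \log(8\log n/c)$ and thins the bottom levels, and simultaneously inflates the starting margin $\ell_0 / 2^b$. Once the invariant reaches $i = b$ it reads $\ell_b \ge \beta \ge 1$ with probability $1$, as claimed.
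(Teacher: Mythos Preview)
Your overall strategy---unroll the recursion $\ell_{i+1} \ge \ell_i/2 - g(\ell_i)$ and show the accumulated corrections stay below the main term $\ell_0/2^b$---is exactly the paper's. The normalization $r_i = \ell_i/2^{b-i}$ is a clean way to expose that $r_0 \ge c/8$ is the margin to protect.

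There is, however, a genuine gap in the execution. You bound $\log\ell_i$ uniformly by $O(\log\log n)$ (collapsing the sharper estimate $(b-i)+\log r_0$ that you essentially wrote down), conclude that the total drop in $r_i$ is $O(\sqrt{r_0\log\log n})$, and then ask the constant $r_0 \ge c/8$ to dominate this. That would force $c \gtrsim \log\log n$; since $c$ is an absolute constant, the argument fails for large~$n$. The parenthetical ``mild dependence swallowed for $n$ past a threshold'' is backwards: the deficit grows with~$n$.

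The fix, which is precisely what the paper does, is to keep the level dependence rather than collapse it. Bounding the argument of $g$ at step $i$ by $\ell_0/2^i$ (valid since $g$ is increasing), the $i$-th correction is at most $g(\ell_0/2^i)/2^{b-1-i}$. With $j = b-1-i$ and $\alpha = \ell_0/2^{b-1} = 2r_0$, the total correction is
\[
\sqrt{2\alpha}\sum_{j=0}^{b-1}\frac{\sqrt{j+\log\alpha}}{\sqrt{2^j}}
\ \le\ \sqrt{2\alpha}\Bigl(\sum_{j\ge 0}\tfrac{\sqrt{j}}{\sqrt{2^j}}+\sqrt{\log\alpha}\sum_{j\ge 0}\tfrac{1}{\sqrt{2^j}}\Bigr)
= O\bigl(\sqrt{\alpha\log\alpha}\bigr).
\]
Both infinite series are absolute constants; the point is that near the leaves (small $j$) the logarithm is $\log\alpha = O(\log c)$ rather than $\log\log n$, while near the root (large $j$) the $\sqrt{j}$ is crushed by $\sqrt{2^j}$. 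Since $\alpha - O(\sqrt{\alpha\log\alpha}) \ge 1$ once $\alpha \ge c/4$ is a large enough absolute constant, the lemma follows. Your two-regime invariant attempt is reaching for the same phenomenon, but the direct summation is far cleaner and avoids the bookkeeping you flagged as the main obstacle.
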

\begin{proof}
Define $h^{(i+1)}(x)=h^{(i)}\left(h(x)\right)$ and $h^{(1)}(x)=h(x)$.
We have \begin{align} \label{eq:kuh}
\ell_b &\geq  h^{(b)}(\ell)= h^{(b-1)}\left(\frac{\ell}{2}-g(\ell)\right)
= h^{(b-2)}\left(\frac{\ell}{4}-\frac{g(\ell)}{2} -g(\ell/2-g(\ell))\right)\notag
\\&\geq
h^{(b-2)}\left(\frac{\ell}{4}-\frac{g(\ell)}{2} -g(\ell/2)\right)
\geq  \cdots \ge \frac{\ell}{2^b} - \sum_{i=0}^{b-1} \frac{g(\ell/2^i)}{2^{b-1-i}}.
\end{align}

Set $\alpha = \ell / 2^{b-1}$. We bound, 
\begin{align}\label{eq:esel} \sum_{i=0}^{b-1} \frac{g(\ell/2^i)}{2^{b-1-i}} &\stackrel{(1)}{=}
\sum_{j=0}^{b-1} \frac{g(\ell/2^{b-1-j})}{2^{j}} = 
\sum_{j=0}^{b-1} \frac{g(\alpha \cdot 2^j)}
{2^{j}}
= 
\sum_{j=0}^{b-1} \frac{\sqrt{2\alpha \cdot 2^j \log(\alpha \cdot 2^j)}}{2^j}
\notag
\\
&\leq 
\sqrt{2\alpha} \sum_{j=0}^{b-1} \frac{\sqrt{ j+\log(\alpha)}}{\sqrt{2^j}} \notag
 \stackrel{(2)}{\leq}
\sqrt{2\alpha} \sum_{j=0}^{b-1} \frac{\sqrt{j} + \sqrt{\log \alpha}}
{\sqrt{2^{j}}} 
\notag
\\
& \stackrel{(3)}{\leq}
\sqrt{2\alpha} (4.2 + (2+\sqrt{2})\sqrt{\log \alpha})
\stackrel{(4)}{\leq}
10 \sqrt{\alpha \log \alpha},
\end{align}
where transition~(1)  follows by substituting~$i$ with $j=b-1-i$, (2) follows since for any positive $x,y$ we have $\sqrt{x+y}^2\le x+2\sqrt{xy}+y = (\sqrt{x}+\sqrt{y})^2$,
transition~(3) follows since $\sum_{i=0} \sqrt{i/2^i} \leq 4.2 $ and $\sum_{i=0} \sqrt{1/2^i} \leq 2+\sqrt{2}$
and transition~(4) holds for  $\alpha\geq 5$.
We now continue bounding Eq.~\eqref{eq:kuh} using Eq.~\eqref{eq:esel}.
\begin{align} \label{eq:pferd}
\ell_b&\geq \frac{\ell}{2^b} - \sum_{i=0} \frac{g(\ell/2^i)}{2^{b-i}} \geq 2\alpha-10 \sqrt{\alpha\log \alpha} \geq 1,
\end{align}
where the last inequality holds for $\alpha\geq 200$.
Recall that $b =\log(8\log(n)/c) $ and $\ell \geq \log n$, hence, 
$\alpha=\ell/2^{b-1} \geq \log n/(8\log(n)/2c) = c/4$.
Thus if we set $c=800$, the claim holds.
\end{proof}

The above Lemma~\ref{lem:allgood} tells us that, as long as $\mathcal{E}_u$ does not happen along the path to a leaf, that leaf will end up with (at least) one ball. The next corollary notes that if $\mathcal{E}_u$ did happen along the path, then many balls have still reached~$u$---at least one for each leaf underneath~$u$. Indeed, the event $\mathcal{E}_u$ suggests that the balls are divided unequally between $u$'s children, so one of them might get no balls at all, but at $u$ itself, there must be enough balls to allow Lemma~\ref{lem:allgood} to hold and assign at least one ball to each leaf underneath~$u$.

\begin{corollary}
    \label{cor:atLeastOne}
    Let $u_0,u_1,\ldots,u_b$ be a path from $u_0\in N_0$ to some leaf~$u_b$, and assume $\ell_0\ge\log n$. Let $u_k$ be the first node along this path for which $\mathcal{E}_{u_k}$ holds. 
    Then, $\ell_{u_k}\ge 2^{b-k}$.
\end{corollary}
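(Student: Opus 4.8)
The plan is to reduce Corollary~\ref{cor:atLeastOne} to Lemma~\ref{lem:allgood} by observing that $u_k$ plays the role of a fresh ``root'' of a subtree in which no bad event has yet occurred. Concretely, since $u_k$ is the \emph{first} node along the path at which $\mathcal{E}_{u_k}$ holds, none of $\mathcal{E}_{u_0},\ldots,\mathcal{E}_{u_{k-1}}$ occurred; I would run exactly the chain of inequalities \eqref{eq:kuh}--\eqref{eq:pferd} from Lemma~\ref{lem:allgood} but starting at $u_0$ and stopping at $u_k$ (a path of length $k$, not $b$), which gives $\ell_{u_k}\ge \ell_0/2^{k} - \sum_{i=0}^{k-1} g(\ell_0/2^i)/2^{k-1-i}$, and then push the same bound one more level: the subtree rooted at $u_k$ has depth $b-k$, so I want to show that $\ell_{u_k}$ balls at $u_k$ suffice to populate every one of the $2^{b-k}$ leaves beneath it. The cleanest way is to note that the hypothesis $\ell_0\ge\log n$ together with the absence of bad events on $u_0,\dots,u_{k-1}$ already forces $\ell_{u_k}$ to be large enough that Lemma~\ref{lem:allgood}, applied with $u_k$ as the level-$N_0$-style root and depth $b-k$, yields at least one ball per leaf; the number of leaves is $2^{b-k}$, so in particular $\ell_{u_k}\ge 2^{b-k}$.

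In more detail, the key steps in order are: (i) record that ``first bad event at $u_k$'' means $\mathcal{E}_{u_i}$ fails for $i<k$, so the deterministic recursion $\ell_{i+1}\ge h(\ell_i)=\ell_i/2-g(\ell_i)$ is valid for $i=0,\dots,k-1$; (ii) iterate it to get $\ell_{u_k}\ge h^{(k)}(\ell_0)$ and bound $h^{(k)}(\ell_0)$ from below exactly as in \eqref{eq:esel}--\eqref{eq:pferd}, with $\alpha=\ell_0/2^{k-1}\ge \log n/2^{b-1}\ge c/4$ since $k\le b$ and $\ell_0\ge\log n$; this gives $\ell_{u_k}\ge 2\alpha-10\sqrt{\alpha\log\alpha}$, which for $c=800$ is comfortably at least, say, $\alpha=\ell_0/2^{k-1}$; (iii) observe that $\ell_0/2^{k-1}\ge (\log n)/2^{k-1}$, and since $b=\log(8\log(n)/c)$ we have $2^{b}=8\log(n)/c$, hence $\ell_0/2^{k-1}\ge 2^{b-k}\cdot (c/4)\ge 2^{b-k}$; combining with step (ii) (which only costs a constant factor) gives $\ell_{u_k}\ge 2^{b-k}$. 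Alternatively, and perhaps more transparently, one packages (ii)--(iii) as a single invocation of Lemma~\ref{lem:allgood}: re-run that lemma's argument on the subtree rooted at $u_k$ of depth $b-k$ (its hypothesis ``$\ell\ge\log n$'' is only used to guarantee $\alpha\ge 200$, which here becomes $\ell_{u_k}/2^{b-k-1}\ge$ const and follows from step (ii)), concluding that \emph{every} leaf below $u_k$ receives at least one ball; there are $2^{b-k}$ such leaves and the balls reaching them are disjoint subsets of the $\ell_{u_k}$ balls at $u_k$, so $\ell_{u_k}\ge 2^{b-k}$.

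The main obstacle I anticipate is bookkeeping at the ``boundary'' level: Lemma~\ref{lem:allgood} is stated for a subtree rooted at a node $u\in N_0$ of full depth $b$, whereas here I need it for a subtree rooted at the internal node $u_k$ of reduced depth $b-k$, and I must check that the only place the hypothesis $\ell_0\ge\log n$ enters the proof of Lemma~\ref{lem:allgood} is to ensure the various constant thresholds ($\alpha\ge 5$ in transition (4), $\alpha\ge 200$ for the final inequality) — and that with the slightly smaller depth $b-k$ the relevant $\alpha=\ell_{u_k}/2^{b-k-1}$ is still at least those constants. Since $k$ can be as large as $b$, the degenerate case $k=b$ (bad event at the leaf itself, $2^{b-k}=1$) should be noted separately: there $\ell_{u_k}\ge 1$ trivially because $u_b=u_k$ lies on a path that was reached, i.e., a ball is present by construction, or directly from $\ell_{u_k}\ge h^{(b)}(\ell_0)\ge 1$ via Lemma~\ref{lem:allgood}. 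Everything else is a rote re-run of the displayed computations in Lemma~\ref{lem:allgood}, so no genuinely new estimate is required.
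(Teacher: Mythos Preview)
Your direct route (steps (i)--(iii)) does succeed and is genuinely different from the paper's argument, but step~(ii) overstates the bound: at the boundary $\alpha=c/4=200$ one has $2\alpha-10\sqrt{\alpha\log\alpha}\approx 9$, not ``comfortably at least $\alpha$'' (that would require $\alpha\ge 100\log\alpha$, which fails at $\alpha=200$). Fortunately all you actually need is the much weaker $2\alpha-10\sqrt{\alpha\log\alpha}\ge 2^{b-k}$, and since step~(iii) gives $\alpha\ge (c/4)\cdot 2^{b-k}=200\cdot 2^{b-k}$, this amounts to $2\alpha-10\sqrt{\alpha\log\alpha}\ge \alpha/200$, which does hold for all $\alpha\ge 200$. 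So the computation goes through once you weaken that intermediate claim.

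The paper's proof is shorter and avoids re-running the estimates of Lemma~\ref{lem:allgood}: it \emph{reassigns} the balls in the subtree below $u_k$ evenly (half to each child, recursively), which kills every bad event from $u_k$ downward while leaving $\ell_{u_k}$ unchanged; now Lemma~\ref{lem:allgood} applies as a black box on the full path $u_0,\dots,z$ for each leaf $z$ under $u_k$, giving $\ell_z\ge 1$ for every one of the $2^{b-k}$ leaves, and summing yields $\ell_{u_k}\ge 2^{b-k}$. Your approach buys independence from this artificial reassignment at the cost of redoing the numerics; the paper's buys a one-line reduction at the cost of a slightly odd thought experiment.

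Your ``alternative'' packaging, however, has a real gap that the reassignment trick is precisely designed to close: you propose to invoke Lemma~\ref{lem:allgood} on the subtree rooted at $u_k$, but that lemma requires that no bad event occur along the path, and $\mathcal{E}_{u_k}$ \emph{does} occur at the very root of that subtree---indeed $\mathcal{E}_{u_k}$ means one child of $u_k$ may receive far fewer than $h(\ell_{u_k})$ balls, so half the leaves could well be empty. The obstacle you flag (``$\ell\ge\log n$ is only used to force $\alpha\ge 200$'') is not the issue; what is missing is a device to neutralise $\mathcal{E}_{u_k}$ without altering~$\ell_{u_k}$.
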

\begin{proof}
    Assume towards contradiction that $\ell_{u_k} < 2^{b-k}$. From $u_k$ downwards, re-assign the balls at the subtree rooted at $u_k$ so that exactly half (up to 1) balls reach each child of every internal node. Note that with this new arrangement, the event $\mathcal{E}_{u}$ does \emph{not} occur anymore at~$u_k$, nor in any $u$ decendant of~$u_k$. 
    We can now apply Lemma~\ref{lem:allgood} for each and every leaf~$z$ in the subtree rooted at~$u_k$ and learn that $\ell_z\ge 1$. 
    Thus, if each leaf in the subtree of $u_k$ has at least one ball, $u_k$ had at least as many balls as leaves below it, 
    $\ell_{u_k} = \sum_{
        z \in N_b \text{ descendant of }u_k 
         } 
      \ell_{z} \ge 2^{b-k}$. 
\end{proof}

We now turn to bound the cost of a node in which the event~$\mathcal{E}_u$ occurs. As alluded above, this cost may depend on ``neighbouring'' nodes of $u$. We need the following definition in order to properly defines these ``neighbors''.

\begin{definition}[Associated bins]\label{def:associated}
Define the function $f$, mapping  bins to nodes of the tree of deferred randomness in the following manner.
For every bin~$b_i$, consider the path from the root to~$b_i$. For the first node $u_j$ on that path where $\mathcal{E}_{u_j}$ holds, set $f(b_i)=u_j$. If there is no such node~$u_j$, simply set $f(b_i)=u_b$, where $u_b$ is the leaf associated with~$b_i$.

Now consider an arbitrary node $v$ for which $\mathcal{E}_{v}$ occurred. 
Let $B(v)=\{ b_{j}, b_{j+1}, \dots \}$ be the set of all associated bins, i.e., the set of bins $b_i$ that satisfy $f(b_i)=v$. 
In particular, if $f(b_i)=u_b$, where $u_b$ is the leaf associated with~$b_i$, then $B(u_b)=\{b_i\}$. 
\end{definition}

Let us also define the set of nodes where our process ``ends'': 
For any node~$u$, let $\END(u)$ denote the set of nodes $v$ such that (1) $v$ is a descendant of~$u$, and either (2a)  $\mathcal{E}_v$ occurs, or (2b) $v$ is a leaf, and $\mathcal{E}_{v'}$ does not occur for any $v'$ ancestor of~$v$. 
Recall that by definition, once $\mathcal{E}_v$ occurs, we stop the balls-infiltration process at this point, so $\mathcal{E}_{v'}$ never happens for any descendant of~$v$. That is, the set $\END(u)$ denotes exactly the nodes~$v$ in the subtree of~$u$ where the process stops, either due to $\mathcal{E}_v$, or when we reached a leaf.

We now continue to bounding the total normalized cost.
\begin{lemma}\label{lem:amortize}
Assuming no bad event happened before $N_0$, we can bound the cost of the root of the tree of deferred randomness as follows
\[
\COST(\Root)
\leq \sum_{v\in\END(\Root)} 8\times
\begin{cases}1 & v\text{ is a leaf} \\  \ell_v^2 & \text{otherwise}\end{cases}
\]
\end{lemma}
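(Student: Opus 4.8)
The plan is to bound $\COST(\Root)$ by charging, for every bin $b_i$, the cost of sorting its elements to the node $f(b_i)\in\END(\Root)$, and then to show that the total charge at each $v\in\END(\Root)$ is at most $8$ (if $v$ is a leaf) or $8\ell_v^2$ (otherwise). First I would recall the cost accounting: by Lemma~\ref{lem:loose-upperbound}, sorting the elements of $s$ consecutive bins costs at most $4s^2c^2n/t^2$ comparators, i.e.\ normalized cost $s^2$; and the buckets produced by the algorithm correspond exactly to maximal stretches of consecutive bins that contain no pivot (ball), padded by at most one extra bin on each side. So $\COST(\Root)$ is at most the sum over buckets of $(\text{number of bins in that bucket})^2$. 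The key observation is that every bin in a given bucket, by Corollary~\ref{cor:atLeastOne} applied along the root-to-bin path, has its first ``bad'' node $f(b_i)$; and if $f(b_i)=v$ is internal with $\mathcal{E}_v$, then \emph{all} bins in $B(v)$ are consecutive (they form the leaf-set of the subtree below one deficient child, hence a contiguous block), so a single bucket's bins are covered by at most a bounded number of blocks $B(v)$ — this is what lets the dependency between neighboring nodes ``balance out''.

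The main steps, in order, are: (1) show $B(v)$ is an interval of consecutive bins, and that the bins of any one bucket are contained in the union of at most three consecutive such intervals $B(v_1),B(v_2),B(v_3)$ (the middle one giving the empty stretch, the outer two giving the padding bins) — here I would use that $\mathcal{E}$ stops the process, so the $B(v)$'s over $v\in\END(\Root)$ partition all bins; (2) bound the size of a bucket by $|B(v_1)|+|B(v_2)|+|B(v_3)|$ and hence its normalized cost by $(|B(v_1)|+|B(v_2)|+|B(v_3)|)^2 \le 3(|B(v_1)|^2+|B(v_2)|^2+|B(v_3)|^2)$ by Cauchy–Schwarz; (3) sum over all buckets: each interval $B(v)$ is charged by at most a constant number of buckets (an interval can serve as padding for at most its two neighboring buckets and as the empty stretch for at most one), so $\COST(\Root)\le \sum_{v\in\END(\Root)} C\,|B(v)|^2$ for an absolute constant $C$; (4) finally bound $|B(v)|$: if $v$ is a leaf then $B(v)=\{b_i\}$ and $|B(v)|=1$; if $v$ is internal with $\mathcal{E}_v$, then $|B(v)|$ is the number of leaves under the deficient child, which is at most the number of leaves under $v$, which by Corollary~\ref{cor:atLeastOne} (in the form ``no bad event strictly above $v$ on the relevant paths'') is at most $\ell_v$; hence $|B(v)|^2\le\ell_v^2$. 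Collecting constants and being slightly generous in steps (2)–(3) to absorb $C$ into the stated $8$ gives the claimed bound.

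The hard part will be step (1) together with the bookkeeping in step (3): one must argue carefully that the $B(v)$'s really do partition the bin set into consecutive intervals (this uses that $\mathcal{E}_v$ halts the process in the \emph{whole} subtree of $v$, so $f$ is well-defined and the first-bad-node structure is laminar), and that each bucket overlaps only $O(1)$ of these intervals despite the padding bins — the subtlety is that a padding bin of one bucket may itself lie inside a large $B(v)$, and one has to make sure this does not create an unbounded number of charges to a single $v$. I would handle this by noting that a bucket is determined by two consecutive pivots $p_j,p_{j+1}$, its bins are an interval, and the interval is covered by the (at most) one $B(v)$ containing its first bin, the (at most) one containing its last bin, and any $B(v)$'s strictly in between — but any $B(v)$ strictly inside a pivot-free stretch must be entirely pivot-free and maximal, forcing it to coincide with the whole middle block, so there is at most one. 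The constant $8$ in the statement is loose enough that the precise value of $C$ from Cauchy–Schwarz and the $O(1)$ overlap does not need to be tracked tightly; it suffices that it is absorbed, which I would verify at the end against the normalization factor $4c^2n/t^2$ already absorbed into $\COST$.
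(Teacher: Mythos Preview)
Your overall strategy---partition the bins into the intervals $B(v)$ over $v\in\END(\Root)$, bound each bucket's normalized cost by a sum of $|B(v)|^2$ terms, and then invoke Corollary~\ref{cor:atLeastOne} to get $|B(v)|\le\ell_v$---matches the paper's. The gap is in step~(3). Your claim that ``each interval $B(v)$ is charged by at most a constant number of buckets'' is false: if $B(v)$ contains $k$ balls (and $k$ can be as large as $\Theta(\log n)$ under the conditioning), then $k+1$ buckets intersect $B(v)$, and the $k-1$ of them lying entirely inside $B(v)$ each have their first bin, last bin, and ``empty stretch'' all inside $B(v)$, so they all charge $v$. Your parenthetical justification (``an interval can serve as padding for at most its two neighboring buckets'') overlooks precisely these internal buckets. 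With your accounting as written, a leaf $v$ would receive charge $\Theta(k)$ rather than $O(1)$, which loses a $\log n$ factor in the final bound and does not yield the constant~$8$ in the lemma.

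The paper sidesteps this by working from the $v$ side rather than the bucket side: for each $v$ it bounds the \emph{total} cost of all buckets intersecting $B(v)$ at once. The key point you are missing is that $B(v)$ always contains at least one ball (for leaves this is Lemma~\ref{lem:allgood}; for internal $v$ it is Corollary~\ref{cor:atLeastOne}). That ball splits the buckets touching $B(v)$ into two contiguous pieces of bin-length at most $|B(v_{\Left})|+|B(v)|$ and $|B(v)|+|B(v_{\Right})|$, giving cost at most $2|B(v_{\Left})|^2+4|B(v)|^2+2|B(v_{\Right})|^2$. Summing over $v$ (which over-counts buckets, but over-counting is harmless for an upper bound) and observing that each $|B(w)|^2$ appears once with coefficient $4$ and twice with coefficient $2$ yields the $8$. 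Incidentally, the same ``every $B(v)$ has a ball'' observation shows there is no middle block at all in your step~(1): the count is at most two intersecting intervals, not three, and your maximality argument for the middle block is incorrect though the conclusion is harmless. In step~(4), $B(v)$ is all leaves under $v$, not just under the deficient child, but your bound $|B(v)|\le\ell_v$ remains valid by Corollary~\ref{cor:atLeastOne}.
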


\begin{proof}
Consider some $v\in \END(\Root)$.
If $v$ is a leaf, we know by Lemma~\ref{lem:allgood}, that the single bin associated with this leaf, $B(v)=\{b_i\}$, has at least one ball in it.
Otherwise, $\mathcal{E}_{v}$ occurs and in this case recall we assume
nothing about the distribution of the~$\ell_v\ge 1$ balls (Corollary~\ref{cor:atLeastOne}) to the associated  bins $B(v)=\{b_j,\ldots,b_{j+|B(v)|-1}\}$. 
In either case, the normalized cost for sorting the elements in the buckets that correspond to~$B(v)$ depends on the locations of the closest pivot to the right of~$B(v)$ and the closest pivot to the left of~$B(v)$.
To this end consider $v_{\Left} = f(b_{j-1})$ and $v_{\Right}= f(b_{j+|B(v)|})$.
We have $B(v_{\Left})= \{ b_i \mid f(b_i)=v_{\Left}\}$ and $B(v_{\Right})=\{ b_i \mid f(b_i)=v_{\Right}\}$.
To deal with the edge cases,
if $j=1$ or $j+|B(v)|-1=\sqrt{n}/c$ we define $B(v_{\Left})=\emptyset$ or $B(v_{\Right})=\emptyset$, respectively.

\begin{tcolorbox}[boxrule=1pt,leftrule=3mm]
To illustrate the above notion, refer to  Figure~\ref{fig:example}. Say we wish to express the cost for sorting the bins associated with at the leaf~$u_{12}$. Note that  $\mathcal{E}_{u_{12}}$ occurs and $u_{12}\in \END(u_1)$. 
We have $B(u_{12})=\{b_9,b_{10}\}$.
Further, $u_{\Left}=u_5$  and $u_{\Right}=b_{11}$. 
We know that there is at least one ball in $B(u_{12})=\{b_9,b_{10}\}$, one in 
$B(b_{11})=\{b_{11}\}$ and at least one in $B(u_5)=\{b_5,b_6,b_7,b_8\}$. 
\end{tcolorbox}

As argued above,
there is at least one ball in the each of $B(v), B(v_{\Left}), B(v_{\Right})$. For the edge case where $B(v_{\Left})=\emptyset$, or $B(v_{\Right})$, technically there is no ball; note, however, that balls represent the borders of the buckets of elements to be compared, so we  can safely assume that the bucket ends at the last (or first) element of the rightmost (or leftmost) bin of~$B(v)$, in this case.

If so, the normalized cost for sorting all the  buckets that include elements from the bins~$B(v)$ is upper bounded by 
\[
(|B(v_{\Left})|+|B(v)|)^2 + (|B(v)|+|B(u_{\Right})|)^2 \leq 2|B(v_{\Left})|^2 + 4 |B(v)|^2 +  2|B(v_{\Right})|^2\text{.}
\]
That is, to maximize the cost we assume the very unlikely event that all the balls in $B(v_{\Left})$ reside in its left-most bin and the balls of $B(v_{\Right})$ reside in its right-most bin. There is also (at least one) ball in $B(v)$, so we get at least two buckets of sizes at most
$|B(v_{\Left})|+|B(v)|$ and
$|B(v)|+|B(v_{\Right})|$. Any other arrangement of balls can only lead to a lower cost.

While the cost for $B(v)$ depends on its neighbors, the above suggest that we can ``amortize'' the cost-per-node to be at most~$8\ell_v^2$ on average. 
That is, if we sum over \emph{all}
the nodes of the set $\END$,  
the additional terms for the neighboring nodes will balance out.

Formally, 
\begin{align*}
    \COST(\Root) &\le  
    \sum_{v\in\END(\Root)} 
    2|B(v_{\Left})|^2 + 4 |B(v)|^2 +  2|B(v_{\Right})|^2 \\
    &\le     
     \sum_{v\in\END(\Root)} 8|B(v)|^2 \text{,}
\intertext{This transition follows since each term $|B(v)|^2$ appears three times in the summation: twice with a factor of~2, when summing over $v_\Left$ and $v_\Right$, and once with a factor of~$4$, when summing over~$v$.
Note that, for any $v\in \END(\Root)$,  $v_\Left$, and $v_{\Right}$ are also in $\END(\Root)$, by their definition.}
    &\le  \sum_{\substack{v\in\END(\Root) \\ v\text{ is not a leaf}}}\!\!\!\!\!8\ell_v^2 \ +  \sum_{\substack{v\in\END(\Root) \\ v\text{ is a leaf}}}\!\!\!\!\!8
    \text{,}
\end{align*}
The last transition holds by applying Corollary~\ref{cor:atLeastOne} to all non-leaves, and recalling that a leaf contains a single bin.
\end{proof}

Next, we bound the probability that the bad event $\mathcal{E}_u$ happens at the node~$u$, as a function of the number of balls arriving to it.
\begin{lemma}
\label{lem:boundOnEu} 
Fix a node~$u$ 
and assume that $\ell_u$ balls are in~$u$. Then,
\[
\Pr{\mathcal{E}_u \mid \ell_u = \ell} \le 1/\ell^2\text{.}
\]
\end{lemma}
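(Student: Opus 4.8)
The plan is to recognize that, conditioned on $\ell_u=\ell$, the event $\mathcal{E}_u$ is nothing but a two-sided large-deviation event for a single $\mathrm{Binomial}(\ell,\tfrac12)$ random variable, and then estimate its probability with Hoeffding's inequality (Theorem~\ref{thm:hoe}).

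\emph{Step 1 — reformulating $\mathcal{E}_u$.} Conditioned on $\ell_u=\ell$, the deferred-randomness process routes each of the $\ell$ balls at $u$ independently to the left child with probability $\tfrac12$, and this draw is independent of everything that happened above $u$. Let $X=\sum_{i=1}^{\ell}B_i$, where $B_i$ indicates that the $i$-th ball goes left, so $X\sim\mathrm{Binomial}(\ell,\tfrac12)$, $\E{X}=\ell/2$, and the right child receives $\ell-X$ balls. By definition $\mathcal{E}_u$ occurs iff $X<h(\ell)$ or $\ell-X<h(\ell)$, with $h(\ell)=\tfrac{\ell}{2}-g(\ell)$; the first inequality says $X-\tfrac\ell2<-g(\ell)$ and the second rearranges to $X-\tfrac\ell2>g(\ell)$. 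Hence, on the event $\ell_u=\ell$, $\mathcal{E}_u$ coincides with $\bigl\{\,|X-\E{X}|>g(\ell)\,\bigr\}$, so
\[
\Pr{\mathcal{E}_u\mid \ell_u=\ell}=\Pr{\bigl|X-\E{X}\bigr|>g(\ell)}\le \Pr{\bigl|X-\E{X}\bigr|\ge g(\ell)}.
\]

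\emph{Step 2 — applying Hoeffding and finishing.} Since $X$ is a sum of $\ell$ independent $\{0,1\}$-valued variables, Hoeffding's inequality gives $\Pr{|X-\E{X}|\ge a}\le 2\exp(-2a^2/\ell)$. Substituting $a=g(\ell)=\sqrt{2\ell\log\ell}$ makes $2a^2/\ell=4\log\ell$, and therefore
\[
\Pr{\mathcal{E}_u\mid \ell_u=\ell}\le 2\exp(-4\log\ell).
\]
Because all logarithms are base $2$ and $\ln 2<1$, we have $4\log\ell=4\ln\ell/\ln2>4\ln\ell$, whence $2\exp(-4\log\ell)<2\exp(-4\ln\ell)=2/\ell^{4}\le 1/\ell^{2}$ for every integer $\ell\ge 2$. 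The only remaining value is $\ell=1$: then one child deterministically receives $0<h(1)=\tfrac12$ balls, so $\Pr{\mathcal{E}_u\mid\ell_u=1}=1=1/1^2$ and the bound holds with equality (and it is vacuous for $\ell_u=0$).

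\emph{Main obstacle.} There is no deep obstacle here: the claim reduces to a single application of a standard concentration bound, with $g$ chosen precisely so that the Hoeffding exponent equals $4\log\ell$, comfortably more than the $2\log\ell$ that the target $1/\ell^{2}$ demands. The two points needing a moment of care are (i) identifying $\mathcal{E}_u$ with the symmetric deviation event $\{|X-\ell/2|>g(\ell)\}$ (either of the two children may be the under-loaded one), and (ii) verifying that the leading factor $2$ from the two-sided bound is absorbed for all $\ell\ge 2$, which is exactly why the boundary values $\ell=1$ (and the vacuous $\ell=0$) are dispatched separately.
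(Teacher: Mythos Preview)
Your proof is correct and follows essentially the same approach as the paper: both recognize that, given $\ell_u=\ell$, the event $\mathcal{E}_u$ is a tail event for a $\mathrm{Binomial}(\ell,\tfrac12)$ variable and apply a standard concentration inequality. The only differences are cosmetic: you use Hoeffding's additive bound (yielding exponent $4\log\ell$) whereas the paper uses a multiplicative Chernoff bound (yielding exponent $2\log\ell$), and you are more explicit about the two-sided nature of $\mathcal{E}_u$ and the boundary cases $\ell\in\{0,1\}$, which the paper glosses over.
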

\begin{proof} 
Assume $u$ is assigned with $\ell_u=\ell$ balls and that $\mathcal{E}_{u}$ holds. This means that one of $u$'s children, say~$u_0$, receives less than $\ell/2 - g(\ell)$ balls, by definition.
It follows that the other children, $u_1$, must have received more than $\ell/2 + g(\ell)$ balls.

We can now bound the probability that $u_1$ receives that many balls, when balls are split uniformly at random among children.
We use a Chernoff bound with parameters $\mu=\ell/2$, $\delta=g(\ell)/\mu$ on the $\ell$ independent random variables indicators for the event that the $i$-th ball reaches~$u_1$, for $i=1,\ldots, \ell$. 
\begin{align*}
\Pr{\mathcal{E}_{u} \mid \ell_u=\ell  } \leq 
\exp\left(- \frac{\delta^2 \mu}{2}\right)
= \exp\left(- \frac{g^2(\ell)}{2\mu}\right)
<  %
1/\ell^2,
\end{align*}
where we used that $g(\ell) \triangleq \sqrt{2 \ell \log \ell}=\sqrt{4 \mu \log \ell}$.
\end{proof}
The bound in Lemma~\ref{lem:boundOnEu} is crucial: 
The impact of the occurrence of the event~$\mathcal{E}_u$ is a normalized cost of~$O(\ell_u^2)$. 
The Lemma shows that this high cost happens with probability~$1/\ell_u^2$, which mitigates the total cost once averaging over all the nodes.
All that is remaining in order to bound the total cost with high probability, 
is to show that the number of bad events that occur throughout the process is not too large.
To this end, we will use  
Hoeffding's inequality, to bound 
the normalized cost imposed by each layer of the tree, with high probability in~$n$. 
\begin{theorem}[Hoeffding~\cite{hoeffding63}]
\label{thm:hoe}
Let $(X_i)_{i=1}^n$ be independent random variables satisfying $a_i\le X_i\le b_i$, for $1\leq i \leq n$. Consider the sum $X=\sum_i X_i$. We have
$\Pr{X\geq \E{X} + \lambda} \leq \exp\left( \frac{-2\lambda^2}{\sum_i (b_i-a_i)^2} \right)$.
\end{theorem}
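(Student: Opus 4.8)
This is the classical Hoeffding inequality, quoted here as a known result (attributed to \cite{hoeffding63}); the proof I would give is the standard exponential-moment argument. The first step is the generic Chernoff move: for any parameter $s>0$, Markov's inequality applied to the nonnegative variable $e^{s(X-\E{X})}$ gives $\Pr{X \ge \E{X}+\lambda} \le e^{-s\lambda}\,\E{e^{s(X-\E{X})}}$, and since the $X_i$ are independent the moment generating function factorizes as $\E{e^{s(X-\E{X})}} = \prod_{i=1}^n \E{e^{s(X_i-\E{X_i})}}$. Thus everything reduces to controlling the MGF of a single centered, bounded random variable.

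The technical heart is Hoeffding's lemma: if $Y$ satisfies $\E{Y}=0$ and $Y\in[a,b]$ almost surely, then $\E{e^{sY}} \le \exp\!\big(s^2(b-a)^2/8\big)$. I would prove it via convexity: since $y\mapsto e^{sy}$ is convex, for $y\in[a,b]$ we have $e^{sy}\le \tfrac{b-y}{b-a}e^{sa}+\tfrac{y-a}{b-a}e^{sb}$; taking expectations and using $\E{Y}=0$ bounds $\E{e^{sY}}$ by $e^{\varphi(s)}$ for an explicit function $\varphi(s)=sa+\log\!\big(\tfrac{b}{b-a}-\tfrac{a}{b-a}e^{s(b-a)}\big)$. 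One then verifies $\varphi(0)=\varphi'(0)=0$ and $\varphi''(s)=p(s)\big(1-p(s)\big)(b-a)^2\le (b-a)^2/4$ for a suitable probability $p(s)\in(0,1)$, so Taylor's theorem with the Lagrange remainder gives $\varphi(s)\le s^2(b-a)^2/8$.

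Applying the lemma to each $Y_i=X_i-\E{X_i}$, which ranges over an interval of length $b_i-a_i$, yields $\Pr{X \ge \E{X}+\lambda}\le \exp\!\big(-s\lambda+\tfrac{s^2}{8}\sum_{i=1}^n (b_i-a_i)^2\big)$; minimizing this quadratic-in-$s$ exponent at $s=4\lambda/\sum_i (b_i-a_i)^2$ produces exactly $\exp\!\big(-2\lambda^2/\sum_i (b_i-a_i)^2\big)$, the claimed bound. I expect the only non-routine ingredient to be Hoeffding's lemma, and within it the uniform second-derivative bound $\varphi''(s)\le (b-a)^2/4$ — this is precisely where the constant $1/8$, and hence the factor $2$ in the final exponent, originates; the Chernoff reduction, the independence factorization, and the closing optimization over $s$ are all mechanical.
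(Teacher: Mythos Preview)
Your proof is correct and is the standard textbook derivation of Hoeffding's inequality. The paper does not prove this statement at all: it is quoted as a known result from \cite{hoeffding63} and used as a black-box tool in the analysis, so there is no ``paper's own proof'' to compare against.
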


In order to complete the proof of the main theorem of this section, we analyze a slightly different version of the ball-in-bins process, described as follows.
Whenever $\mathcal{E}_u$ holds at some node~$u$, we still assume a normalized cost of~$O(\ell_u^2)$ for that node, but then, instead of stopping the process, we allow the  adversary to reassign the balls to the children of~$u$ in an arbitrary, under the restriction that $\mathcal{E}_u$ does not hold anymore. The adversary could cause $\mathcal{E}_v$ in the subtree of~$u$, and thus the cost of this non--stopping process can only be larger than the cost of the original process, that stops.
We are now ready to complete the proof of the main theorem.

\begin{proof}[Proof of Theorem~\ref{thm:main-rand-informal}]
We again consider the lower levels of the tree of deferred randomness, and apply  Hoeffding's inequality bound cumulatively on all the nodes on each level~$N_j$ for $j\in J$ where $J=\{0,\ldots, b-1$\}, i.e., all inner levels; recall that $b=\log(8\log(n)/c)$.
For $j \in \{0,\ldots, b\}$, let $V_j = \{ v \mid v \text{ in level } N_j\}$, be the set of nodes of level~$N_j$. 
We have that 
$|V_j| = \frac{\sqrt{n}}{8\log n} 2^{j}$ since there are 
$2^{T-b} = 2^{\log(\sqrt{n}/c)-\log(8\log(n)/c)}= \frac{\sqrt{n}}{c} \cdot \frac{c}{8\log n}
= \frac{\sqrt{n}}{8\log n}$  nodes in layer~$N_0$,  
the tree is binary, and we only look at the lower $b$~levels of the tree.

By Lemma~\ref{lem:amortize},
\begin{align}\label{eq:kettle}
\COST(root) 
\le \sum_{\substack{v\in\END(\Root) \\ v\text{ is not a leaf}}}\!\!\!\!\!8\ell_v^2 \ +  \sum_{\substack{v\in\END(\Root) \\ v\text{ is a leaf}}}\!\!\!\!\!8 
\ \le \ 
 8 \sum_{j \in J}\ \sum_{\substack{v\in\END(\Root) \\ \wedge v\in  V_j}}\ell_v^2 + 8|V_b|,
\end{align}
so we can bound the cost by analyzing each inner layer~$j\in J$ separately.
For any node~$v$, we define the cost of~$v$ by
\[
X_v =\begin{cases}
\ell_v^2 & \text{if $\mathcal{E}_v$ occurs}  \\  %
0 &  \text{otherwise}
\end{cases}. 
\]
Note that the value of $X_v$ dominates the real cost, since in this variant of the process that does not stop, we might have additional events $\mathcal{E}_v$ in nodes below nodes that belong to~$\END(v)$.
Then, the \emph{total} normalized cost imposed by the nodes in level~$j$ is given by $X^{(j)}=\sum_{v\in V_j}X_v$.

Since balls never get lost, the total number of balls, in any specific layer, is fixed,~$c\sqrt{n}$.
Further, recall that this analysis is conditioned on the high-probability event, proven at the beginning of this section, that each node in~$N_0$ begins the process with a number of balls in the range~$[\log n, 15\log n]$. 
The same upper bound holds also for any  node~$v$ in a level below~$N_0$, thus $\ell_v \le 15\log n$ balls, implying $0\le X_v \leq (15\log n)^2$. 

We can now compute the expectation of these variables, via 
Lemma~\ref{lem:boundOnEu}. We can assume $\mathcal{E}_v$ did not happen in levels above the current level~$j$ we bound (due to our altered process), thus Lemma~\ref{lem:allgood} suggests every node obtains at least one ball, $\ell_v\ge1$. Then,
\begin{align*}
\E{X_v} &\le 0\cdot \Pr{\lnot \mathcal{E}_v} + \max_\ell \{\E{X_v \mid \mathcal{E}_v , \ell_v=\ell}\cdot  \Pr{\mathcal{E}_v \mid \ell_v=\ell} \mid \ell \in [1, 15\log n] \} \\
&< 0+  \max_\ell\{\ell^2 \cdot (1/\ell)^2 \mid \ell \in [1, 15\log n] \}
\leq 1\text{,}
\end{align*}
and thus, for any $j\in J$, by linearity of expectations, $\E{X^{(j)}} \le |V_j| $.
We can now use Hoeffding's inequality (Theorem~\ref{thm:hoe}) on the $X_v$'s of some fixed level~$j$. Set $\lambda = |V_j|$, 
\begin{align*}
\Pr{X^{(j)}\geq 2 |V_j|} 
    &\leq \Pr{X^{(j)}\geq \E{X^{(j)}}+\lambda} 
     \le \exp\left( \frac{-2\lambda^2}{\sum_{v\in V_j} ((15\log n)^2-0)^2} \right)    \notag  \\
    &\le \exp\left( \frac{-2|V_j|^2}{|V_j| (15\log n)^4} \right)
    < \frac{1}{n^2}.
 \end{align*}
where the last transition holds for a sufficiently large~$n$, recalling that $|V_j|\ge \frac{\sqrt{n}}{8\log n}$. 
Finally, we can take a union bound on the $|J|=\log(8\log(n)/c)-1 < n$ lower levels of the tree.
Recall Eq.~\eqref{eq:kettle}; we get that, with high probability,
\begin{align*}
\COST(\Root) 
&\le
 8|V_b| + 8 \sum_{j \in J}\ \sum_{\substack{v\in\END(\Root) \\ \wedge v\in  V_j}}\ell_v^2 
\\
& \leq 8|V_b| + 8\sum_{j \in J }\sum_{v\in V_j} X_v =  8|V_b| + 8\sum_{j \in J }X^{(j)}\\
&\leq 8|V_b| + 8\sum_{j \in J } 2|V_j| = O(\sqrt{n}).
\end{align*}
The above \emph{normalized} cost requires multiplying with a factor of~$4c_1^2 n/t^2$ to get the real cost (measured in the number of comparators). This yields the desired claim and completes the proof.
\end{proof}

\section{Forming a sorted array out of comparators output}
\label{sec:FromCompsToSorting}
One crucial step is missing in converting the findings presented in this work into a ``standard'' sorting algorithm, where one obtains a sorted array of the element: combining the results of the separate $t$-comparators into a single total ordering of the $n$ elements. 
Clearly, once  we know the relative order of any two elements, we have all the information we need in order to generate a sorted array of all the element. 
For completeness, in this section we provide a brief description of how to complete the task of sorting out of the partial outcomes.

We assume that the input is given as the set $A=\{a_1,\ldots,a_n\}$,
and assume we have arrays 
$\RANK[1,...,n]$ and $\OUT[1,...n]$, all initialized to~$0$.

Consider the case of optimal deterministic sorting in~$d=1$ rounds, where each two elements get compared exactly by a single comparator (e.g., via Theorem~\ref{thm:minimal-sorting-params} or Lemma~\ref{lem:composition}).
Each $t$-comparator is given a set of elements say, $a_{i_1},\ldots,a_{i_t}$ and outputs their total ordering, say, it outputs a permutation of the elements (i.e., their indices) $j_1,\ldots,j_t$ such that (i) $\{j_1,\ldots,j_t\}=\{i_1,\ldots i_t\}$ and (2) $\val(a_{j_1}) \le \cdots \le \val(a_{j_t})$. For each $j_k \in \{j_1,\ldots,j_t\}$, update 
$\RANK[j_k] \mathrel{+\!=} k-1$, 
that is, increase the ranking of $a_{j_k}$ by the number of element smaller than it in this comparison.
Once we update the same for all the comparators, 
$\RANK[i]$ holds the rank of $a_i$ within the $n$ elements, for all $a_0,\ldots, a_n$.
Indeed, each element is compered exactly once with each other element, so $\RANK[i]$ holds exactly the number of elements smaller than $a_i$ overall.
Obtaining the output (sorted array) can be done in one additional pass on $\RANK[i]$, namely, 
for $i=1,\ldots, n$, write the element $a_i$ at index $\RANK[i]$ of the output array.
That is, 
$\OUT[\RANK[i]] \gets a_i$.
One can employ a similar idea in order to perform distributed sorting, where each $t$-comparator is a separate computer, and $\OUT,\RANK$ held as shared memory, with the cost of serializing the different writes per cell of the different machines. In our optimal deterministic sorting (Theorem~\ref{thm:main}) for $n=t^2$, we use $t^2+t$ different $t$-comparators, where each element is compared by exactly $t$ such comparators. This means that each cell in the shared memory is written by exactly $t$ machines.

\medskip
Employing a similar approach on the randomized sorting (Algorithm~\ref{alg:rand-general}) is partially possible. On the one hand, the buckets of the second round contain disjoint elements and can be sorted separately and then merged to one array in a trivial way. 
On the other hand, the sorting of Lemma~\ref{lem:loose-upperbound} might compare the same pair of elements several times, thus the above approach might not fit. 
In the general case, one can use some sort of mergesort, namely, go over the $t$-comparators one by one (or better, $t$ instances at a time), and merge each new $t$ sorted element to the sorted array of all previous comparators. Then, one can eliminate elements that appear multiple times.

\section{Conclusions and Future Directions}
\label{sec:conclusion}
In this work we studied the fundamental task of sorting $n$ elements with $t$-comparators, where the sorting algorithm is limited to a small number of interactive rounds. This setting, while interesting on its own, fits in particular to distributed and parallel settings where interactive communication is very costly while computation resources are moderately costly.

We dealt with both deterministic and randomized algorithms. In the deterministic case, we established connections between optimal sorting algorithms in one round and combinatorial design theory. While this connection allows optimal sorting for certain value of $n,t$, it also suggests the impossibility for other values (e.g., $t=6$). 
The question of the values of $n,t$ for which optimal sorting exists is isomorphic to the long-standing combinatorial question of deciding the values of $n,t$ for which the Steiner system $S(2, t, n)$ exists. 
We hope that an algorithmic approach could shed more light on this open question, e.g.\@ through the construction of composition theorems similar to Lemma~\ref{lem:composition},  or through explicit constructions for special cases.

Another interesting question is how the optimal number of $t$-comparators scales with the number of rounds. This topic was thoroughly examined in the literature for~$t=2$, and we extend the discussion to larger values of~$t$. In the same vein, in the randomized setting, we design algorithms that use only two rounds but utilize the same asymptotic number of comparators as the optimal $O(\log_t n)$-round $t$-quicksort algorithm.

We believe our findings might be useful in other distributed settings. For instance, in the Massively Parallel Computation model (MPC), where each worker machine performs the actions of one $t$-comparator, and all machines act in parallel. 
While our algorithm for $d=1$ rounds requires a large number of machines (i.e., more than $n/t$), it might make sense to consider a larger amount of rounds and how it tradeoffs the number of machines in use. For instance, could a sublinear number of machines be sufficient for $d=O(1)$ rounds?

\section*{Acknowledgments}
Research supported in part by the United States-Israel Binational Science Foundation (BSF) through Grant No.\@ 2020277.
R. Gelles would like to thank 
Paderborn University and CISPA---Helmholtz Center for Information Security 
for hosting him while part of this research was done. 
The authors would also like to thank the anonymous reviewers for multiple helpful comments.

\appendix
\section*{APPENDIX}

\section{A minimal construction for \texorpdfstring{$n=t^2$}{n=t\texttwosuperior} and \texorpdfstring{$t$}{t} power of a prime}
\label{APP:minConstrustions}
In this part we focus on the special case of $n=t^2$. This case has been extensively studied in the literature (see, e.g., \cite{SS86,SSS86} and many followup work).

Our construction fits the case where $t$ is a prime number. In this case, simple shift-permutations of the elements cover the entire element-space without repeating any element twice. This result can be extended to any number $t$ which is a power of a prime, taking leverage on the finite field $GF(t)$. 
To ease the read, we provide here the simple and more intuitive construction of $t$ being a prime. 
Later on,
we show that this result extends to $t$ power of a prime in a straightforward manner.
\begin{theorem}
\label{thm:t-prime}
Suppose $t$ is prime and $n=t^2$. Then, sorting $n$ elements can be done using $t+t^2$ many $t$-comparators.
\end{theorem}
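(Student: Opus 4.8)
The plan is to exhibit $t+t^2$ explicit subsets of the $n=t^2$ elements, each of size exactly $t$, such that every pair of elements occurs together in at least one of them. By the argument in the proof of Corollary~\ref{cor:triv-lowerBound}, once every pair has been compared the outputs of the corresponding $t$-comparators pin down a unique total order (Definition~\ref{def:sorting}), so these comparators sort $n$ elements in a single round. Concretely, I would identify the elements with the points of the grid $\Z_t\times\Z_t$ and take as comparators the ``lines'' of the affine plane over $\Z_t$: for each $c\in\Z_t$ the vertical set $V_c=\{(c,y):y\in\Z_t\}$, and for each pair $(a,b)\in\Z_t\times\Z_t$ the sloped set $L_{a,b}=\{(x,\,ax+b\bmod t):x\in\Z_t\}$. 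This gives $t$ vertical sets together with $t^2$ sloped sets, i.e.\ $t+t^2$ comparators, each of cardinality $t$. Equivalently, as the introduction previews, for a fixed slope $a$ the $t$ sets $L_{a,0},\dots,L_{a,t-1}$ are cyclic ``shift-permutations'' of one another and partition the grid, so we are simply listing the $t+1$ parallel classes of the plane.

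The only thing to verify is that every two distinct points $p=(x_1,y_1)$ and $q=(x_2,y_2)$ lie in a common comparator. If $x_1=x_2$ then $y_1\neq y_2$, and both points lie in $V_{x_1}$. If $x_1\neq x_2$, I would set $a=(y_1-y_2)(x_1-x_2)^{-1}$ and $b=y_1-ax_1$, all arithmetic taken in $\Z_t$; here the nonzero difference $x_1-x_2$ is invertible \emph{precisely because $t$ is prime}, so $\Z_t$ is a field, $a$ and $b$ are well defined, and by construction $p,q\in L_{a,b}$. Hence all $\binom n2$ pairs are compared, and the $t+t^2$ comparators sort the $n$ elements; Section~\ref{sec:FromCompsToSorting} describes how to assemble the sorted array from the comparator outputs.

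I would also record, for context, that the choices of $a,b$ above are forced and $V_{x_1}$ is the unique vertical set through $p$, so in fact every pair is compared \emph{exactly} once and every comparator is fully used: the construction realizes a Steiner system $S(2,t,t^2)$ (an affine plane of order $t$), i.e.\ it is \emph{minimal} in the sense of Theorem~\ref{thm:equivalence:sortIsSteiner} and Definition~\ref{def:steiner}, and it matches the lower bound of Corollary~\ref{cor:triv-lowerBound}. There is no genuine obstacle in the argument; the only step that uses the hypothesis is the invertibility of nonzero residues modulo $t$, which is exactly why the theorem is stated for prime $t$ and why the subsequent extension to prime powers simply replaces $\Z_t$ by the finite field $GF(t)$.
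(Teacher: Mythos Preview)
Your proof is correct and is essentially the same construction as the paper's: the paper arranges the $t^2$ elements into a $t\times t$ matrix~$M_0$, compares each row, and then for $c=0,\dots,t-1$ compares the columns of the matrix $M_c$ obtained by cyclically shifting row~$j$ of~$M_0$ by $cj$ positions; under the identification $a_{kt+r}\leftrightarrow(k,r)\in\Z_t\times\Z_t$, the rows are exactly your vertical sets and the columns of $M_c$ are exactly your lines $L_{-c,b}$, with the same invertibility argument over~$\Z_t$ in the verification. The only difference is presentational---you phrase it as the affine plane $AG(2,t)$ with lines $y=ax+b$, the paper phrases it via row shifts---and your added remark that each pair is compared exactly once (minimality) is also correct.
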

Note that this is optimal by Corollary~\ref{cor:triv-lowerBound} as $t+t^2 = \binom{t^2}{2}/\binom{t}{2}$.

\begin{proof}
Let us rename the $n$ elements
$A=\{a_0,\ldots, a_{n-1}\}$, to ease the notations.

    We first split the $n=t^2$ elements in $A$
    into disjoint subset $S_1,\ldots, S_t$ and compare the elements within each subset. This utilizes $t$ comparators.

    Let $M_0 \in A^{t \times t}$ be the $t\times t$ matrix whose $i$-th row is $S_i$ (in some arbitrary order; suppose the natural order by their index, i.e., $a_0, a_1, a_2 \ldots$, as portrayed in Figure~\ref{fig:M_i-example}). 
    We further construct $t-1$ additional matrices, $M_1,\ldots,M_{t-1}$, where each row in $M_i$ is constructed from $M_0$ by rotating its $j$-th row $0\le j\le t-1$ by $i\cdot j$ positions to the right. 
    We think of each column of $M_i$ as $t$ elements that will be given as an input to a $t$-comparator. Hence, each $M_i$ describes $t$ such comparators. Overall, these matrices describe  $t^2$ comparators. See Figure~\ref{fig:M_i-example}, for the first two such matrices.
    \begin{figure}[ht]
        \centering
        \caption{The matrices $M_0$, $M_1$}
        \label{fig:M_i-example}
        \begin{align*}
            M_0 =\begin{pmatrix}
                \mathbf{a_0} & a_1 & \cdots & a_{t-1} \\
                \mathbf{a_{t}} & a_{t+1} & & a_{2t-2} \\
                \vdots & & \ddots \\
                \mathbf{a_{t^2-t}} &   &  & a_{t^2-1}
            \end{pmatrix}
            \qquad
            M_1 =\begin{pmatrix}
                \mathbf{a_0} & a_1 & & \cdots & a_{t-1} \\
                a_{2t-1} & \mathbf{a_t}& a_{t+1} \\
                \vdots & & & \ddots \\
                a_{t^2-t+1} &   &  &a_{t^2-1} &\mathbf{a_{t^2-t}}
            \end{pmatrix}
        \end{align*}
    \end{figure}
    
    We now argue that any two elements are compared by the above construction.
    Let $a_i,a_j$ be arbitrary two distinct elements with $0\le i,j < t^2$. Write $i=k_i \cdot t + r_i$ and $j=k_j \cdot t + r_j$ with $k_i,k_j,r_i,r_j \in \{0,1,\ldots, t-1\}$.
    If $k_i=k_j$, then $a_i,a_j$ both belong in $S_{k_i}=S_{k_j}$ and thus are being compared by one of the first $t$ comparators.

    Otherwise, let $c = (r_i-r_j)(k_j-k_i)^{-1}$ over the field $\Z_{t}$. 
    We argue that $M_c$ has a column that holds both $a_i$ and $a_j$, and thus they are compared. Indeed, $a_j$ appears in the $(c \cdot k_i+r_i \mod t)$-th column in $M_c$ and $a_j$ is in the $(c \cdot k_j+r_j \mod t)$-th column. Substituting for~$c$, it is easy to verify that over~$\Z_t$, 
    \[
        (r_i-r_j)(k_j-k_i)^{-1} k_i + r_i =
        (r_i-r_j)(k_j-k_i)^{-1} k_j + r_j,
    \]
     thus $a_i$ and $a_j$ belong in the same column (comparator) of~$M_c$.
\end{proof}

\label{app:t-prime-power}

We can now extend Theorem~\ref{thm:t-prime} to the case where $t$ is a power of a prime. The proof follows by taking a similar approach to the proof of Theorem~\ref{thm:t-prime}, but considering the finite field~$GF(t)$, when $t$ is a prime-power, instead of the field~$\Z_t$, when $t$ is prime.
This yields the following.

\begin{theorem}
\label{thm:t-prime-power}
Suppose $t=p^m$ is a prime power and let $n=t^2$. Then, sorting $n$ elements can be done using $t+t^2$ many $t$-comparators.
\end{theorem}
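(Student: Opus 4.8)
The plan is to mimic the proof of Theorem~\ref{thm:t-prime} verbatim, replacing every arithmetic operation in $\Z_t$ by the corresponding operation in the finite field $GF(t)$, which exists and has exactly $t=p^m$ elements. First I would index the $n=t^2$ elements by ordered pairs $(x,y)\in GF(t)\times GF(t)$ instead of by integers $0,\dots,t^2-1$; concretely, fix any bijection between $GF(t)$ and $\{0,1,\dots,t-1\}$ and set $a_{(x,y)}$ to be the element previously called $a_{xt+y}$. The ``row'' of an element is its first coordinate $x$ and its ``column'' is its second coordinate $y$, so the $t$ subsets $S_x=\{a_{(x,y)}\mid y\in GF(t)\}$ are exactly the rows of the matrix $M_0$. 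These $t$ subsets are handed to $t$ comparators, accounting for the first $t$ of the $t+t^2$ comparators and covering every pair of elements sharing a row.

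Next I would define, for each slope $c\in GF(t)$, the family of $t$ comparators whose inputs are the ``lines of slope $c$'': for each intercept $d\in GF(t)$, the set $L_{c,d}=\{a_{(x,\,cx+d)}\mid x\in GF(t)\}$, where $cx+d$ is computed in $GF(t)$. Each $L_{c,d}$ has exactly $t$ elements (one per row), and for fixed $c$ the sets $\{L_{c,d}\}_{d\in GF(t)}$ partition all $t^2$ elements, so slope $c$ contributes $t$ comparators; ranging over all $t$ values of $c$ gives $t^2$ comparators, for a grand total of $t+t^2$, matching the optimal bound $\binom{t^2}{2}/\binom{t}{2}$ from Corollary~\ref{cor:triv-lowerBound}. (These $L_{c,d}$ are the columns of the rotated matrices $M_c$ in the prime case; the rotation ``shift row $x$ by $cx$'' is precisely the map $(x,y)\mapsto(x,y+cx)$, which only makes sense as written when rows are indexed cyclically, i.e.\ in $\Z_t$, but is replaced here by the affine-line description which works over any field.)

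Then I would verify the covering property: given two distinct elements $a_{(x_1,y_1)}$ and $a_{(x_2,y_2)}$, either $x_1=x_2$, in which case they lie in a common row $S_{x_1}$ and are compared by one of the first $t$ comparators, or $x_1\neq x_2$. In the latter case $x_1-x_2\neq 0$ is invertible in the field $GF(t)$, so I set $c=(y_1-y_2)(x_1-x_2)^{-1}$ and $d=y_1-cx_1$; then $y_i=cx_i+d$ for $i=1,2$ (the second identity being an immediate consequence of the choice of $c$, exactly as in the displayed equation in the proof of Theorem~\ref{thm:t-prime}), so both elements lie on the line $L_{c,d}$ and are compared by the corresponding comparator. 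Since every pair is compared by at least one of the $t+t^2$ comparators, Definition~\ref{def:sorting} is satisfied and the theorem follows.

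The only place where anything genuinely new is needed relative to the prime case is the existence and usability of $GF(t)$: that for $t=p^m$ there is a field with exactly $t$ elements in which every nonzero element has a multiplicative inverse, and that it has exactly $t$ elements so that the counting $t+t^2=\binom{t^2}{2}/\binom{t}{2}$ still holds. This is entirely standard and I would simply cite it (e.g.\ \cite{Mittal15}). Hence I expect no real obstacle; the ``hard part'' is purely bookkeeping, namely translating the cyclic-shift language of Theorem~\ref{thm:t-prime} into the coordinate-free ``affine lines over $GF(t)$'' language so that the same one-line algebraic identity does the work. It is worth noting that this construction is exactly the affine plane $AG(2,t)$ of Theorem~\ref{thm:minimal-sorting-params}(1) made explicit, so an alternative proof is a single sentence invoking that theorem; I would present the direct construction anyway for self-containedness, matching the style of Appendix~\ref{APP:minConstrustions}.
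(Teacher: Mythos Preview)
Your proposal is correct and is essentially the same construction and argument as the paper's proof: the paper also splits the elements into $t$ rows plus $t^2$ ``rotated columns'' using $GF(t)$-arithmetic (via an explicit isomorphism $\varphi:\{0,\dots,t-1\}\to GF(t)$) and verifies coverage by the identical slope computation $c=(\varphi(r_i)-\varphi(r_j))(\varphi(k_j)-\varphi(k_i))^{-1}$. Your affine-line presentation $L_{c,d}=\{a_{(x,cx+d)}\}$ is a slightly cleaner packaging of the same idea---the paper's columns of $M_i$ are exactly your lines of slope $-i$---and both proofs end by noting this is the affine plane $AG(2,t)$.
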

\begin{proof}
     We first split the $n=t^2$ elements, which we denote here $A=\{a_0,\ldots, a_{n-1}\}$,
    into disjoint subset $S_1,\ldots, S_t$ (greedily, by increasing order of the elements) and compare the elements within each subset. This utilizes $t$ comparators.

    Let $M_0 \in A^{t \times t}$ be the $t\times t$ matrix whose $i$-th row is $S_i$ in increasing element order (see Figure~\ref{fig:M_0-example}). 
    \begin{figure}[ht]
        \centering
        \caption{The matrix $M_0$}
        \label{fig:M_0-example}
        \begin{align*}
            M_0 =\begin{pmatrix}
                \mathbf{a_0} & a_1 & \cdots & a_{t-1} \\
                \mathbf{a_{t}} & a_{t+1} & & a_{2t-2} \\
                \vdots & & \ddots \\
                \mathbf{a_{t^2-t}} &   &  & a_{t^2-1}
            \end{pmatrix}
        \end{align*}
    \end{figure}
    We further construct $t-1$ additional matrices, $M_1,\ldots,M_{t-1}$, where each row in $M_i$ is constructed from $M_0$ by ``rotating'' its $j$-th row $0\le j\le t-1$ by $i\cdot j$ steps. Here, ``rotating'' the row has the following meaning. Let $\varphi: \{0,1,\ldots, t-1\} \to GF(t)$ be an isomorphism between $\{0,1,\ldots, t-1\} $ and the additive group of the  finite field $GF(t)$; e.g., by concatenating the natural mapping between $\{0,1,\ldots, t-1\} $ and $Z^m_p$, i.e., $(\alpha_0,\ldots, \alpha_{m-1})\in Z^m_p\  \cong \ \sum_i \alpha_i p^i \in \{0,1,\ldots, t-1\} $, with a standard isomorphism between $Z^m_p$ and the additive  group of the  finite field $GF(p^m)$, which exists since $GF(p^m)$ can be seen as a vector space of dimension $m$ over $GF(p)=\Z_p$.  
    See for instance~\cite[Theorem~2 and Note~2]{Mittal15}

    Then, rotation by a single step of some $t$-ary vector (i.e., a row) means placing the element of index $i$ in index $\varphi^{-1} \bigl(\varphi(i)+\varphi(1)\bigr)$.
    More generally, rotation by $\alpha$ steps, where $\alpha\in GF(t)$ means moving the element in index~$i$ to index $\varphi^{-1} (\varphi(i)+\alpha)$. 
    Thus, back to the construction of $M_i$, we have that 
    the $j$-th row in $M_i$ is constructed from the $j$-th row of~$M_0$ by rotating it $\varphi(i) \cdot \varphi(j)$ steps, where the multiplication is over $GF(t)$. I.e., the element in row~$r$ and column~$c$ in~$M_0$ will be placed in row~$r$ and column $\varphi^{-1} \bigl( \varphi(c) + \varphi(r)\cdot\varphi(i)\bigr)$ in~$M_i$.

    \medskip
    
    We think of each column of $M_i$ as $t$ elements that will be given as an input to a $t$-comparator. Hence, each $M_i$ describes $t$ such comparators. Overall, these matrices describe  $t^2$ comparators.
    With $t$ comparators for comparing $S_1,\ldots,S_t$ and $t^2$ for the matrices, we end up with $t^2+t = \binom{t^2}{2}/\binom{t}{2}$ comperators, as stated.

    \smallskip

    We now argue that any two elements are compared by the above construction.
    Let $a_i,a_j$ be arbitrary two distinct elements with $0\le i,j < t^2$. Write $i=k_i \cdot t + r_i$ and $j=k_j \cdot t + r_j$ with $k_i,k_j,r_i,r_j \in \{0,1,\ldots, t-1\}$.
    If $k_i=k_j$, then $a_i,a_j$ both belong in $S_{k_i}=S_{k_j}$ and thus are being compared by one of the first $t$ comparators.

    Otherwise, let $c = \varphi^{-1} \bigl( \frac{\varphi(r_i)-\varphi(r_j)} {\varphi(k_j)-\varphi(k_i)}\bigl)$. 
    We argue that $M_c$ has a column that holds both $a_i$ and~$a_j$. %
    Indeed, $a_i$ resides in index $r_i$ in the $k_i$-th row of $M_0$, and similarly, $a_j$ resides in index~$r_j$ of the $k_j$-th row of~$M_0$. 
    By construction, $a_i$ appear in $M_c$ in column number  
    \[
     \varphi^{-1} \bigl(
        \varphi(c) \cdot \varphi(k_i) + \varphi(r_i)
        \bigr).
    \]
    Similarly, $a_j$ will appear in~$M_c$ in column number
    \[
     \varphi^{-1} \bigl(
        \varphi(c) \cdot \varphi(k_j) + \varphi(r_j)
        \bigr).
    \]
     Substituting for~$c$, it is easy to verify these two column numbers are in fact equal, 
     \begin{align*}
         \frac{\varphi(r_i)-\varphi(r_j)} {\varphi(k_j)-\varphi(k_i)} \varphi(k_j)+\varphi(r_j) = 
         \frac{\varphi(r_i)-\varphi(r_j)} {\varphi(k_j)-\varphi(k_i)}\varphi(k_i)+\varphi(r_i)
     \end{align*}
     This equality implies that $a_i$ and $a_j$ belong in the same column (i.e., the same comparator) of~$M_c$ and thus will be compared by our construction.
\end{proof}

Theorems~\ref{thm:t-prime} and~\ref{thm:t-prime-power} prove Theorem~\ref{thm:minimal-sorting-params} part (1).
In hindsight, after establishing the connection between sorting and Steiner system and other equivalent objects such as affine/projective planes and Latin Squares, one can easily see that the constructions in the proofs of Theorems~\ref{thm:t-prime} and~\ref{thm:t-prime-power} are equivalent to the (orthogonal Latin square) constructions of Bose~\cite{Bose38} and Stevens~\cite{Stevens39}.

\section{Simulations: Our algorithm and the state-of-the-art algorithm}
\label{app:simulations}
Let us compare our Algorithm~\ref{alg:rand-general-intro}  to the state-of-the-art  quicksort algorithm with $t$-comparators, developed by Beigel and Gill~\cite{beigel1990sorting}. 
Their algorithm  works essentially as follows: randomly select $t/\log t$ pivot elements and use them to split all the elements into disjoint subsets. Now, recursively sort any subset of size exceeding~$t$.

\begin{figure}[ht]
\begin{subfigure}{.5\textwidth}
  \centering
  \includegraphics[width=.9\linewidth]{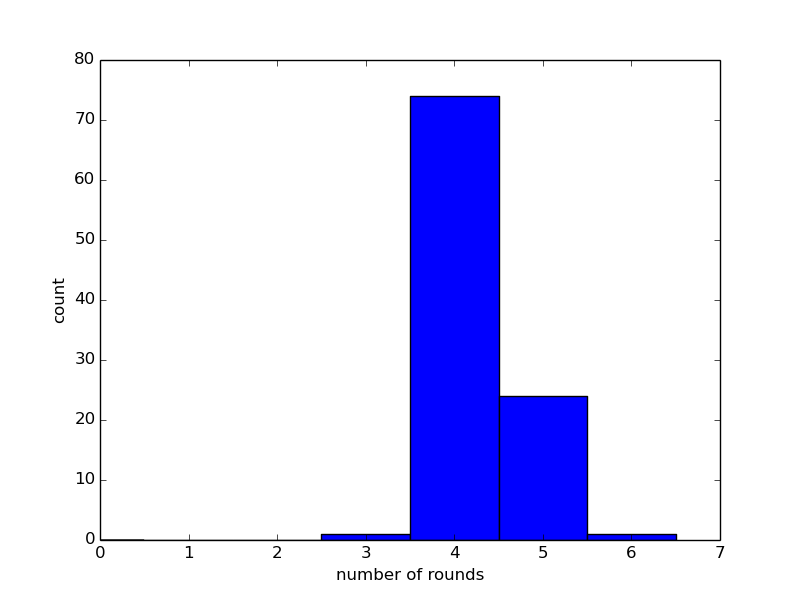}
  \caption{$t=10$, $n=100$}
  \label{fig:sfig1}
\end{subfigure}%
\begin{subfigure}{.5\textwidth}
  \centering
  \includegraphics[width=.9\linewidth]{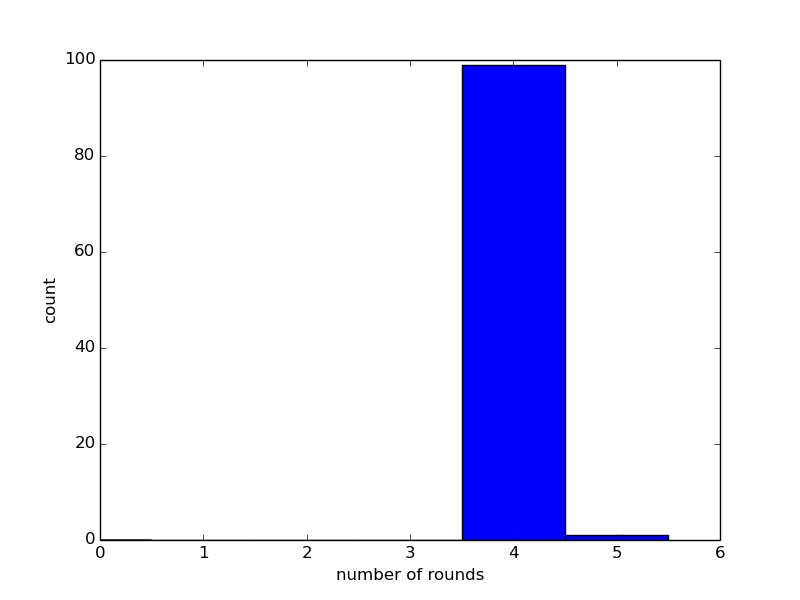}
  \caption{$t=100$, $n=10000$}
  \label{fig:sfig2}
\end{subfigure}
\caption{A histogram of the number of rounds required to the completion of the algorithm in~\cite{beigel1990sorting} for the case of $n=t^2$ with (a) $t=10$ and (b) $t=100$. 
Each histogram is based on $100$ repeated independent  instances. In both $t$ values, the average number of rounds is above~$4$. }
\label{fig:fig}
\end{figure}

The analysis in~\cite{beigel1990sorting} proves that the number of $t$-comparators utilized throughout this algorithm is 
$\frac{n \log n}{t \log t}(1+o(1))$, which is asymptotically optimal. 
The same analysis suggests the algorithm takes $\log_{m/2}(n)$ rounds, where $m=t/(2\log t\ln t)$. (The basis of the $\log$ in~$m$ is not defined in~\cite{beigel1990sorting} and we take it to base~$e$.)
It is easy to verify that this function 
approaches $\frac{\log n}{\log t}$ rounds, for sufficiently large~$t$. In particular, for $n=t^c$, the function approaches $c$ rounds as~$t\to \infty$.
We would like to compare this to our algorithm, that guarantees~$d=2$ rounds, regardless of~$t$.

To be concrete, let us consider  the case of $n=t^2$.
In this case, $\log_{m/2}(t^2)$ tends asymptotically to~2 when $t\to\infty$.
To demonstrate the behavior of the recursive algorithm we have performed Monte-Carlo simulations that measure the number of rounds it takes to sort $n=t^2$ elements, with $t=10$ and $t=100$. 
The results are depicted in Figure~\ref{fig:fig}. 
Our findings indicate that, for these values of~$t$, the average number of rounds for $n=t^2$ is not~$2$, but rather~$4$.

\end{document}